\newtheorem{principle}{Principle}
\newtheorem{proposition}[principle]{Proposition}
\newtheorem{corollary}[principle]{Corollary}
\newtheorem{definition}[principle]{Definition}
\newtheorem{theorem}[principle]{Theorem}
\newtheorem{lemma}[principle]{Lemma}
\newcommand{\Tr}{\text{Tr}}
\begin{document}

\title{Qudit hypergraph states}

\author{F. E. S. Steinhoff} 
\email{steinhofffrank@gmail.com}
\affiliation{Naturwissenschaftlich Technische Fakult\"at, 
Universit\"at Siegen, Walter-Flex-Str. 3, D-57068 Siegen, Germany}
\affiliation{Instituto de F\'isica, Universidade Federal de Goi\'as, 74001-970, Goi\^ania, Goi\'as, Brazil}
\affiliation{Instituto de Engenharia, Universidade Federal de Mato Grosso, 78060-900 V\'arzea Grande, Mato Grosso, Brazil} 
\author{C. Ritz}
\author{N. I. Miklin}
\author{O. G\"uhne}
\affiliation{Naturwissenschaftlich Technische Fakult\"at, Universit\"at Siegen, Walter-Flex-Str. 3, D-57068 Siegen, Germany}

\pacs{03.65.Ud, 03.67.Mn}
%DIF > 03.65.Ta: Foundations of quantum mechanics; measurement theory
%DIF > 03.65.Ud: Entanglement and quantum nonlocality
%DIF > (e.g. EPR paradox, Bell's inequalities, GHZ states, etc.)
%DIF > 42.50.Xa: Optical tests of quantum theory
%DIF > 37.10.Ty: Ion trapping
%DIF > (for optical tests of quantum theory, see 42.50.Xa)
%DIF > (for entanglement production and manipulation, see 03.67.Bg; 
%DIF > for entanglement measures, witnesses etc., see 03.67.Mn; 
%DIF > for entanglement in Bose-Einstein condensates, see 03.75.Gg)
%DIF > 03.67.Mn 	Entanglement measures, witnesses, and other characterizations 
%DIF > (see also 03.65.Ud Entanglement and quantum nonlocality; 
%DIF > 42.50.Dv Quantum state engineering and measurements in quantum optics)

\date{\today}

\begin{abstract} 
We generalize the class of hypergraph states to multipartite systems of 
qudits, by means of constructions based on the $d$-dimensional Pauli 
group and its normalizer. For simple hypergraphs, the different equivalence 
classes under local operations are shown to be governed by a greatest 
common divisor hierarchy. Moreover, the special cases of three qutrits 
and three ququarts is analysed in detail.  
\end{abstract}

\maketitle

\section{Introduction}

The physical properties of multipartite systems are highly relevant 
for practical applications as well as foundational aspects. Despite 
their importance, multipartite systems are in general very complex 
to describe and little analytical knowledge is available in the 
literature. Well-known examples in many-body physics are the various
spin models, which are simple to write down, but where typically not 
all properties can be determined analytically. The entanglement properties 
of multipartite systems are no exception and already for pure states 
it is known that a complete characterization is, in general, not a 
feasible task \cite{multientang, gtreview}. This motivates the adoption 
of simplifications that enable analytical results or at least to infer 
properties in a numerically efficient way. 

One approach in this direction with broad impact in the literature is that 
based on a graph state encoding \cite{graphstates}. Mathematically, a graph
consists of a set of vertices and a set of edges connecting the vertices.
Graph states are a class 
of genuinely multipartite entangled states that are represented by graphs. 
This class contains as a special case the whole class of cluster states, 
which are the key ingredients in paradigms of quantum computing, e.g., the 
one-way quantum computer \cite{1wayqc} and quantum error correction 
\cite{qecc} or for the derivation of Bell inequalities 
\cite{bellgraph}. Interestingly, results and techniques of the mathematical theory of graphs can be translated into the graph state framework: one prominent example is the graph operation known as local complementation. The appeal of graph states comes 
in great part from the so-called stabilizer formalism \cite{qecc}. The stabilizer 
group of a given graph state can be constructed in a simple way from local Pauli operators and is abelian; the stabilizer operators associated to a given 
graph state are then used in a wide range of applications such as quantum error 
correcting codes \cite{qecc}, in the construction of Bell-like theorems \cite{bellgraph}, entanglement witnesses \cite{ewgraphs}, models of topological quantum computing \cite{toriccode} and others. 

Recently, there has been an interest in the generalization of graph states to 
a broader class of states known as \textit{hypergraph states} \cite{hypergraphpapers}. 
In a hypergraph an edge can connect more than two vertices, so hypergraph 
states are associated with many-body interactions beyond the usual two-body 
ones. Interestingly, the mathematical description of hypergraph states is 
still very simple and elegant and  in Ref.~\cite{hyper1} a full classification 
of the local unitary equivalence classes of hypergraphs states up to four qubits 
was obtained. Also, in Refs.~\cite{hyper1, hyper2} Bell and Kochen-Specker 
inequalities have been derived and it has been shown that some hypergraph 
states violate local realism in a way that is exponentially increasing with 
the number of qubits. Finally, recent studies in condensed matter theory showed 
that this class of states occur naturally in physical systems associated with 
topological phases \cite{yoshida}.  Originally, hypergraph states were defined 
as members of an even broader class of states known as locally maximally 
entangleable (LME) states \cite{lme}, which are associated to 
applications such as quantum fingerprinting protocols \cite{fingerprint}. 
Hypergraph states are then known as $\pi$-LME states and display the main 
important features of the general class of LME states.  

Up to now, hypergraph states were defined only in the multi-qubit setting, 
while graph states can be defined in systems with arbitrary dimensions. 
In higher dimensions graph states have many interesting properties not 
present in the two-dimensional setting. For example, there are considerable 
differences between systems where the underlying local dimensions are a prime 
or non-prime \cite{vourdas}. Another difference is the construction of 
Bell-like arguments for higher-dimensional systems \cite{quditbellgraph}.  

In the present work, we extend the definition of hypergraph states 
to multipartite systems of arbitrary dimensions (qudits) and analyse 
their entanglement properties. Especially, we focus on the equivalence 
relations under local unitary (LU) operations or under stochastic local 
operations assisted by classical communication (SLOCC). In particular, 
the possible local inter-conversions between different entangled 
hypergraph states are governed by a greatest-common-divisor hierarchy.  
Note that the whole class of qudit graph states is a special case of 
our formulation. 

The paper is organized as follows: In Section II we start by giving 
a brief review of the concepts and results that are at the basis of 
our formulation. This includes a description of the Pauli and Clifford 
groups in a $d$-dimensional system, as well as a general look on qudit 
graph states. In Section III we introduce the definitions associated with qudit hypergraph 
states. Section IV presents some properties of the stabilizer formalism used for qudit
hypergraph states. Section V introduces  the problem of classifying the SLOCC and LU classes of hypergraph states, first describing the different techniques employed 
and then proving a series of results on this classification.  Finally, we present 
some concrete examples in low dimensional tripartite systems, where already the 
main differences between systems of prime and non-prime dimensions become apparent. 
We reserve to the Appendices the related subjects of a phase-space description and 
local complementation of qudit graphs. 

\section{Background and basic definitions}

We consider an $N$-partite system $\mathcal{H}=\bigotimes_{i=1}^N\mathcal{H}_i$, where the subsystems $\mathcal{H}_i$ have the same dimension $d$. A graph is a pair $G=(V,E)$, where $V$ is the set of vertices and $E$ is a set comprised of $2$-element subsets 
of $V$ called edges. Likewise, a hypergraph is a pair $H=(V,E)$, where $V$ are the vertices and $E$ is a set comprised of subsets of $V$ with arbitrary number 
of elements; a $n$-element $e\in E$ is called a $n$-hyperedge. In some sense, 
a hyperedge is an edge that can connect more than two vertices. A multi-(hyper)graph
is a set where the (hyper)edges are allowed to appear repeated. An example of 
a multi-graph can be found in Fig.~\ref{fig:graph}, while one of a multi-hypergraph can be found in Fig.~\ref{fig:hgraph}. Given two integers $m$ and $n$, their greatest common divisor will be denoted by $gcd(m,n)$. The integers modulo $n$ will be denoted as $\mathbb{Z}_n$.  

\begin{figure}[t]
\begin{center}
\includegraphics[scale=.3]{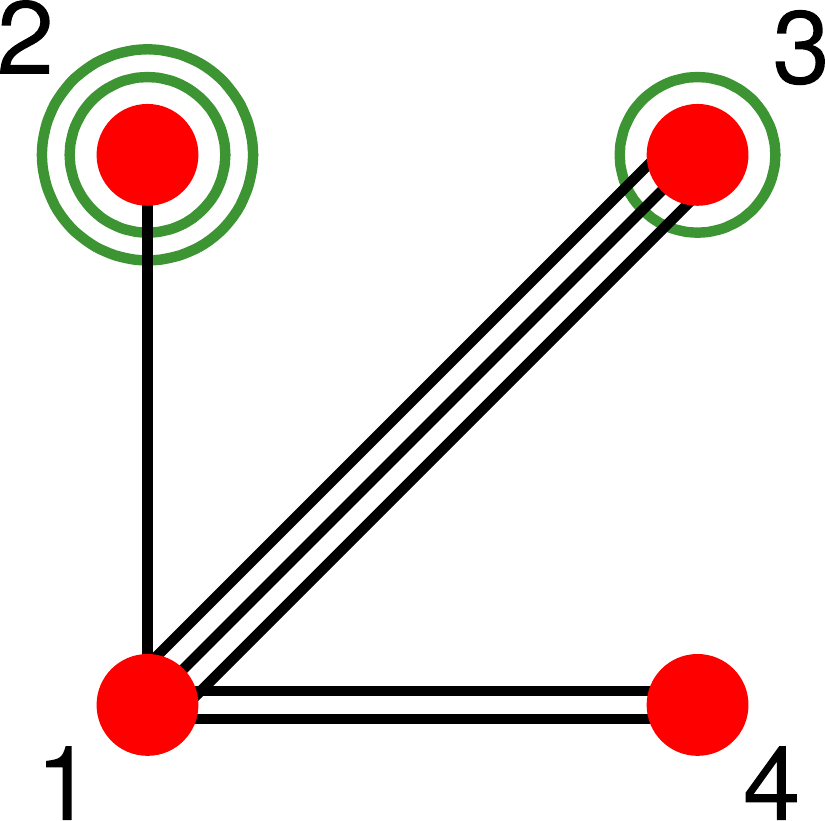} 
\end{center}
\vspace{-.2cm}
\caption{Example of a graph state represented by the multi-graph $G=(V,E)$, where $V=\{1,2,3,4\}$ and $E=\{\{1,2\},\{1,3\},\{1,3\},\{1,3\},\{1,4\},\{1,4\},\{2\},\{2\},\{3\}\}$. The graph state in this case is $|G\rangle=Z_{12}Z_{13}^3Z_{14}^2Z_2^2Z_3|+\rangle^V$.}
\label{fig:graph}
\end{figure}

\begin{figure}[t]
\begin{center}
\includegraphics[scale=.3]{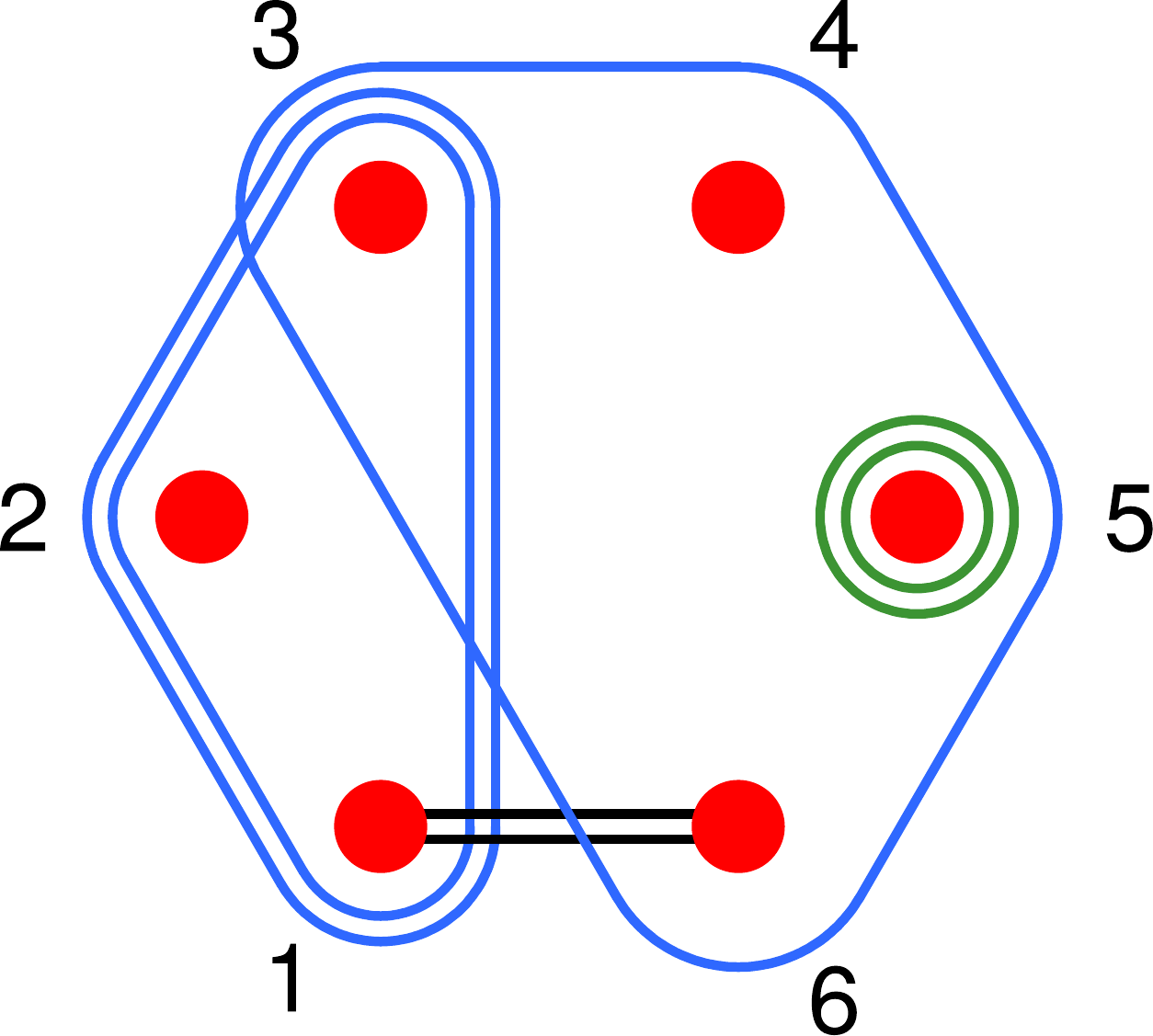} 
\end{center}
\vspace{-.2cm}
\caption{Hypergraph state represented by the multi-hypergraph $H=(V,E)$, with $V=\{1,2,3,4,5,6\}$ and $E=\{\{1,2,3\},\{1,2,3\},\{1,6\},\{1,6\},\{5\},\{5\},\{3,4,5,6\}\}$. The corresponding hypergraph state is then $|H\rangle=Z_{123}^2  Z_{16}^2 Z_5^2 Z_{3456}|+\rangle^V$.}
\label{fig:hgraph}
\end{figure}

\subsection{The Pauli group and its normalizer}

Taking inspiration in the formulation of qubit hypergraph states, we adopt here the description based on the Pauli and Clifford groups in finite dimensions. In a $d$-dimensional system with computational basis $\{|q\rangle\}_{q=0}^{d-1}$,  let us consider the unitary operators given by
\begin{eqnarray}
Z = \sum_{q=0}^{d-1}\omega^q|q\rangle\langle q|; \ X = \sum_{q=0}^{d-1}|q\oplus 1\rangle\langle q|
\end{eqnarray}
with the properties $X^d=Z^d=I$ and $X^mZ^n=\omega^{-mn}Z^nX^m$, where $\omega=e^{2\pi i/d}$ is the $d$-th root of unity and $\oplus$ denotes addition modulo $d$. 
The group generated by these operators is known as the Pauli group and the operators $X^{\alpha}Z^{\beta}$, for $\alpha,\beta\in\mathbb{Z}_d$ are refereed as Pauli operators.   
For $d=2$ these operators reduce to the well-known Pauli matrices for qubits.
In general, these operators enable a phase-space picture for finite-dimensional systems, 
via the relations $Z=e^{\frac{2\pi i}{d}  Q}$, $X=e^{-\frac{2\pi i}{d} P}$, where $Q=\sum_{q=0}^{d-1} q|q\rangle\langle q|$ and $P=\sum_{q=0}^{d-1}q|p_q\rangle\langle p_q|$ are discrete versions of the position and momentum operators; 
here $|p_q\rangle=F|q\rangle$ and $F=d^{-1/2}\sum_{q',q=0}^{d-1} \omega^{q'q}|q'\rangle\langle q|$ is the discrete Fourier transform. 
Thus, $X$ performs displacements in the computational (position) basis, while $Z$ performs displacements in its Fourier transformed (momentum) basis. 

Another set of important operators are the so-called Clifford or 
symplectic operators, defined as
\begin{eqnarray}
S(\xi,0,0) &=& \sum_{q=0}^{d-1}|\xi q\rangle\langle q| ;  \label{clifford} \\
S(1,\xi,0)&=&\sum_{q=0}^{d-1}\omega^{\xi q^2 2^{-1}}|q\rangle\langle q|;\label{clifford2} \\ 
S(1,0,\xi)&=&\sum_{q=0}^{d-1}\omega^{-\xi q^2 2^{-1}}|p_q\rangle\langle p_q|. \label{clifford3}
\end{eqnarray}
These operators are invertible and unitary  whenever the  values of $\xi$ and $d$ 
are  coprime (see proof of Lemma 2) and generate the normalizer of the Pauli group, 
which is usually refereed as the Clifford group.
The interested reader can check a more broad formulation in terms of a discrete phase-space in the Appendix A or in the Ref.~\cite{vourdas}.

\subsection{Qudit graph states}

We briefly review the theory of the so-called qudit graph states, which is well established in the literature \cite{quditgraph}.
The mathematical object used is a multi-graph $G=(V,E)$; 
we call $m_e\in\mathbb{Z}_d$ the multiplicity of the edge 
$e$, i.e., the number of times the edge appears.
Given a multigraph $G=(V,E)$, we associate a quantum state $|G\rangle$ in a $d$-dimensional system in the following way:
\begin{itemize}
 \item To each vertex $i\in V$ we associate a local state $|+\rangle= |p_0\rangle = d^{-1/2}\sum_{q=0}^{d-1} |q\rangle$. 
 \item For each edge $e=\{i,j\}$ and multiplicity $m_e$ we apply the unitary 
 \begin{eqnarray}
  Z^{m_e}_e= \sum_{q=0}^{d-1}|q_i\rangle\langle q_i|\otimes (Z_j^{m_e})^q
 \end{eqnarray}
 on the state $|+\rangle^V=\bigotimes_{i\in V}|+\rangle_i$. Thus, the graph state is defined as
 \begin{eqnarray}
  |G\rangle = \prod_{e\in E} Z^{m_e}_e |+\rangle^V.
 \end{eqnarray}
\end{itemize}
We allow among the edges $e\in E$ the presence of ``loops", i.e., 
an edge that contains only a single vertex. A loop of multiplicity 
$m$ on vertex $k$ means here that a local gate $(Z_k)^m$ is applied 
to the graph state. An example of a qudit graph state in a system 
with dimension $d>3$ is shown in Fig.~\ref{fig:graph}.

An equivalent way of defining a qudit graph state is via the stabilizer formalism \cite{quditgraph}. 
Given a multi-graph $G=(V,E)$, define for each vertex $i\in V$ the operator $K_i=X_i\prod_{e\in E}Z_{e\setminus\{i\}}$.  
The set $K_i$ generates an abelian group known as the stabilizer. 
The unique $+1$ common eigenstate of these operators is precisely the state $|G\rangle$ associated to the multi-graph $G$. Moreover, the set of common eigenstates of these operators form a basis of the global state space, the so-called graph state basis.  

The local action of the generalized Pauli group on a graph state is easy to picture and clearly preserves the graph state structure. As already said, the action of $Z^m_l$ corresponds to a loop of multiplicity $m$ on the qubit $l$, 
while the action of $X^m_l$ corresponds to loops of multiplicity $m$ on the qubits in the neighbourhood of the qubit $m$; 
this last observation is a corollary of Lemma 1.  

The action of the local Clifford group is richer and enables the conversion between different multi-graphs in a simple fashion. 
For prime dimensions, the action of the gate $S_k(\xi,0,0)$ enables the conversion between edges of different multiplicities, while the gate $S_k(1,-1,0)$ is associated to the operation known as local complementation - see Appendix B. Moreover, in non-prime dimensions the possible conversions between edges are governed by a greatest common divisor hierarchy, as show in more generality ahead - see Proposition 2 and Theorem 4.

\section{Qudit hypergraph states}

We introduce now the class of hypergraph states in a system with underlying finite dimension $d$. Before proceeding, we need first the concept of a controlled operation on a multipartite system.  
From a given local operation $M$ one can define a controlled operation $M_{ij}$ between qudits $i$ and $j$ as 
\begin{eqnarray}
M_{ij} = \sum_{q=0}^{d-1}|q_i\rangle\langle q_i|\otimes M^q_j   
\end{eqnarray}
Likewise, a controlled operation between three qudits $i$, $j$ and $k$ is defined recursively as 
\begin{eqnarray}
M_{ijk}= \sum_{q=0}^{d-1}|q_i \rangle\langle q_i |\otimes M^q_{jk} 
\end{eqnarray}
and in general the controlled operation between $n$ qudits labeled by $\mathcal{I}=\{i_1i_2\ldots i_n\}$ is given by
\begin{eqnarray}
M_{\mathcal{I}}=M_{i_1i_2\ldots i_n}= \sum_{q=0}^{d-1}|q_{i_1} \rangle\langle q_{i_1}|\otimes M^q_{i_2\ldots i_n} 
\end{eqnarray}
A prominent example is the CNOT operation, which is simply the bipartite controlled operation generated by the $X$ gate - $CNOT=\sum_q|q\rangle\langle q|\otimes X^q$. Although our formulation can be done in terms of this gate and its multipartite versions, it is preferable to use an equivalent formulation in terms of controlled-phase gates $Z_{\mathcal{I}}$, since these gates are mutually commuting and thus are easy to 
handle. Explicitly, the controlled phase gate on $n$ particles is given by
\begin{align}
Z_e &= \sum_{q_1=0}^{d} \dots  \!\!\! \sum_{q_{n-1}=0}^{d}
\ket{ q_1  \dots q_{n-1} }\bra{ q_1  \dots q_{n-1} }
Z^{q_1 \cdot \ldots \cdot q_{n-1}}
\nonumber
\\
&=\sum_{q_1=0}^{d} \dots  \! \sum_{q_{n}=0}^{d}
\ket{ q_1  \dots q_{n} }\bra{ q_1  \dots q_{n} }
\omega^{q_1 \cdot \ldots \cdot q_{n}}
\label{phasegate}
\end{align}

The mathematical object used here to represent a given state 
is a multi-hypergraph $G=(V,E)$;  as usual, we call $m_e\in\mathbb{Z}_d$ 
the multiplicity of the hyperedge $e$, i.e., the number of times the 
hyperedge appears. Given a multi-hypergraph $H=(V,E)$, we associate a 
quantum state $|H\rangle$ in a $d$-dimensional system in the following way:
\begin{itemize}
 \item To each vertex $i\in V$ we associate a local state $|+\rangle=d^{-1/2}\sum_{q=0}^{d-1} |q\rangle$. 
 \item For each hyperedge $e\in E$ with multiplicity $m_e$ we apply the controlled-unitary $Z^{m_e}_e$ on the state $|+\rangle^V=\bigotimes_{i\in V}|+\rangle_i$. 
 Thus, the hypergraph state is defined as
 \begin{eqnarray}
  |H\rangle = \prod_e Z^{m_e}_e |+\rangle^V.
 \end{eqnarray} 
\end{itemize}
We allow among the hyperedges $e\in E$ the presence of ``loops", i.e., an edge that contains only a single vertex. Also empty edges are allowed, they correspond to 
a global sign.
A loop of multiplicity $m$ on vertex $k$ means here that a local gate $(Z_k)^m$ is applied to the graph state. An example of hypergraph state is illustrated in 
Fig.~\ref{fig:hgraph}.

Equivalently, one can define a hypergraph state as the
unique $+1$ eigenstate of a maximal set of commuting 
stabilizer operators 
$K_i=X_i\prod_{e\in E}Z^{\dagger}_{e\setminus\{i\}}$; 
we explain this approach after the proof of Lemma 1 below. 

An important special class of hypergraph states are 
the so-called $n$-elementary hypergraph states, which are those constituted of a single hyperedge $e$ between $n$ qudits, i.e., the state has the simple form $|H\rangle=Z^{m_e}_e|+\rangle^V$. For this subclass, the main entanglement properties depend on the multiplicity $m_e$ of the hyperedge, as shown in the next sections. 

For completeness, we cite alternative formulations of hypergraph states that are potentially useful in other scenarios. First, we notice that the multiplicities of the hyperedges can be encoded as well in the adjacency tensor $\Gamma$ of the multi-hypergraph $H$, defined by $\Gamma_{i_1i_2\ldots i_n}=m_{\{i_1,i_2,\ldots,i_n\}}$, where $\{i_1,i_2,\ldots,i_n\}\in E$. 
For graph states, for example, the $\Gamma$ tensor is a matrix, the well-known adjacency matrix of the theory of graphs. Many local quantum operations, especially those coming from the local Clifford group, are elegantly described as  matrix operations over the adjacency matrix of the multi-graph $G$. 

One can also work in the Schr\"odinger picture: the form of the state in the computational basis is given by:
\begin{eqnarray}
|H\rangle=\sum_{\mathbf{q}=\mathbf{0}}^{\mathbf{d-1}}\omega^{f(\mathbf{q})}|\mathbf{q}\rangle,
\end{eqnarray}
where $\mathbf{q}\equiv (q_1,q_2,\dots,q_n)$ 
and $f$ is a function from $\mathbb{Z}_d$ to $\mathbb{Z}_d$ defined by 
$f(\mathbf{q})=\bigoplus_{\{q_1,\ldots,q_k\}\in E} \bigwedge_k q_k$. For qubits, for example, the function $f$ is a Boolean function and this encoding is behind applications such as Deutsch-Jozsa and Grover's algorithms \cite{hypergraphpapers}.

\section{Properties of hypergraph states and the stabilizer formalism}

In the following sections, we derive some properties of hypergraph states, the 
controlled $Z$ operation on many qudits and the stabilizer formalism. These tools
will later be used for the SLOCC and LU classification.

\subsection{Local action of Pauli and Clifford groups}

We now consider the effect of unitaries from the Pauli and Clifford groups 
on a hypergraph state. 
First, we need some simple relations:
\begin{lemma}
\label{lemmacommutation}
The following relations hold:
\begin{eqnarray}
X^{\dagger}_kZ_{\mathcal{I}}X_k=Z_{\mathcal{I}\setminus \{k\}}Z_{\mathcal{I}} \label{relation1}
\end{eqnarray}
\begin{eqnarray}
Z_{\mathcal{I}}X_kZ^{\dagger}_{\mathcal{I}}=X_kZ_{\mathcal{I}\setminus \{k\}} \label{relation2}
\end{eqnarray}

\end{lemma}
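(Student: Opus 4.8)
The plan is to verify both identities by letting each operator string act on an arbitrary computational basis vector $|\mathbf{q}\rangle=|q_1\dots q_N\rangle$ and using the diagonal (Schr\"odinger-picture) form of the controlled-phase gate recorded in Eq.~(\ref{phasegate}), namely $Z_{\mathcal{I}}=\sum_{\mathbf{q}}\omega^{\prod_{i\in\mathcal{I}}q_i}\,|\mathbf{q}\rangle\langle\mathbf{q}|$. Writing $Z_{\mathcal{I}}$ in this form makes manifest that it is symmetric under relabelling of the qudits and that any two such gates commute, both of which will be convenient. Throughout I assume $k\in\mathcal{I}$, since otherwise $X_k$ commutes with $Z_{\mathcal{I}}$ and there is nothing to prove.

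First I would establish Eq.~(\ref{relation1}). Acting on $|\mathbf{q}\rangle$ with the right-most operator, $X_k$ replaces $q_k$ by $q_k\oplus 1$; then $Z_{\mathcal{I}}$ multiplies by $\omega^{(q_k\oplus 1)\prod_{i\in\mathcal{I}\setminus\{k\}}q_i}$; finally $X_k^{\dagger}$ restores the original label. Since $(q_k\oplus 1)\prod_{i\in\mathcal{I}\setminus\{k\}}q_i\equiv \prod_{i\in\mathcal{I}}q_i+\prod_{i\in\mathcal{I}\setminus\{k\}}q_i\pmod d$, the accumulated phase factors as $\omega^{\prod_{i\in\mathcal{I}}q_i}\,\omega^{\prod_{i\in\mathcal{I}\setminus\{k\}}q_i}$, which is exactly the eigenvalue of $Z_{\mathcal{I}}Z_{\mathcal{I}\setminus\{k\}}=Z_{\mathcal{I}\setminus\{k\}}Z_{\mathcal{I}}$ on $|\mathbf{q}\rangle$. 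Completeness of the computational basis then gives Eq.~(\ref{relation1}). The degenerate cases $|\mathcal{I}|=1$ and $\mathcal{I}\setminus\{k\}=\emptyset$ are included once one reads $Z_{\emptyset}=\omega I$, consistent with the stated convention that an empty hyperedge contributes a global phase.

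Equation~(\ref{relation2}) is then a one-line rearrangement: multiplying $X_k^{\dagger}Z_{\mathcal{I}}X_k=Z_{\mathcal{I}\setminus\{k\}}Z_{\mathcal{I}}$ on the left by $X_k$ yields $Z_{\mathcal{I}}X_k=X_k Z_{\mathcal{I}\setminus\{k\}}Z_{\mathcal{I}}$, and multiplying on the right by $Z_{\mathcal{I}}^{\dagger}$ gives $Z_{\mathcal{I}}X_k Z_{\mathcal{I}}^{\dagger}=X_k Z_{\mathcal{I}\setminus\{k\}}$, using $Z_{\mathcal{I}}Z_{\mathcal{I}}^{\dagger}=I$.

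There is essentially no genuine obstacle here; the only points requiring a little care are keeping track of which variable the shift acts on, the modular identity $(q_k\oplus 1)\prod=\prod+\prod'$, and fixing the empty-edge convention so the boundary cases of the recursive definition of $Z_{\mathcal{I}}$ are consistent. As an alternative one could run an induction on $|\mathcal{I}|$ from the recursion $Z_{\mathcal{I}}=\sum_q |q_{i_1}\rangle\langle q_{i_1}|\otimes Z^{q}_{i_2\dots i_n}$ together with $Z^{q+1}_{\mathcal{J}}=Z^{q}_{\mathcal{J}}Z_{\mathcal{J}}$ and the permutation symmetry of $Z_{\mathcal{I}}$ to reduce to the case $k=i_1$, but the direct basis computation is the shortest route.
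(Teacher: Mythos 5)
Your proof is correct, and it takes a genuinely different route from the paper's. The paper proves Eq.~(\ref{relation1}) by induction on the cardinality of $\mathcal{I}$, starting from the single-qudit relation $X^{\dagger}ZX=\omega Z$ for the base case $|\mathcal{I}|=2$ and using the recursive controlled-gate definition $Z_{\mathcal{I}'}=\sum_q \ketbra{q_j}{q_j}\otimes Z_{\mathcal{I}}^q$ to push the commutation through one extra control at a time; Eq.~(\ref{relation2}) is then left implicit as an algebraic rearrangement. You instead work directly with the diagonal Schr\"odinger-picture form of Eq.~(\ref{phasegate}), act on a computational basis vector, and reduce everything to the modular identity $(q_k\oplus 1)\prod_{i\in\mathcal{I}\setminus\{k\}}q_i\equiv\prod_{i\in\mathcal{I}}q_i+\prod_{i\in\mathcal{I}\setminus\{k\}}q_i \pmod d$, after which Eq.~(\ref{relation2}) follows by multiplying on the left by $X_k$ and on the right by $Z_{\mathcal{I}}^{\dagger}$. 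Your computation is shorter, makes the permutation symmetry and mutual commutativity of the $Z_e$ gates manifest (so there is no need to worry about which qudit plays the role of control), and explicitly pins down the boundary cases via the convention $Z_{\emptyset}=\omega I$ and the tacit assumption $k\in\mathcal{I}$ — points the paper glosses over. The paper's induction, on the other hand, stays entirely within the operator algebra of the recursive definition, which is the form actually used later in the stabilizer construction. Both arguments are complete and valid.
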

\begin{proof} We prove by induction on the cardinality $n$ of the index set $\mathcal{I}$, i.e., on the number of qudits. For $n=2$, remembering the relation $X^{\dagger}ZX=\omega Z$, we see that
\begin{eqnarray*}
X^{\dagger}_kZ_{jk}X_k &=& X^{\dagger}_k\left(\sum_{q=0}^{d-1}|q_j\rangle\langle q_j|\otimes Z^q_k\right)X_k\\
&=&\sum_{q=0}^{d-1}|q_j\rangle\langle q_j|\otimes (X^{\dagger}_kZ^q_kX_k)\\
&=&\sum_{q=0}^{d-1}|q_j\rangle\langle q_j|\otimes (\omega^qZ^q_k)\\
&=&\left(\sum_{q=0}^{d-1}\omega^q|q_j\rangle\langle q_j|\otimes I_k\right)\left(\sum_{q=0}^{d-1}|q_j\rangle\langle q_j|\otimes Z^q_k\right)\\
&=& Z_jZ_{jk}
\end{eqnarray*}
and the relations are valid. Now, let us consider the set $\mathcal{I}$ having cardinality $n$ and the set $\mathcal{I}'=\mathcal{I}\cup\{j\}$, with $j\neq k$. Then we have
\begin{eqnarray*}
X^{\dagger}_kZ_{\mathcal{I}'}X_k &=& X^{\dagger}_k\left(\sum_{q=0}^{d-1}|q_j\rangle\langle q_j|\otimes Z^q_{\mathcal{I}}\right)X_k \\
&=&\sum_{q=0}^{d-1}|q_j\rangle\langle q_j|\otimes (X^{\dagger}_kZ_{\mathcal{I}}^qX_k)\\
&=&\sum_{q=0}^{d-1}|q_j\rangle\langle q_j|\otimes (Z_{\mathcal{I}\setminus \{k\}}Z_{\mathcal{I}})^q\\
&=& \left(\sum_{q=0}^{d-1}|q_j\rangle\langle q_j|\otimes Z_{\mathcal{I}\setminus \{k\}}^q\right)\left(\sum_{q=0}^{d-1}|q_j\rangle\langle q_j|\otimes Z_{\mathcal{I}}^q\right)\\
&=& Z_{\mathcal{I}'\setminus\{k\}}Z_{\mathcal{I}'}
\end{eqnarray*}
i.e., if the relations are valid for $n$, then they are also valid for $n+1$.
\end{proof}

The effect of applying the gate $X^{\dagger}_k$ on an elementary hypergraph 
is then to create a hyperedge of same multiplicity on the neighbourhood of 
the qudit $k$:
\begin{eqnarray*}
X^{\dagger}_k|H\rangle = X^{\dagger}_kZ^{m_e}_e|+\rangle^V = Z^{m_e}_{e\setminus \{k\}}Z^{m_e}_e X^{\dagger}_k|+\rangle^V= 
Z^{m_e}_{e\setminus \{k\}}|H\rangle
\end{eqnarray*}
It is important to notice that the hyperedges induced on the neighbourhood 
of $k$ by repeated applications of $X_k^{\dagger}$ have multiplicities that 
are divisible by 
$m_e$. Depending on the primality of the underlying dimension $d$, there 
are restricted possibilities of inducing hyperedges on neighbouring qudits via this procedure, a difference in relation to the qubit case. 

Moreover, since the local $Z_k$ gate commutes with any $Z_{\mathcal{I}}$, it can 
always be locally removed by applying $Z_k^{\dagger}$. 
As explained previously, we adopt the convention of representing any $Z_k$ acting on a hypergraph state as a loop - a $1$-hyperedge - around the vertex $k$; 
higher potencies $(Z_k)^m$ are represented by $m$ loops around $k$. 
Thus, the action of the local Pauli group constituted of the unitaries $X_k^mZ_k^n$ 
is to create $n$ loops on the vertex $k$ and $m$-hyperedges on the neighbourhood of $k$.    

Let us turn now to the action of the local Clifford group. The local
gate from Eq.~(\ref{clifford}) performs permutations on the computational 
basis, via the mapping $S(\xi^{-1},0,0)ZS^{\dagger}(\xi^{-1},0,0)=Z^{\xi}$, 
where $Z$ acts on a single particle. Based on this, we can derive the action
on multiparticle $Z_e$ gates, corresponding to a hyperedge $e$. It turns out 
that for $d$ prime, it is always possible to convert a $k$-hyperedge of multiplicity 
$m$ ($m \neq 0 \mod d$) to another $k$-hyperedge of multiplicity $m'$ ($m' \neq 0 \mod d$). For non-prime 
$d$, the $k$-hyperedges that are connectable via permutations are those whose multiplicities have an equal greatest common divisor with the dimension $d$. In 
detail, we can formulate

\begin{proposition}
\label{propclifford}
Let $k,k'\in \mathbb{Z}_d$ be such that $gcd(d,k)=gcd(d,k')=g$. Then there exists 
a Clifford operator $S$ such that $S (Z_e)^k S^{\dagger}=(Z_e)^{k'}$.
\end{proposition}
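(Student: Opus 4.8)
The plan is to realise the conversion with a single one-particle permutation gate $S_j(\xi,0,0)$ applied to one vertex $j\in e$, reducing everything to an elementary number-theoretic fact about units. First I would record how such a gate conjugates a controlled-phase operator. Take $\xi$ coprime to $d$, so that $S_j(\xi,0,0)$ is a well-defined Clifford unitary (coprimality is exactly what makes it invertible, as noted when these operators were introduced). Writing $Z_e=\sum_{\mathbf q}\omega^{\prod_{i\in e}q_i}\,\ket{\mathbf q}\bra{\mathbf q}$ with $\mathbf q=(q_i)_{i\in e}$, conjugation by $S_j(\xi,0,0)$ sends $\ket{q_j}\bra{q_j}$ to $\ket{\xi q_j}\bra{\xi q_j}$; relabelling the summation index $q_j\mapsto\xi^{-1}q_j$ then simply replaces the exponent $\prod_{i\in e}q_i$ by $\xi^{-1}\prod_{i\in e}q_i$. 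Hence $S_j(\xi,0,0)\,(Z_e)^{k}\,S_j^{\dagger}(\xi,0,0)=(Z_e)^{\xi^{-1}k}$, and since $Z^d=I$ forces $(Z_e)^d=I$ the exponent only matters modulo $d$. (This is just the multipartite version of the single-particle identity $S(\xi^{-1},0,0)ZS^{\dagger}(\xi^{-1},0,0)=Z^{\xi}$ already used for graph states.) Consequently it suffices to produce a unit $\xi\in\mathbb{Z}_d^{*}$ with $\xi^{-1}k\equiv k'$, equivalently $\xi k'\equiv k\pmod d$.

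Next I would solve this congruence using the hypothesis $\gcd(d,k)=\gcd(d,k')=g$. Put $d=gn$, $k=ga$ and $k'=gb$, so that necessarily $\gcd(a,n)=\gcd(b,n)=1$. The equation $\xi k'\equiv k\pmod d$ becomes $g(\xi b-a)\equiv 0\pmod{gn}$, i.e.\ $\xi b\equiv a\pmod n$; since $b$ is invertible modulo $n$ this fixes $\xi\equiv ab^{-1}\pmod n$, and the residue $r:=ab^{-1}$ is itself a unit of $\mathbb{Z}_n$. So the whole problem reduces to the following claim: given a divisor $n\mid d$ and a unit $r$ of $\mathbb{Z}_n$, there is a lift $\xi\equiv r\pmod n$ that is a unit of $\mathbb{Z}_d$.

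I expect this lifting claim to be the only real obstacle, and I would settle it with the Chinese Remainder Theorem. For a prime $p\mid d$ that also divides $n$, $p$ does not divide $r$ and hence divides no lift $\xi$. For a prime $p\mid d$ not dividing $n$, the residue of $\xi$ modulo the $p$-part of $d$ can be chosen (by CRT) independently of its residue modulo $n$, and we simply take it $\not\equiv 0\pmod p$. Assembling these choices through CRT yields the desired $\xi$, and then $S:=S_j(\xi,0,0)$ for any fixed $j\in e$ — a local Clifford operator, acting only on vertex $j$ — satisfies $S(Z_e)^{k}S^{\dagger}=(Z_e)^{k'}$, proving the proposition. (The degenerate case $g=d$, where $k\equiv k'\equiv 0$ and both operators equal the identity, is trivial, and is in any case subsumed above since then $n=1$.)
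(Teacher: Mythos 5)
Your proof is correct and follows essentially the same route as the paper's first proof in Appendix C: realise the conversion by a single one-qudit permutation gate $S_j(\xi,0,0)$ and reduce the problem to finding a unit $\xi$ of $\mathbb{Z}_d$ with $\xi k'\equiv k \pmod d$, i.e.\ to lifting the unit $ab^{-1}$ of $\mathbb{Z}_{d/g}$ to a unit of $\mathbb{Z}_d$. Your CRT treatment of that lifting step is actually more careful than the paper's (which adds $d/g$ to $\xi'$ only once, a step that can fail when $g$ has several prime factors absent from $d/g$, e.g.\ $d=30$, $g=6$, $\xi'=3$ gives $\xi=8$, not coprime to $30$); otherwise the two arguments coincide.
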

\begin{proof}
One can find the proof of this proposition in Ref.~\cite{cliff1}. 
\end{proof}

For a detailed discussion on the Clifford group see
Ref.~\cite{cliff2}.  For completeness, we provide in Appendix C
alternative proofs of this Proposition. The local Clifford gates in Eq.~(\ref{clifford2}) and Eq.~(\ref{clifford3}) 
are associated with the local complementation of qudit graphs, which is 
explained in detail in Appendix B. 

\subsection{Stabilizer formalism}

From relation (\ref{relation2}) we can construct the stabilizer
operator on a vertex $i$:
\begin{eqnarray}
|H\rangle &=& \prod_{e\in E}Z_e|+\rangle^V = \prod_{e\in E}Z_eX_iZ^{\dagger}_eZ_e|+\rangle^V \\ &=& X_i\prod_{e\in E^*}Z_{e\setminus\{i\}}|H\rangle=K_i|H\rangle
\end{eqnarray}
with $K_i=X_i\prod_{e\in E*}Z_{e\setminus\{i\}}$ and where $E^*$ denotes all edges containing $i$. Hence, the operators $K_i$ stabilize the hypergraph state $|H\rangle$. An equivalent way is expressing the stabilizer operator in the compact way $K_i=X_iZ_{N_i}$, where $Z_{N_i}\equiv\prod_{j\in N_i}Z_j$, where $N_i$ is the neighbourhood of $i$. Moreover, these operators are mutually commuting:
\begin{eqnarray}
K_iK_j &=& (X_i\prod_{e\in E}Z_{e\setminus\{i\}})(X_j\prod_{e\in E}Z_{e\setminus\{j\}}) \\
&=& (\prod_{e\in E}Z_eX_i\prod_{e\in E}Z^{\dagger}_e)(\prod_{e\in E}Z_eX_j\prod_{e\in E}Z^{\dagger}_e) \\
&=& \prod_{e\in E}Z_eX_iX_j\prod_{e\in E}Z^{\dagger}_e \\
&=& \prod_{e\in E}Z_eX_jX_i\prod_{e\in E}Z^{\dagger}_e = K_jK_i
\end{eqnarray}
Indeed, these operators generate a maximal abelian group on the number $n$ of qudits. The group properties of closure and associativity are straightforward, while the identity element comes from $K_i^{d}=I$ and the inverse of $K_i$ being simply $K_i^{\dagger}$. Each operator in this group has eigenvalues $1, \omega, \omega^2, \ldots,\omega^{d-1}$ and their $d^n$ common eigenvectors form an orthonormal basis of the total Hilbert space, the hypergraph basis with elements given by
\begin{eqnarray}
|H_{k_1,k_2,\ldots,k_N}\rangle=Z_1^{-k_1}Z_2^{-k_2}\ldots Z_N^{-k_N}|H\rangle
\end{eqnarray}
where the $k_i$s assume values in $\mathbb{Z}_d$. Notice also that
\begin{eqnarray}
|H\rangle\langle H|=\frac{1}{d^N}\prod_{i\in V}(I+K_i+K^2_i+\ldots+K^{d-1}_i)
\end{eqnarray}
\newline
In the qubit case, these non-local stabilizers are observables and were used for the development of novel non-contextuality and locality inequalities \cite{hyper1, hyper2}. For $d>2$, these operators are no longer self-adjoint in general, but we believe techniques similar to Ref.~\cite{cps} could be used to extend the results of the qubit 
case to arbitrary dimensions.  

\subsection{Local measurements in $Z$ basis and ranks of the reduced states}

It is possible to give a graphical description of the measurement of a non-degenerate observable $M=\sum_q m_q|q\rangle\langle q|$ on a hypergraph state in terms of hypergraph operations. 
Obtaining outcome $m_q$ when measuring $M$ on the qudit $k$ of the hypergraph state $|H\rangle=\prod_e Z_e^{m_e}|+\rangle^V$ amounts to performing the projection 
$P^{(k)}_q|H\rangle$, where $P^{(k)}_q=|q_k\rangle\langle q_k|$. 
The state after the measurement is then $|q_k\rangle\otimes|H'\rangle$, where $|H'\rangle = \prod_{e}Z_{e\setminus\{k\}}^{m_e \cdot q}|+\rangle^{V\setminus\{k\}}$.

\begin{figure}
\begin{center}
\includegraphics[width=6cm]{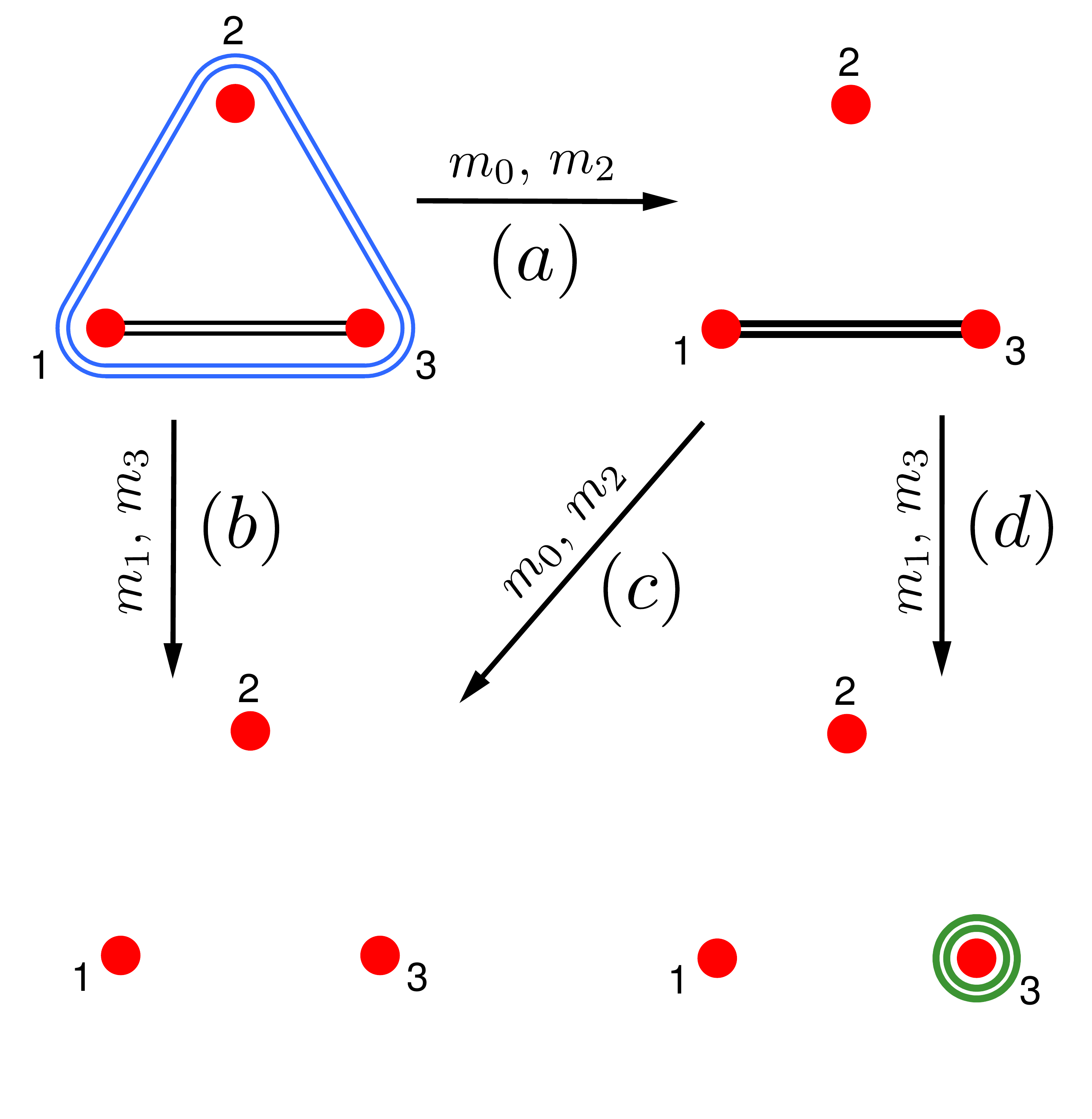} 
\end{center}
\vspace{-.5cm}
\caption{Measurements in $Z$-basis on the hypergraph state $|H\rangle=Z_{123}^2Z_{13}^2|+\rangle^V$ ($d=4$). Measurement on qudit $2$ results in (a) for outcomes $m_0$ or $m_2$ and (b) for outcomes $m_1$ or $m_3$. Now, measuring on qudit $1$ results in (c) for outcomes $m_0$ or $m_2$ and (d) for outcomes $m_1$ or $m_3$. It is clear that any reduced state has rank $2$.}
\label{fig:scheme}
\end{figure}

Moreover, the calculation of ranks of reduced states can be done graphically. 
We show now a lemma that will be important in future derivations:
\begin{lemma}
\label{lemmarank}
For a $d$-dimensional system, the rank of any reduced state of an $n$-elementary 
hypergraph state is $d/gcd(d,m_e)$ where $m_e$ is the multiplicity of the 
hyperedge.  
\end{lemma}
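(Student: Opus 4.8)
The plan is to reduce the whole computation to the rank of a single explicit $d\times d$ matrix, uniformly in the bipartition. Since any vertices not contained in the unique hyperedge $e$ factor out as a local $|+\rangle$, we may assume $V=e$, $|e|=n$, and fix an arbitrary bipartition $A\,|\,B$ of $V$ with $A$ nonempty and proper; the goal is to show $\mathrm{rank}(\rho_A)=d/\gcd(d,m_e)$. Writing the elementary hypergraph state in the computational basis as $|H\rangle=d^{-n/2}\sum_{\mathbf q}\omega^{\,m_e\,q_1\cdots q_n}|\mathbf q\rangle$, the Schmidt rank of $|H\rangle$ across $A\,|\,B$ equals $\mathrm{rank}(\rho_A)$ and is the rank of the coefficient matrix $C$ with rows indexed by $\mathbf q_A$ and columns by $\mathbf q_B$, namely $C_{\mathbf q_A,\mathbf q_B}\propto \omega^{\,m_e\,(\prod_{i\in A}q_i)(\prod_{j\in B}q_j)}$.

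First I would observe that this entry depends on $\mathbf q_A$ only through $a:=\prod_{i\in A}q_i\bmod d$ and on $\mathbf q_B$ only through $b:=\prod_{j\in B}q_j\bmod d$, and that each of $a$ and $b$ ranges over all of $\mathbb Z_d$ (set all but one factor equal to $1$). Hence $C$ is obtained from the $d\times d$ matrix $M_{a,b}=\omega^{\,m_e\,ab}$, $a,b\in\mathbb Z_d$, by repeating rows and columns, so $\mathrm{rank}(\rho_A)=\mathrm{rank}(M)$.

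Next I would analyse $M$. Put $g=\gcd(d,m_e)$ and $d'=d/g$, and note that $\zeta:=\omega^{m_e}$ is a primitive $d'$-th root of unity, so $M_{a,b}=\zeta^{ab}$ depends only on $a\bmod d'$ and $b\bmod d'$. Thus $M$ is, up to a further repetition of rows and columns, the $d'\times d'$ matrix $F_{\bar a,\bar b}=\zeta^{\bar a\bar b}=(\zeta^{\bar a})^{\bar b}$, a Vandermonde matrix in the $d'$ distinct nodes $\{\zeta^{\bar a}\}_{\bar a\in\mathbb Z_{d'}}$, hence invertible. Since repeating rows and columns does not change the rank, $\mathrm{rank}(M)=\mathrm{rank}(F)=d'=d/\gcd(d,m_e)$, which establishes the claim for every bipartition.

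The only step needing real care — the crux — is the reduction: checking that the products $\prod_{i\in A}q_i$ and $\prod_{j\in B}q_j$ do exhaust $\mathbb Z_d$, so that $M$ (and in turn $F$) genuinely occurs as a row-and-column duplicate inside $C$; the rest is the standard fact that row/column repetition preserves rank, together with invertibility of a Vandermonde (discrete-Fourier) matrix. As an alternative one can run the argument through the measurement rule stated just above: measuring all qudits of $B$ in the $Z$ basis leaves on $A$ the state $Z_A^{\,m_e\prod_{j\in B}q_j}|+\rangle^A$, and $\mathrm{rank}(\rho_A)$ equals the dimension of the span of these vectors as the outcomes vary, which is $\dim\mathrm{span}\{Z_A^{\,jg}|+\rangle^A:j\in\mathbb Z_{d'}\}=d/\gcd(d,m_e)$ by the same Fourier-matrix nondegeneracy. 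I would nevertheless keep the coefficient-matrix version as the primary proof, since it treats all bipartitions at once.
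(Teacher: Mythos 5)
Your proof is correct, but it takes a genuinely different route from the paper's. The paper traces out one subsystem explicitly, writes the reduced state as a uniform mixture of the vectors $(Z_{e\setminus\{1\}})^{qm_e}|+\rangle^{V\setminus\{1\}}$, counts that only $d/gcd(d,m_e)$ of these are distinct, and then proves their linear independence by induction on the number of remaining vertices, using the linear independence of the diagonal Pauli operators $Z^{qm_e}$; the case of a general bipartition is then only asserted to follow ``by the same arguments.'' You instead compute the Schmidt rank directly from the coefficient matrix $C_{\mathbf q_A,\mathbf q_B}\propto\omega^{m_e(\prod_{i\in A}q_i)(\prod_{j\in B}q_j)}$, observing that its entries factor through the two products (each of which surjects onto $\mathbb Z_d$), so that the rank equals that of the $d\times d$ matrix $\omega^{m_e ab}$ and hence of a $d'\times d'$ Vandermonde (discrete Fourier) matrix with $d'=d/gcd(d,m_e)$. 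The one step you flag as delicate --- that both products exhaust $\mathbb Z_d$ --- is indeed the crux and you verify it correctly. What your version buys is uniformity: every bipartition is handled at once by a single rank computation, closing the gap the paper leaves open; what the paper's version buys is that the intermediate objects $|H^{(1)}_q\rangle$ are themselves elementary hypergraph states, which ties the rank formula to the $Z$-measurement rule and is reused later in the discussion after Theorem 2.
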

\begin{proof} An $n$-elementary hypergraph state is given by
\begin{eqnarray}
|H\rangle = Z_e^{m_e}|+\rangle^V = \sum_{q=0}^{d-1}|q\rangle\langle q|
\otimes (Z_{e\setminus\{1\}}^{m_e})^q|+\rangle^V,
\end{eqnarray}  
where $e=\{12\ldots n\}$ is the $n$-hyperedge. Let us first consider the case
where only a single system is traced out. Tracing out subsystem $1$, we arrive at
\begin{eqnarray}
\rho_{e\setminus\{1\}} 
&=& Tr_1(|H\rangle\langle H|) = \sum_{q=0}^{d-1}\langle q|H\rangle\langle H|q\rangle \\
                       &=&\frac{1}{d}\sum_{q=0}^{d-1} |H^{(1)}_q\rangle\langle H^{(1)}_q|
\end{eqnarray}
where 
\begin{eqnarray}
|H^{(1)}_q\rangle =(Z_{e\setminus\{1\}})^{qm_e}|+\rangle^{V\setminus\{1\}}.
\end{eqnarray}
The number of different values (modulo $d$) of the product $qm_e$, with
$q=0,1,2,\ldots, d-1$ is $d/gcd(d,m_e)$, since in $\mathbb{Z}_d$ one 
has $m_eq=m_eq'$ iff $q=q'$ modulo $d/gcd(d,m_e)$. 

Moreover, the $d/gcd(d,m_e)$ different vectors $|H^{(1)}_q\rangle$ are 
linearly independent. We prove this via induction on the number of 
vertices of $V\setminus\{1\}$. For $V\setminus\{1\}$ composed of one 
vertex, we see that $|H^{(1)}_q\rangle =Z_2^{qm_e}|+_2\rangle$ and 
\begin{eqnarray}
\sum_q \alpha_q |H^{(1)}_q\rangle = (\sum_q\alpha_qZ_2^{qm_e})|+\rangle_2
\end{eqnarray}
where $\alpha_q$ are arbitrary complex numbers. Then 
$\sum_q \alpha_q |H^{(1)}_q\rangle=0$ iff $(\sum_q\alpha_q Z_2^{qm_e})=0$. 
To see this, not that the operators $Z_2^{qm_e}$ are diagonal in the
computational basis and the vector $|+\rangle_2$ is an equal superposition
of all basis elements. Therefore, $(\sum_q\alpha_q Z_2^{qm_e})|+\rangle_2=0$.
implies already $(\sum_q\alpha_q Z_2^{qm_e})=0$. 

Since the Pauli operators form a basis of the Lie algebra $sl_d(\mathbb{C})$ and are 
thus linearly independent and given that the operators $Z^{qm_e}$ are Pauli 
operators we have that $(\sum_q\alpha_qZ_2^{qm_e})=0$ implies that $\alpha_q=0$ 
for all $q$, i.e., the vectors $|H^{(1)}_q\rangle$ are linearly independent. 

For $V\setminus\{1\}$ composed of two vertices, $|H^{(1)}_q\rangle =Z_{23}^{qm_e}|+,+\rangle_{2,3}$. By the same argumentation above, $|H^{(1)}_q\rangle$ are linearly independent iff $Z_{23}^{qm_e}$ are linearly independent operators. We have
\begin{eqnarray}
\sum_q\alpha_qZ_{23}^{qm_e}&=&\sum_{q,q'}|q'_2\rangle\langle q'_2|\otimes(\alpha_qZ_3^{q'qm_e}) \\
&=&\sum_q'|q'_2\rangle\langle q'_2|\otimes(\sum_q\alpha_qZ_3^{q'qm_e})\\
&=&|0_2\rangle\langle 0_2|\otimes(\sum_q\alpha_q I_3) \\
&+& |1_2\rangle\langle 1_2|\otimes(\sum_q\alpha_q Z^q_3) + \ldots
\end{eqnarray}
We see then that $\sum_q\alpha_qZ_{23}^{qm_e}=0$ can be satisfied only 
if $\alpha_q=0$ for all $q$, since this is the only way to have a null 
term $|1_2\rangle\langle 1_2|\otimes(\sum_q\alpha_q Z^q_3)$, given that 
the $Z_3^q$ operators are linearly independent. Thus $|H^{(1)}_q\rangle$ 
are linearly independent for  $V\setminus\{1\}$ composed of two vertices 
as well. The general induction argument is now clear and it is obvious 
that the vectors $|H^{(1)}_q\rangle$ are linearly independent in general.  
Hence the rank of $\rho_{e\setminus\{1\}}$ is $d/gcd(d,m_e)$. 

One can directly check using the representation in Eq.~(\ref{phasegate})
that if more than two qudits are traced out the same arguments apply.
Thus, the rank of any reduced state is $d/gcd(d,m_e)$. \end{proof}   

\section{SLOCC and LU classes of hypergraphs}

\subsection{SLOCC and LU transformations}

The phenomenon of entanglement is a consequence of the physical restriction 
to local operations by agents separated by space-like distances. It is thus 
important to identify when is it possible to inter-convert two different 
quantum states by means of local operations or, more specifically, characterize
their equivalence under SLOCC or LU operations. Finding the SLOCC/LU classes to 
which a given hypergraph state belong is in general a cumbersome task even in 
the qubit case, but in the following we will present several results and ideas
that can be used for tackling this task. 

Let first define the basic notation. Two pure $n$-partite states $|\phi\rangle$ and 
$|\psi\rangle$ are equivalent under local unitaries  if one has a relation like
\begin{eqnarray}
|\phi\rangle = \bigotimes_{i=1}^{n} U_i|\psi\rangle, 
\end{eqnarray}
where the $U_i$ are unitary matrices, acting on the $i$-th particle. The question
whether two multiqubit states are LU equivalent or not can  be decided by bringing the states into a normal form under LU transformations \cite{kraus}.

More generally, the states are equivalent 
under stochastic local operations and classical communication 
(SLOCC) iff there exist invertible local operators (ILOs) $A_i$ such that
\begin{eqnarray}
|\phi\rangle = \bigotimes_{i=1}^{n} A_i|\psi\rangle.
\end{eqnarray}
Physically, this means that $\ket{\phi}$ can be reached starting from 
$\ket{\psi}$ by local operations and classical communication with a non-zero
probabilitiy. 

Although general criteria for SLOCC-equivalence of multiparite states do 
not exist, it is possible to find general necessary conditions that are
useful as exclusive constraints. For instance, SLOCC transformations can clearly
not change the rank of a reduced state $\rho_i$. Moreover, for special classes 
of states sufficient conditions for SLOCC-equivalence can be found.

\subsection{Elementary hypergraphs}

We now address the problem of SLOCC classification of $n$-elementary hypergraph
states. The classification depends on the greatest common divisor between the 
underlying dimension and the hyperedge multiplicity, as show in the following 
theorem:

\begin{theorem}
\label{theoremelementary}
For a $d$-dimensional $n$-partite system, two $n$-elementary hypergraph states 
with hyperedge multiplicities $k$ and $k'$ are equivalent under LU, and hence 
also under SLOCC if $gcd(d,k)=gcd(d,k')$. For the case that $gcd(d,k)\neq gcd(d,k')$
the states are inequivalent under SLOCC.
\end{theorem}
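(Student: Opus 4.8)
The plan is to split the statement into its two halves — sufficiency of $gcd(d,k)=gcd(d,k')$ for LU-equivalence, and necessity for SLOCC-equivalence — and handle them with quite different tools. For the sufficiency direction, the natural move is to reduce to a canonical representative: since $gcd(d,k)=gcd(d,k')=g$, Proposition~\ref{propclifford} gives a Clifford operator $S$ with $S(Z_e)^k S^{\dagger}=(Z_e)^{k'}$. I would then apply this $S$ on the vertex of $e$ that carries the ``control'' slot (or, more symmetrically, note that the $n$-hyperedge gate $Z_e^k$ is symmetric in its legs, so the relevant single-site Clifford acts on any chosen qudit in $e$). Applying $S$ locally to $|H_k\rangle = Z_e^k|+\rangle^V$ and using that $S$ fixes $|+\rangle$ up to a local unitary — or, if it does not fix $|+\rangle$ exactly, absorbing the residual single-qudit rotation into the $U_i$ on that one site — yields $|H_{k'}\rangle$ up to a tensor product of local unitaries. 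Hence $\bigotimes_i U_i |H_k\rangle = |H_{k'}\rangle$, which is LU-equivalence, and a fortiori SLOCC-equivalence.

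For the necessity direction, the key invariant is the rank of the reduced states, which by the argument in the ``SLOCC and LU transformations'' subsection cannot be changed by invertible local operators: if $|H_{k'}\rangle = \bigotimes_i A_i|H_k\rangle$ with each $A_i$ invertible, then $\rho^{(k')}_S = (\bigotimes_{i\in S}A_i)\rho^{(k)}_S(\bigotimes_{i\in S}A_i)^{\dagger}$ for every bipartition cut $S$, and conjugation by an invertible operator preserves rank. By Lemma~\ref{lemmarank}, every reduced state of the $n$-elementary hypergraph state with multiplicity $k$ has rank exactly $d/gcd(d,k)$, and similarly $d/gcd(d,k')$ for the other. So SLOCC-equivalence forces $d/gcd(d,k) = d/gcd(d,k')$, i.e.\ $gcd(d,k)=gcd(d,k')$; contrapositively, $gcd(d,k)\neq gcd(d,k')$ implies SLOCC-inequivalence.

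The routine-but-necessary care is all in the sufficiency half: one must make sure the Clifford $S$ from Proposition~\ref{propclifford} really can be chosen to be a \emph{local} operator (it is, being the single-site permutation gate $S(\xi,0,0)$ with $\xi$ the relevant unit), and one must track what $S$ does to the initial product state $|+\rangle^V$. Since $S(\xi,0,0)|+\rangle$ is again a computational-basis uniform superposition possibly with a permuted phase pattern — in fact $S(\xi,0,0)$ merely permutes basis labels and hence fixes $|+\rangle = d^{-1/2}\sum_q|q\rangle$ exactly — no residual correction is even needed, and the argument closes cleanly. The main conceptual obstacle, if any, is simply making the $n$-partite bookkeeping precise: verifying that $S_j (Z_e)^k S_j^{\dagger} = (Z_e)^{k'}$ when $S_j$ acts on an \emph{arbitrary} leg $j$ of the hyperedge (not just the distinguished first one), which follows from the symmetry of the controlled-phase gate in Eq.~(\ref{phasegate}) under permutation of its qudits together with $S_j Z_j^{\,q} S_j^{\dagger} = Z_j^{\,\xi q}$. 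Everything else is a direct application of the two quoted results.
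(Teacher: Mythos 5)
Your proposal is correct and follows essentially the same route as the paper's proof: Proposition~\ref{propclifford} together with the observation that the permutation Clifford fixes $\ket{+}^V$ gives the LU direction, and the reduced-state ranks from Lemma~\ref{lemmarank} give SLOCC-inequivalence when the gcd's differ. The extra bookkeeping you supply (that $S$ is a single-site gate, that it fixes $\ket{+}$ exactly, and that $Z_e$ is symmetric in its legs) is precisely the implicit content of the paper's one-line argument.
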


\begin{proof} 
If $gcd(d,k)=gcd(d,k')$ we can, according to Proposition 1 find a local Clifford
transformation with $S Z^k S^\dagger = Z^{k'}.$ So we have 
$S \ket{H} = S Z^k \ket{+}^n = Z^{k'} S \ket{+}^n = Z^{k'} \ket{+}^n = \ket{H'}$
since $S\ket{+}^n= \ket{+}^n.$ For the other implication, note that
if $gcd(d,k)\neq gcd(d,k')$, the single-system reduced states have different 
ranks by Lemma \ref{lemmarank} and thus the states are not SLOCC equivalent. 
\end{proof}

In other words, the number of different elementary hypergraph SLOCC classes 
is the number of different values (modulo $d$) of $gcd(d,k)$, which is 
obviously the number of divisors of $d$. It is remarkable that in this 
case SLOCC equivalence is the same as equivalence under Local Clifford 
operations, by Proposition \ref{propclifford}.  For $d$ prime, all values $k\in \mathbb{Z}_d$ are obviously coprime with $d$ and hence the following implication is straightforward:

\begin{corollary}
\label{corrprime}
For $d$ prime, all $n$-elementary hypergraph states are equivalent under SLOCC. 
\end{corollary}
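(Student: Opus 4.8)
The plan is to obtain this as an immediate corollary of Theorem~\ref{theoremelementary}. First I would fix the convention that an $n$-elementary hypergraph state carries a genuine hyperedge, so its multiplicity is some $m_e\in\{1,2,\ldots,d-1\}$; the degenerate value $m_e=0$ only reproduces the product state $|+\rangle^V$ and lies by itself in the trivial (product) class, so it need not enter the discussion. With this understood, the statement to prove is that any two such states are SLOCC equivalent.

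The key — and essentially only — observation is that when $d$ is prime its sole positive divisors are $1$ and $d$, so every $k$ with $1\le k\le d-1$ satisfies $gcd(d,k)=1$. Hence for any two admissible multiplicities $k,k'$ one automatically has $gcd(d,k)=1=gcd(d,k')$. Feeding this equality into Theorem~\ref{theoremelementary} yields at once that the $n$-elementary hypergraph states with multiplicities $k$ and $k'$ are LU equivalent, and therefore SLOCC equivalent. Since $k$ and $k'$ were arbitrary, all $n$-elementary hypergraph states lie in a single SLOCC class; by Proposition~\ref{propclifford} this class is in fact a single local-Clifford orbit, realized concretely by the permutation gate $S(\xi^{-1},0,0)$ with $\xi^{-1}k\equiv k'\ (\mathrm{mod}\ d)$.

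There is no substantive obstacle: all the work sits in Proposition~\ref{propclifford} and Theorem~\ref{theoremelementary}, and this corollary is merely the remark that primality collapses the $gcd$ hierarchy to one level. The only point I would be careful to spell out in the write-up is the zero-multiplicity convention above, so that the quantifier ``all'' is not misread as asserting that the product state $|+\rangle^V$ is SLOCC equivalent to an entangled one.
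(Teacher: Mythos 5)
Your proof is correct and follows exactly the paper's route: primality forces $gcd(d,k)=1$ for all nonzero $k$, and Theorem~\ref{theoremelementary} then gives LU (hence SLOCC) equivalence, with the local-Clifford realization via Proposition~\ref{propclifford} matching the paper's own remark. Your explicit exclusion of the $m_e=0$ case is a sensible clarification consistent with the paper's divisor-counting statement.
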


Finally, we note that also some other hypergraphs states are LU equivalent to
elementary hypergraph states. From Lemma \ref{lemmacommutation} 
and the discussion that followed, one 
sees that the action of the local gate $X_i^{\dagger}$ on an $n$-elementary 
hypergraph state creates a $n-1$-hyperedge on the neighbourhood of $i$ with 
equal multiplicity $m_e$ of the $n$-hyperedge $e$. Acting $k$ times with 
this local gate, i.e., application of $(X_i^{\dagger})^k$ results in inducing, 
in the neighbourhood of the qudit $i$, a $n-1$-hyperedge of multiplicity 
$km_e$. As shown in the proof of Lemma \ref{lemmarank}, the number of different values 
of the product $km_e$ is given by $d/gcd(d,m_e)$. Thus, the higher the value $gcd(d,m_e)$, the smaller the possible $n-1$-hyperedges that can be created 
(or erased) within an elementary hypergraph state.    

\subsection{Tools for SLOCC classification}

In this section we will explain some more refined criteria for proving 
or disproving SLOCC equivalence. As already mentioned, the rank of the 
reduced states, $r(\rho_i)=Tr_{S \backslash i}(\rho_S)$ (where S denotes 
the set of all subsystems), as a simple way of identifying inequivalences. 

To find a finer distinction we employ a method based on Ref.~\cite{Lamata} 
that uses a $1|23\ldots n$ split of the system to identify types of
inequivalent bases of the ($2,3, \ldots ,n$)-subspace, which results in a 
lower bound on the number of actual SLOCC-classes. As we want to  infer for 
a given state its SLOCC-class, there remains the following problem to be solved: identifying the basis 
which has minimal entanglement in its basis vectors. Accordingly, we refer 
to this tool as minimally entangled basis (MEB) criterion.  A major disadvantage 
of this method is that with growing number $n$ of subsystems, the entanglement 
structure within the bases becomes more complex, as it rises recursively from 
the total number of SLOCC-classes of the ($n-1$)-partite systems. 

The MEB of an $n$-partite quantum state is defined as follows:
\begin{definition}
Consider a state 
\begin{eqnarray*}
\ket{\psi_{12 \ldots n}}=\sum_{a_1, a_2,\ldots, a_n=0}^{d-1,d-1,...,d-1}
c_{a_1,a_2,\ldots, a_n}\ket{a_1,a_2,\ldots, a_n}
\end{eqnarray*}
in a $d^{\times n}$ system. According to Ref. \cite{Lamata}, we define the $d \times (d^{n-1})$
coefficient-matrix $C_{1|2\ldots n}$ in the canonical basis $\{e_i\}$ as follows:
\begin{eqnarray*}
C_{1|2\ldots n}= \sum_{a_1, a_2,\ldots, a_n=0}^{d-1,d-1,...,d-1} c_{a_1,a_2,\ldots, a_n} e_1(e_2^T\otimes\ldots\otimes e_n^T)
\end{eqnarray*}
This matrix holds all information about the total state. From the singular value decomposition (SVD) of this matrix, $C_{1|2\ldots n}=U_1 D V_{2\ldots n}^{\dagger}$, we can identify a basis $\{v_k\}$ of the right subspace $(2,\ldots, n)$, where individual basis vectors $\ket{v_k}$ might be entangled.\newline
Within this framework, we define a minimally entangled basis (MEB) $\{v_k\}_{MEB}$ of $\ket{\psi_{12 \ldots n}}$ as the one within which the number of full product vectors is maximal under the condition that it spans the same subspace as $\{v_k\}$. 
\end{definition}

With this definition we can state:

\begin{lemma}
 Two $n$-partite quantum states $\ket{\phi}, \ket{\psi}$ of the
 same subsystem-dimensionality and equal reduced ranks are 
 SLOCC-inequivalent, if their MEBs have a different number 
 of product vectors. 
\label{l:slocc-ineq}
\end{lemma}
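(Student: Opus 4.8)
The plan is to show that the number of product vectors in the MEB is an SLOCC invariant, so that two states related by invertible local operators necessarily have MEBs with the same product-vector count; the contrapositive is exactly the claim.

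First I would record how the relevant data transforms under SLOCC. Suppose $|\phi\rangle = \bigotimes_{i=1}^n A_i |\psi\rangle$ with invertible $A_i$, and write $B := A_2\otimes\cdots\otimes A_n$ for the resulting invertible product operator on $\mathcal{H}_2\otimes\cdots\otimes\mathcal{H}_n$. Projecting onto $\langle a_1|$ in the first system gives $\langle a_1|\phi\rangle = B\big(\sum_{a_1'} (A_1)_{a_1 a_1'}\langle a_1'|\psi\rangle\big)$, so the ``conditional vectors'' of $|\phi\rangle$ in the $1|2\ldots n$ split are obtained from those of $|\psi\rangle$ by first taking linear combinations via the invertible matrix $A_1$ and then applying $B$. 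Consequently the subspace $\mathcal{V}_\phi\subseteq\mathcal{H}_2\otimes\cdots\otimes\mathcal{H}_n$ spanned by these conditional vectors satisfies $\mathcal{V}_\phi = B\,\mathcal{V}_\psi$, the invertibility of $A_1$ ensuring equality and not merely inclusion. This $\mathcal{V}$ is precisely the support of the reduced state on systems $2,\ldots,n$, equivalently the span of the right singular vectors of $C_{1|2\ldots n}$ with non-zero singular value, hence it is exactly the subspace entering the definition of the MEB above. (The equal-reduced-rank hypothesis guarantees $\dim\mathcal{V}_\phi=\dim\mathcal{V}_\psi$ so that the comparison makes sense; in fact SLOCC-equivalence forces this automatically.)

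Next I would argue that the MEB product count $p(\mathcal{V})$, the maximal number of full product vectors in a basis of $\mathcal{V}$, depends only on $\mathcal{V}$ and is unchanged under $\mathcal{V}\mapsto B\mathcal{V}$. Since $B$ is a linear bijection of the ambient space, it carries spanning sets of $\mathcal{V}$ to spanning sets of $B\mathcal{V}$ and back via $B^{-1}=A_2^{-1}\otimes\cdots\otimes A_n^{-1}$. Moreover $B$ maps a full product vector $w_2\otimes\cdots\otimes w_n$ to $(A_2 w_2)\otimes\cdots\otimes(A_n w_n)$, again a full and non-zero product vector, and $B^{-1}$ does the same; hence a spanning subset of $\mathcal{V}$ contains exactly as many full product vectors as its image under $B$. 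Taking the maximum over spanning sets gives $p(B\mathcal{V})=p(\mathcal{V})$, and together with $\mathcal{V}_\phi=B\mathcal{V}_\psi$ this shows that the MEBs of $|\phi\rangle$ and $|\psi\rangle$ carry the same number of product vectors whenever the states are SLOCC-equivalent. The lemma then follows by contraposition.

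The step needing the most care is the identification of the ``right subspace $\{v_k\}$'' of the definition with an object that transforms cleanly under local operations: one must check that this subspace is indeed $\mathrm{range}(C_{1|2\ldots n}^{\dagger})$, i.e.\ the span of the right singular vectors with non-zero singular value, keep track of the transpose and complex-conjugation conventions implicit in the way $C_{1|2\ldots n}$ is written, and observe that complex conjugation, should it appear, is itself a product-structure-preserving bijection and hence does not affect $p(\mathcal{V})$. Everything else is the routine observation that an invertible local product operator permutes product vectors and spanning sets.
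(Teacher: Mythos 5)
Your proposal is correct and follows essentially the same route as the paper's proof: both identify the right subspace of the $1|2\ldots n$ split as the relevant object, observe that $A_1$ only induces a change of basis within it while $A_2\otimes\cdots\otimes A_n$ is an invertible, product-structure-preserving map of that subspace, and conclude that the MEB product count is an SLOCC invariant. Your version merely spells out more carefully the points the paper states tersely (the identity $\mathcal{V}_\phi = (A_2\otimes\cdots\otimes A_n)\mathcal{V}_\psi$ and the conjugation convention for the right singular vectors).
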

\begin{proof}
The action of the ILOs $A_i$, where $i=1,2,\ldots n$, on $C_{1|2\ldots n}$ 
in its SVD is identified to be 
\begin{equation}
A_1 U_1 D [(A_2 \otimes\ldots A_n) V_{2 \ldots n}]^{\dagger}.
\end{equation} 
We analyse the basis $\{v_k\}$ of the right subspace. 
The Schmidt rank of each basis vector can be changed by $A_1$ exclusively,
which corresponds to a basis transformation of the subspace. 
If the states $\ket{\phi}, \ket{\psi}$ are SLOCC equivalent, by 
definition there exists ILOs $A_i$ which map $\ket{\phi}$ into 
$\ket{\psi}$ and thus map the basis of the right subspace of 
$\ket{\phi}$ into the basis of the right subspace of $\ket{\psi}$. 
The MEB of $\ket{\psi}$ will be then a valid MEB for $\ket{\phi}$, implying that 
the number of product vectors is the same.
\end{proof}

In the above Lemma we consider states $\ket{\phi}, \ket{\psi}$ that have 
equal reduced ranks, because otherwise these states are automatically
SLOCC-inequivalent and there is no point in calculating their MEBs.

Notice that inequivalent MEBs can exclude SLOCC equivalence, but an equivalence of MEBs does not, in general, guarantee SLOCC-equivalence. An exception is the case where the right subspace is spanned by a complete product basis. The reason is that in this case
they are SLOCC equivalent to a generalized GHZ state:

\begin{lemma}
 Two genuine $n$-partite entangled quantum states $\ket{\phi}$, $\ket{\psi}$ of 
 the same subsystem-dimensionality and equal reduced single-particle ranks are
 SLOCC-equivalent, if their MEBs are complete product bases.
 \label{l:slocc-eq}
\end{lemma}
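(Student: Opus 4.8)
The plan is to use the singular-value decomposition that underlies the minimally entangled basis, bring both states to a common canonical form by an invertible operation on the first party, and then show that the complete-product-basis hypothesis constrains the remaining freedom so tightly that the two canonical forms differ only by invertible operations on the other parties. Writing the coefficient matrix $C_{1|2\ldots n}$ of $\ket{\phi}$ in its SVD yields $\ket{\phi}=\sum_{k=1}^{r}d_k\,\ket{u_k}_1\otimes\ket{v_k}_{2\ldots n}$, with $r=\mathrm{rank}(\rho^{\phi}_1)$, singular values $d_k>0$, an orthonormal family $\{\ket{u_k}\}$, and --- by hypothesis --- a basis $\{\ket{v_k}\}$ of the right subspace that is a complete product basis, $\ket{v_k}=\bigotimes_{j=2}^{n}\ket{v^{(j)}_k}$. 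The same applies to $\ket{\psi}$, with families $\ket{u'_k}$ and $\ket{w^{(j)}_k}$ and the \emph{same} $r$, since $\ket{\phi}$ and $\ket{\psi}$ have equal single-particle reduced ranks.

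First I would act on party $1$ with a local invertible operator: since $\{\ket{u_k}\}_{k=1}^{r}$ is orthonormal, the assignment $\ket{u_k}\mapsto d_k^{-1}\ket{k}$ extends to an invertible $A_1$ on $\mathcal{H}_1$, so that $A_1\ket{\phi}=\sum_{k=1}^{r}\ket{k}_1\otimes\bigotimes_{j=2}^{n}\ket{v^{(j)}_k}=:\ket{\tilde{\phi}}$, and analogously $\ket{\tilde{\psi}}=\sum_{k=1}^{r}\ket{k}_1\otimes\bigotimes_{j=2}^{n}\ket{w^{(j)}_k}$. Recalling from the proof of Lemma~\ref{l:slocc-ineq} how invertible operators act on the SVD factors, it now suffices to produce invertible operators $B_j$ on $\mathcal{H}_j$ for $j=2,\ldots,n$ --- allowing in addition a relabelling of $k$ and a rescaling $\ket{k}_1\mapsto\mu_k\ket{k}_1$, both realized by invertible operators on $\mathcal{H}_1$ --- with $B_j\ket{v^{(j)}_k}\propto\ket{w^{(j)}_k}$ for all $k$; composing these with $A_1$ and the inverse of the operator bringing $\ket{\psi}$ to $\ket{\tilde{\psi}}$ then maps $\ket{\phi}$ onto $\ket{\psi}$.

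The structural fact that drives the argument is that in the canonical form the vectors attached to distinct $k$ on the complementary parties are mutually orthogonal (because the $\ket{k}_1$ are), so $\tilde{\rho}^{\phi}_j$ is supported on $\mathrm{span}\{\ket{v^{(j)}_k}\}_k$ and, since reduced ranks are SLOCC-invariant, $\mathrm{rank}(\rho^{\phi}_j)=\dim\mathrm{span}\{\ket{v^{(j)}_k}\}_k$, with the analogous identity for $\ket{\psi}$. In the principal case --- all these dimensions equal $r$, i.e.\ all single-particle reduced ranks equal $r$ --- each family $\{\ket{v^{(j)}_k}\}_k$ is a basis of an $r$-dimensional space, so $B_j:\ket{v^{(j)}_k}\mapsto\ket{w^{(j)}_k}$ is well-defined and extends to an invertible operator on $\mathcal{H}_j$; equivalently, both $\ket{\tilde{\phi}}$ and $\ket{\tilde{\psi}}$ are then seen to be local-invertible images of the generalized GHZ state $\sum_{k=1}^{r}\ket{k}^{\otimes n}$, which settles the claim.

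The main obstacle is the degenerate case, in which some family $\{\ket{v^{(j)}_k}\}_k$ is linearly dependent --- this is compatible with genuine $n$-partite entanglement once not all reduced ranks coincide. Then one must verify that, after a common relabelling of $k$, the pattern of linear-dependence relations among the $\ket{v^{(j)}_k}$ agrees with that among the $\ket{w^{(j)}_k}$ for every $j$, which is precisely what allows the $B_j$ to be chosen consistently. Here the genuine $n$-partite entanglement and the equality of \emph{all} single-particle reduced ranks have to be invoked: a dependence among the party-$j$ vectors reflects a reduced rank below $r$ across the bipartition $j\,|\,(\mathrm{rest})$, a quantity both states share; I would argue that a complete product basis of the right subspace pins down the configuration $\{(\ket{v^{(2)}_k},\ldots,\ket{v^{(n)}_k})\}_k$ up to exactly relabelling and rescaling, so that equal reduced ranks force equal dependence patterns. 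Carrying out this last step --- as opposed to the SVD bookkeeping, which is routine --- is where the real work lies.
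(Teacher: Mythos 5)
Your first two paragraphs are essentially the paper's own proof: SVD across the $1|2\ldots n$ cut, an ILO on party $1$ sending $\ket{u_k}\mapsto d_k^{-1}\ket{k}$, and ILOs $B_j$ on $j=2,\ldots,n$ sending $\ket{v^{(j)}_k}\mapsto\ket{k}_j$, so that both states land on the generalized GHZ state $\sum_{k}\ket{k}^{\otimes n}$. The paper justifies the existence of the $B_j$ in one line, ``since the reduced state ranks are all $r$, the set $\{\ket{\phi^{(k)}_i}\}$ consists of $r$ linearly independent vectors'' --- i.e.\ it reads the hypothesis as \emph{all} single-particle ranks being equal to the common value $r=\dim$(right subspace). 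Under that reading your ``principal case'' is the entire proof and it coincides with the paper's.

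The degenerate case you flag is a real issue that the paper silently excludes, but your proposed resolution --- that equality of the single-particle ranks between $\ket{\phi}$ and $\ket{\psi}$ forces matching linear-dependence patterns among the $\ket{v^{(j)}_k}$ --- does not work. Take $d=4$, $n=4$, $\ket{\phi}=\sum_{a,b=0}^{1}\ket{2a+b}_1\ket{a}_2\ket{a}_3\ket{b}_4$ and $\ket{\psi}=\sum_{a,b=0}^{1}\ket{2a+b}_1\ket{a}_2\ket{b}_3\ket{a}_4$. Both are genuinely four-partite entangled (every bipartite cut has rank $\ge 2$), both have single-particle ranks $(4,2,2,2)$, and both right subspaces for the $1|234$ cut are spanned by complete product bases. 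Yet $\mathrm{rank}(\rho^{\phi}_{23})=2$ while $\mathrm{rank}(\rho^{\psi}_{23})=4$, so the states are SLOCC-inequivalent; no choice of relabelling and ILOs $B_j$ can reconcile the two dependence patterns. So the ``real work'' you defer cannot be carried out from the stated hypotheses: the dependence pattern is governed by ranks across \emph{all} bipartitions, not just the single-particle ones. The lemma is only correct (and is only used in the paper) in the regime where all single-particle ranks equal $r$, which is exactly the case your argument, and the paper's, already settles; you should either restrict the statement accordingly or strengthen the hypothesis to equality of ranks across every bipartition.
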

\begin{proof}  

 We show that the existence of a complete product basis within the right 
 subspace is sufficient to find ILOs that transform $\ket{\psi}$ (and $\ket{\phi}$)
 to the GHZ type state $\ket{\psi_{GHZ}} \sim \sum_{k=0}^{r} \bigotimes_{i=1}^n  \ket{k}_i$, where $r$ is the rank of the reduced single-particle states.
 
 Let us assume that the basis vectors $\ket{v_k}$ are all product vectors.
 Therefore, they can be written as 
 \begin{equation}
  \ket{v_k} = \bigotimes_{i=2}^n \ket{\phi^{(k)}_i}
 \end{equation}
 In order to map this onto the GHZ state, we only have to find ILO $A^{(i)}$
 on the particles $i=2,\ldots,n$ such that for any particle the set of states
 $\{ \ket{\phi^{(k)}_i}\}$ is mapped onto the  states $\{\ket{k}_i\}$. This is 
 clearly possible: since the reduced state ranks are all $r$, the set
 $\{ \ket{\phi^{(k)}_i}\}$ consists of $r$ linearly independent vectors. 
 Finally, on the first particle we have to consider the left basis $\ket{u_k}$.
 These vectors are orthogonal, and we can find a unitary transformation 
 that maps it to $\{\ket{k}_1\}$.
\end{proof}

Based on the Lemmata presented in this subsection we wrote computer programs which we regard as tools which we use later for classification of tripartite hypergraphs of dimension $3$ and $4$. 
\subsubsection{Tool \# 1}
The first program checks whether there exist a state $\varrho$ in the subspace spanned
by a given set of pure states $|v_i\rangle$ for $i=1,\dots,K,\; K\leq d$ that has a positive partial transpose (PPT) with respect to any bipartition \cite{ppt}.   
This problem can be formulated as following semidefinite program (SDP)
\begin{eqnarray}
\min_\lambda &\; & 0\\
\text{subject to} &\; & \varrho = \sum_{ij}^K\lambda_{ij}|v_i\rangle\langle v_j|,\nonumber\\
&\; & \varrho \geq 0,\nonumber\\
&\; & \forall\; \text{bipartitions}\; M|\overline{M},\; \varrho^{T_M} \geq 0,\nonumber\\
&\; & \lambda^\dagger = \lambda,\; \Tr(\lambda) = 1,\nonumber
\end{eqnarray}
where last condition means $\lambda$ is a hermitian $K\times K$ matrix with trace 1, and $\varrho^{T_M}$ denotes partial transpose of matrix $\varrho$ with respect to the subsystem $M$. 

This tool can be used to prove that there is no product vector in the right subspace $(2,\ldots,n)$ of an $n$-partite state $|\phi\rangle$, where $K$ is the number of basis vectors in the right subspace. If the above SDP is infeasible it implies that there is no separable state in the subspace $(2,\ldots,n)$, which in turns implies that there is no product vector. If for some other $n$-partite state $\ket{\psi}$ there is a product vector in the right subspace $(2,\ldots,n)$ the two states $\ket{\phi}$ and $\ket{\psi}$ are SLOCC-inequivalent according to Lemma~\ref{l:slocc-ineq}.
\subsubsection{Tool \# 2}
The second program is a slight modification of the first one and it checks 
whether there exist a PPT state of rank $K$ in the subspace spanned by $K$ linearly independent vectors $|v_i\rangle,\;i=1,\dots,K,\; K\leq d$.
If the optimal value $\epsilon$ of the following semidefinite program
\begin{eqnarray}
\min_{\lambda,\epsilon} &\; & \epsilon\\
\text{subject to} &\; & \varrho = \sum_{ij}\lambda_{ij}|v_i\rangle\langle v_j|,\nonumber\\
&\; & \varrho \geq 0,\nonumber\\
&\; & \forall\; \text{bipartitions}\; M|\overline{M},\; \varrho^{T_M} \geq 0,\nonumber\\
&\; & \varrho \geq \epsilon \left(\sum_i |v_i\rangle\langle v_i|\right),\nonumber\\
&\; & \lambda^\dagger = \lambda,\; \Tr(\lambda) = 1,\nonumber
\end{eqnarray}
is greater than 0, and if the found PPT state $\varrho$ can be proven to be (fully) separable,
then by the range criterion (see Ref.~\cite{range}) it means that in the subspace spanned by $|v_i\rangle$ there are $K$ product states which span the same subspace. This program can be used to prove SLOCC-equivalence
of states $\ket{\phi}$ and $\ket{\psi}$ according to Lemma~\ref{l:slocc-eq}, if for both states 
the above conditions are satisfied for their right subspace of at least one bipartition.
\subsubsection{Tool \# 3}
Finally, it is convenient to perform a numerical optimization in order to find product states $|v^p_i\rangle$, $i = 1,\dots,K',\;K'\leq K$ in the subspace spanned by the given set of vectors $|v_i\rangle$ for $i=1,\dots,K,\;
K\leq d$. This can be done by, let us say, maximizing the purity of the reduced states [that is, $1-\Tr(\varrho^2_M)$, 
where $\varrho_M = \Tr_M(\varrho)$ is the reduced state of the subsystem $M$] for each bipartition and
minimizing the scalar product $\left| \langle v^p_i|v^p_j\rangle \right|^2$ between each pair of product vectors for each unique pair $\{i,j\},\;i,j \in \{1,\dots, K'\}$. Minimizing the scalar products makes the program to look for linearly independent vectors which in the best case are orthogonal. 

As we will see in the next section, for 
most of the tripartite hypergraph states of dimension $3$ and $4$ numerical optimization, described above, 
gives explicit form of product states in the right subspace if they exist. Moreover, knowing the exact form of 
product states for the case where a full product basis exists for both states $\ket{\phi}$ and $\ket{\psi}$ allows us to find an explicit SLOCC transformation between these states.
\subsection{Tools for LU-classification}

Let us now discuss tools how LU equivalence can be characterized. In principle, 
this question can be decided using the methods of Ref.~\cite{kraus}, but for the
examples in the next section some other methods turn out to be useful. 

If LU equivalence should be proven, an obvious possibility is to find directly
the corresponding LU tranformation. This has been used in Theorem \ref{theoremelementary}. For proving non-equivalence, one can use 
entanglement measures such as geometric measure \cite{GeomEnt}, since
such measures are invariant under LU transformations. Another possibility
is the white-noise tolerance of witnessing entanglement \cite{PPTMixer}. The latter
method works as follows:
For an entangled state which is detected by a witness one can assign an upper limit of white noise which can be added to the state, such that the state can still be detected by that witness. Clearly, if two states are equivalent under LU, they have 
the same level of white-noise tolerance of entanglement detection. Now if
one considers a class of decomposable witnesses the estimation of this level 
for a given state can be accomplished effectively by means of semidefinite 
programming \cite{SDP}. Below, we use a method described in 
Ref.~\cite{PPTMixer} to witness genuine multipartite entanglement of 
hypergraph states and to determine the corresponding white-noise 
tolerance of that witnessing. 
 
Using the tools described above we present a classification in terms of SLOCC- and LU-equivalence of tripartite hypergraph states for dimensions $3$ and $4$ in the next 
section.

\section{Examples}
\afterpage{%
\begin{center}
\vspace{-11pt}
\begin{table*}[p]
\begin{tabularx}{\textwidth}{| c | c | c | c |}
\hline
\parbox[c][1.5cm]{.018\textwidth}{\rotatebox{90}{Class}} & \parbox[c]{.07\textwidth}{Schmidt ranks} & \parbox[c][1.5cm]{.8\textwidth}{Representatives} &  \parbox[c][1.5cm]{.077\textwidth}{Geom. measure/ w-noise tolerance } \\ \hline
\multirow{ 2}{*}{\bf 1} & \parbox[c]{.07\textwidth}{\vspace{12pt} 1$|$23\quad 3\\ 2$|$13\quad 3\\ 3$|$12\quad 3} & \parbox[c][2cm]{.8\textwidth}{
\includegraphics[width=1.8cm]{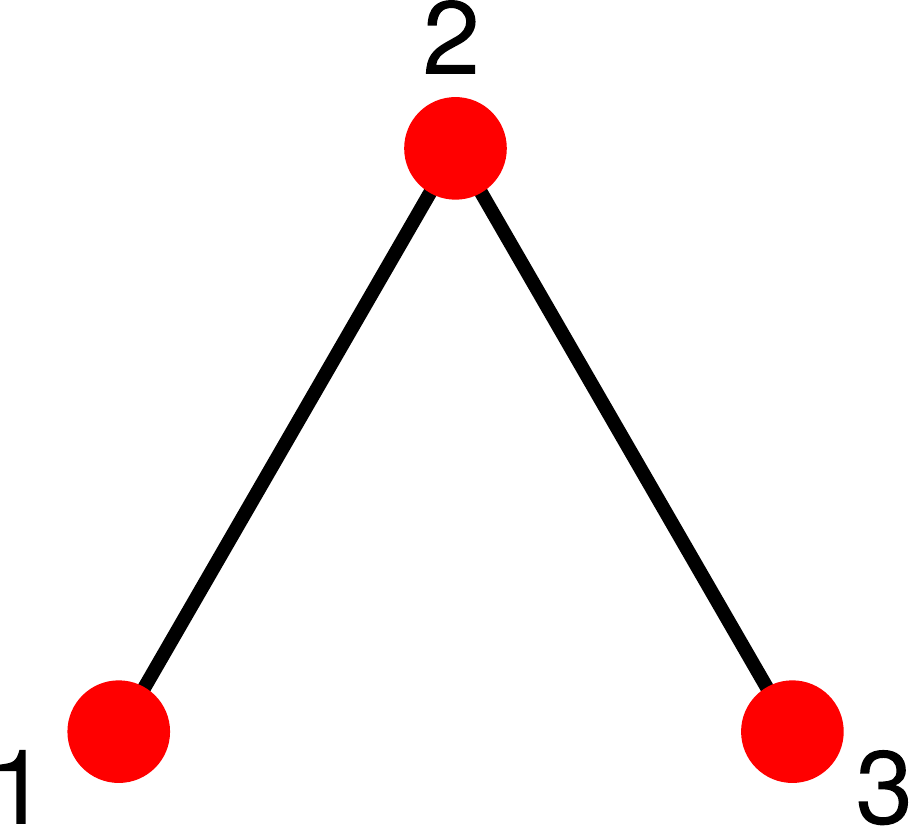} \quad\; \includegraphics[width=1.8cm]{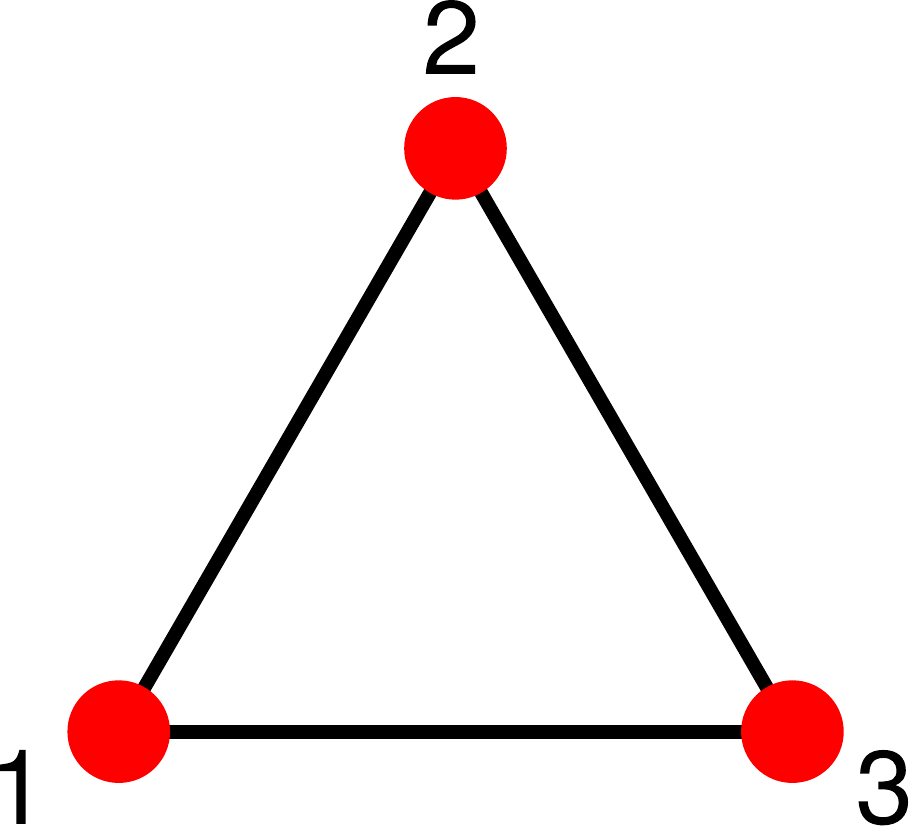}
} & \parbox[c][1.5cm]{.07\textwidth}{$\sim$ 0.66 \\ 62.5\%} \\ \cline{3-4}
& & \parbox[c][2cm]{.8\textwidth}{
\includegraphics[width=1.8cm]{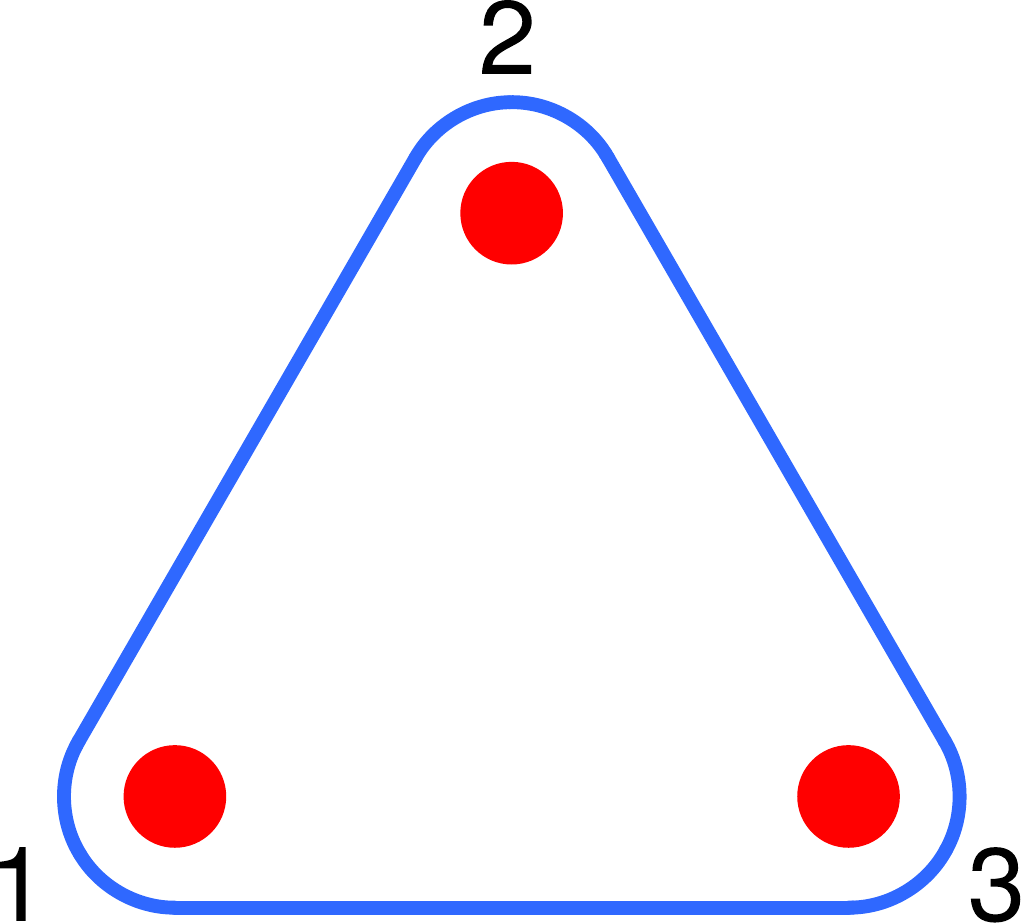} \quad\; \includegraphics[width=1.8cm]{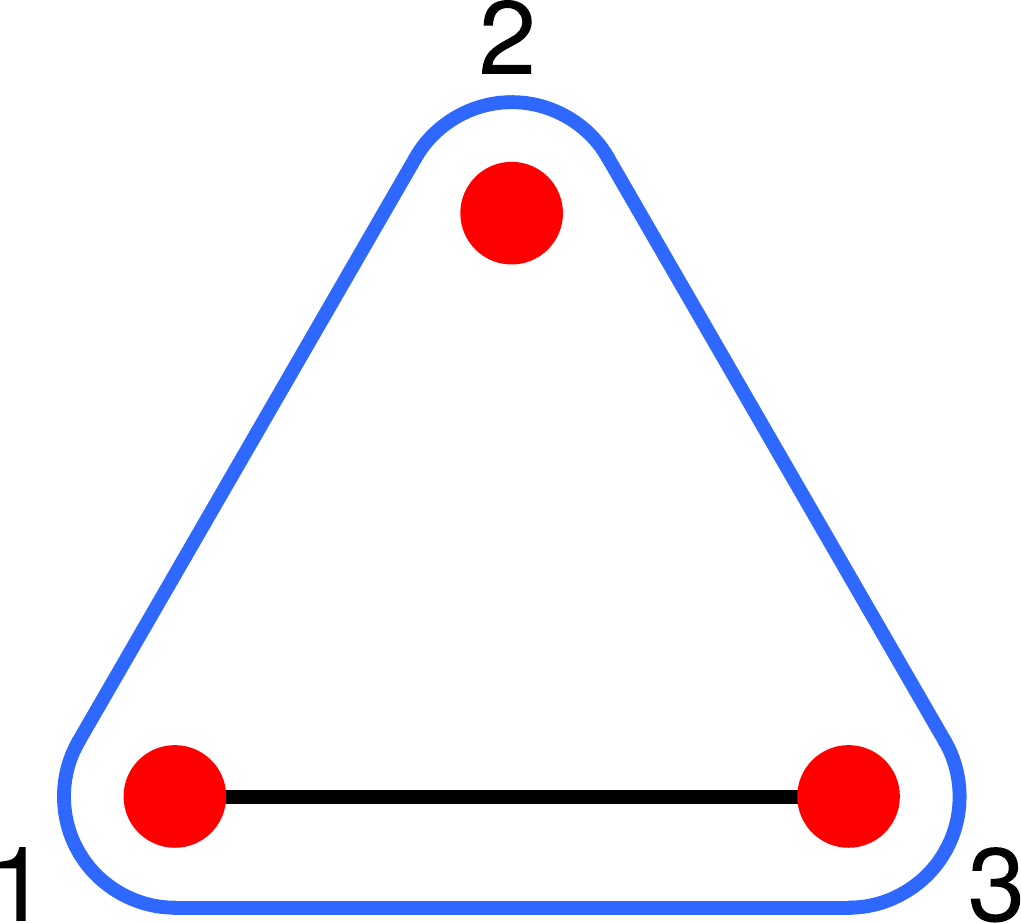} \quad\;
\includegraphics[width=1.8cm]{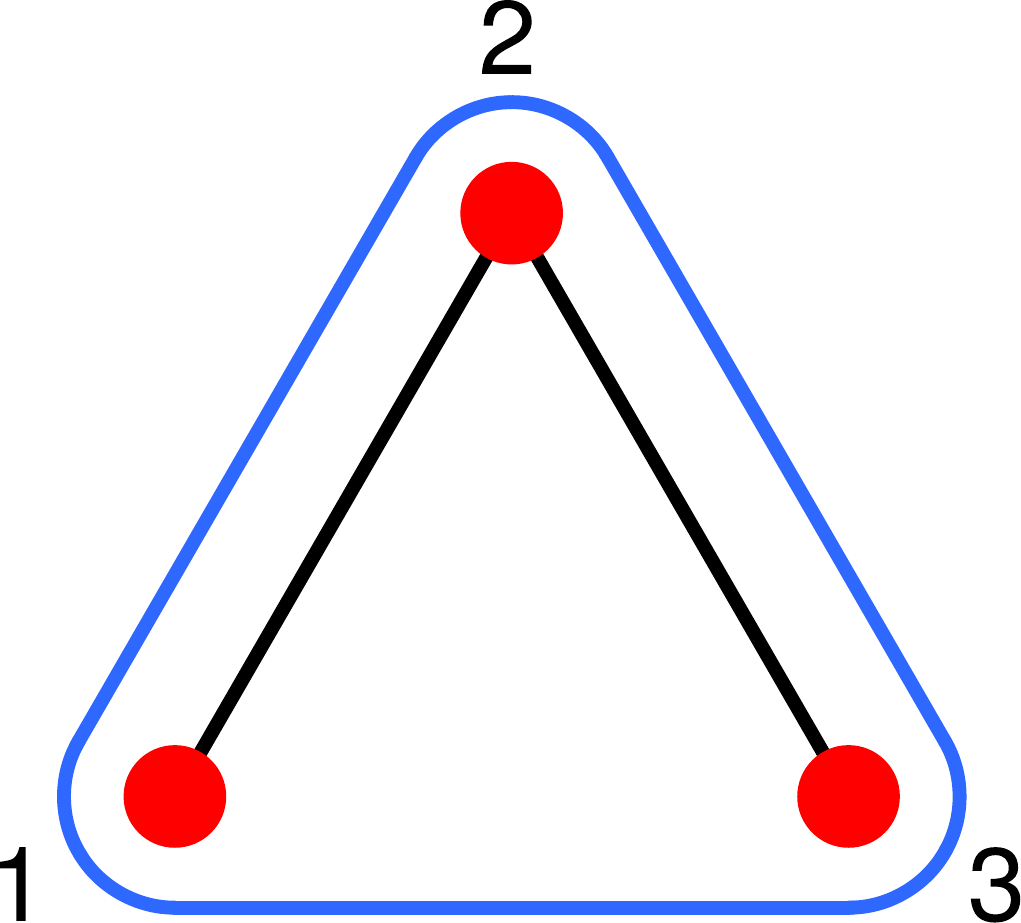} \quad\; \includegraphics[width=1.8cm]{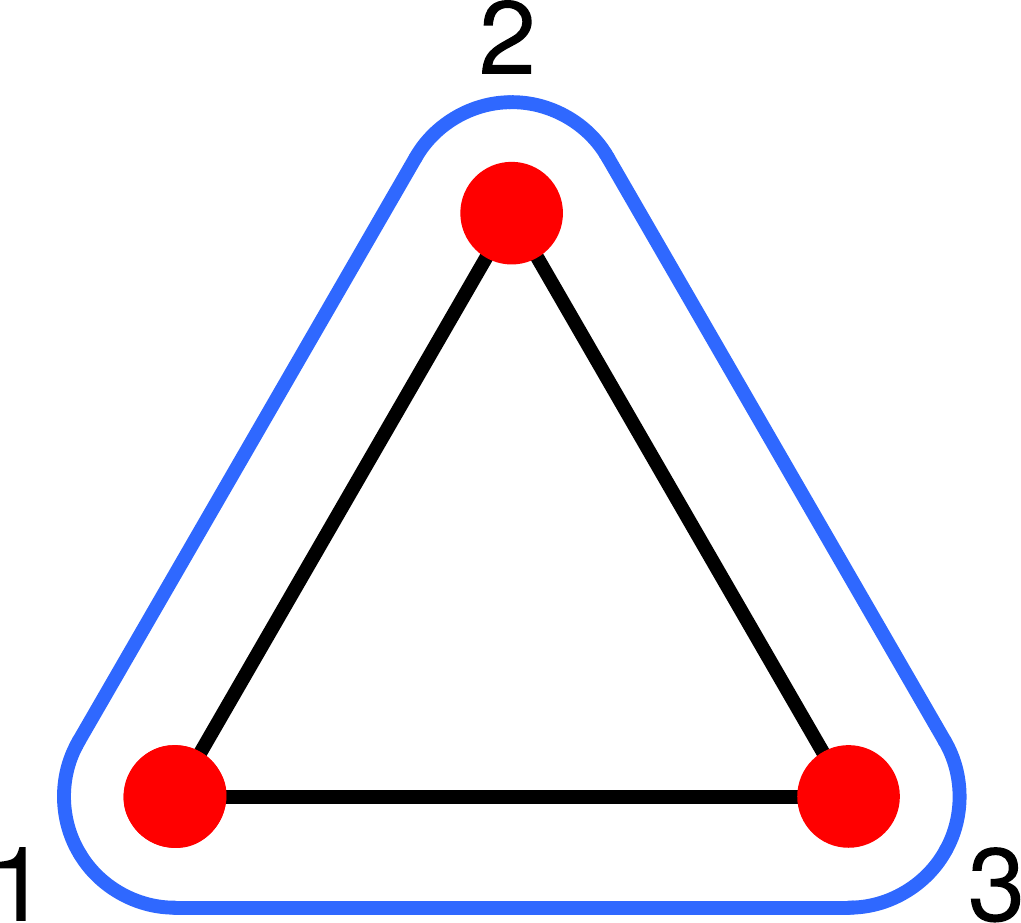} 
} & \parbox[c][1.5cm]{.07\textwidth}{$\sim$ 0.53 \\ $\sim$ 76.0\%} \\ \hline
\end{tabularx}
\caption{Table of one SLOCC class and two LU classes of 3-qutrit hypergraph multipartite entangled states.}
\label{tab:d3}
\end{table*}
\vspace{-11pt}
\end{center}
\begin{center}
\vspace{-11pt}
\begin{table*}[p]
\begin{tabularx}{\textwidth}{| c | c | c | c |}
\hline
\parbox[c][1.5cm]{.018\textwidth}{\rotatebox{90}{Class}} & \parbox[c]{.07\textwidth}{Schmidt ranks} & \parbox[c][1.5cm]{.8\textwidth}{Representatives} &  \parbox[c][1.5cm]{.077\textwidth}{Geom. measure/ w-noise tolerance } \\ \hline
{\bf 1} & \parbox[c]{.07\textwidth}{1$|$23\quad 4\\ 2$|$13\quad 4\\ 3$|$12\quad 4} & \parbox[c][2cm]{.8\textwidth}{
\includegraphics[width=1.8cm]{s12s23.pdf} \quad\; \includegraphics[width=1.8cm]{s12s23s13.pdf} \quad\;
\includegraphics[width=1.8cm]{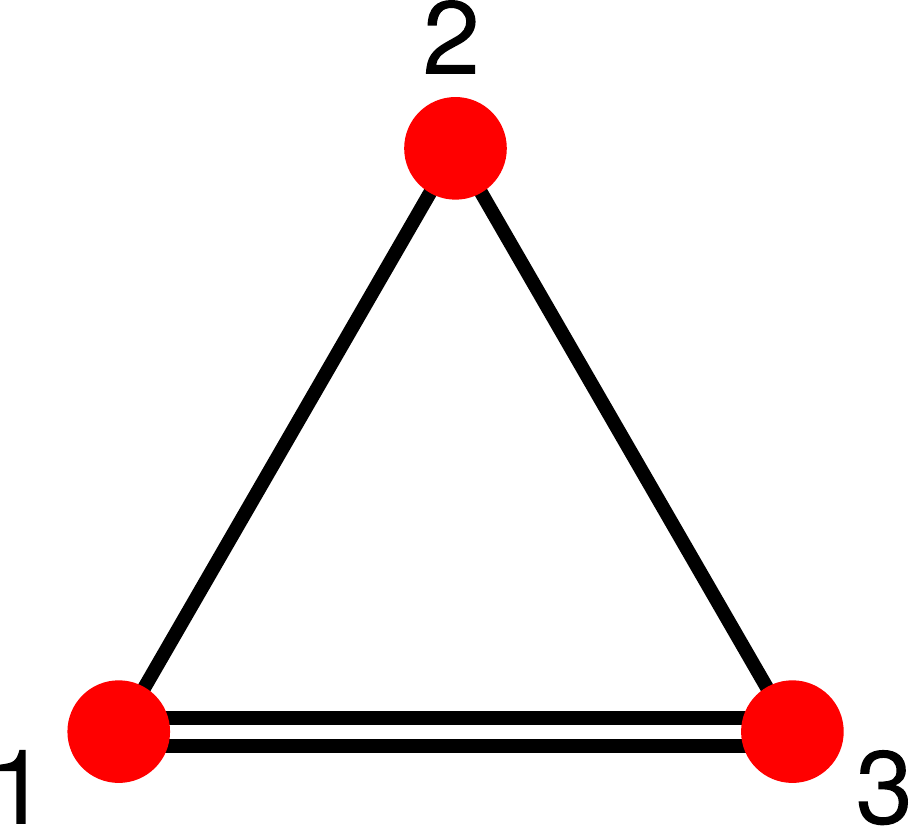} \quad\; \includegraphics[width=1.8cm]{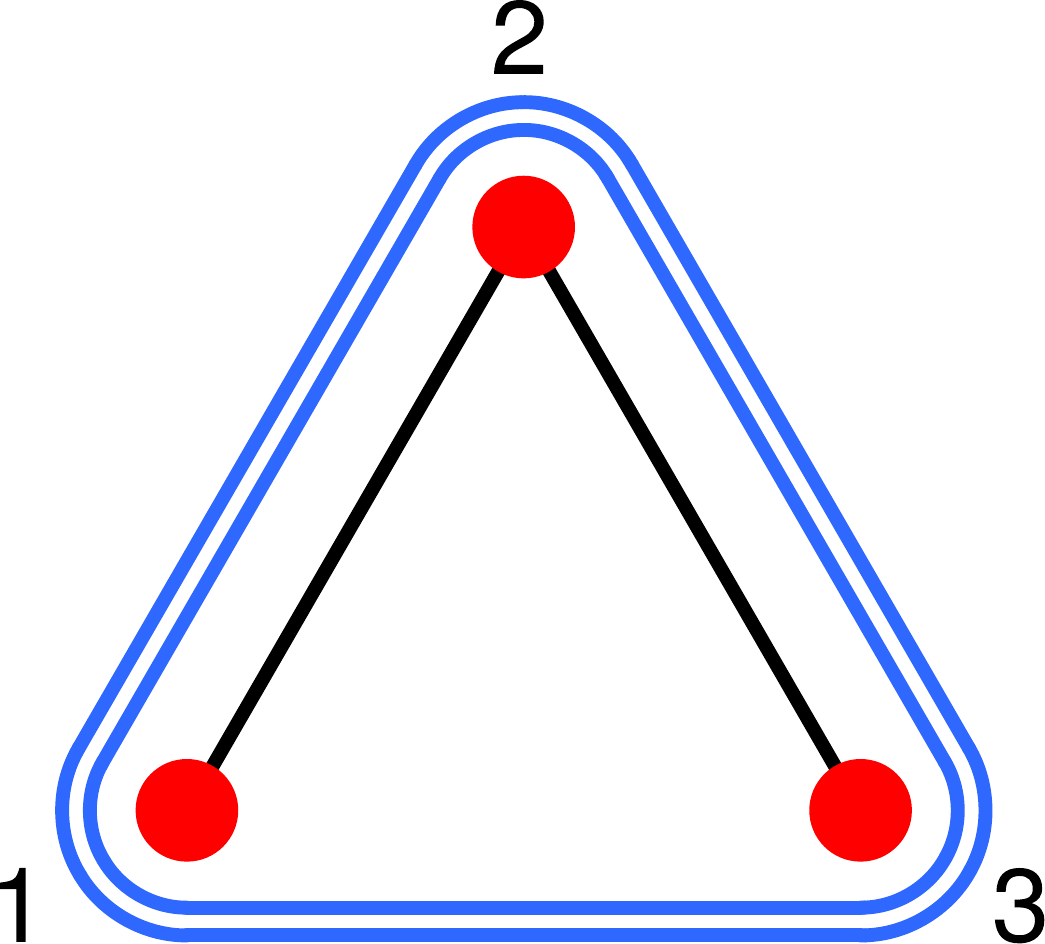} \quad\;
\includegraphics[width=1.8cm]{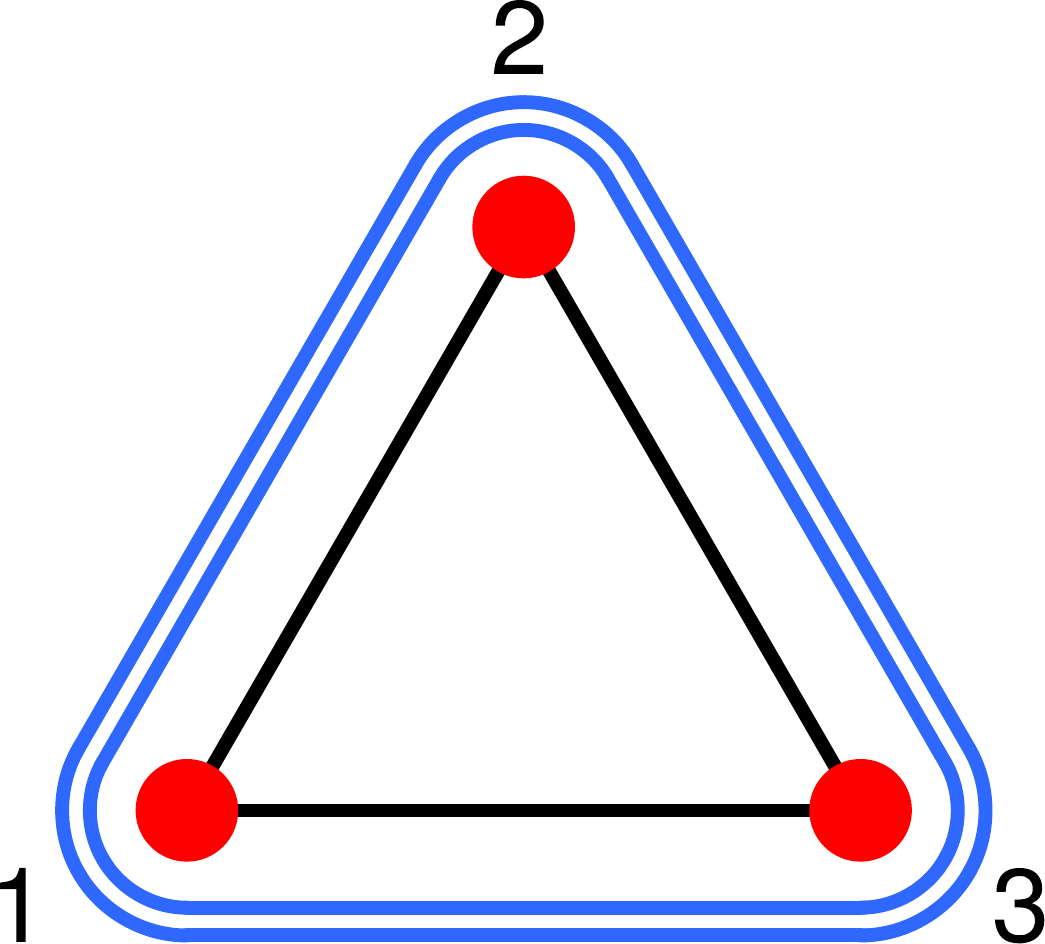} \quad\; \includegraphics[width=1.8cm]{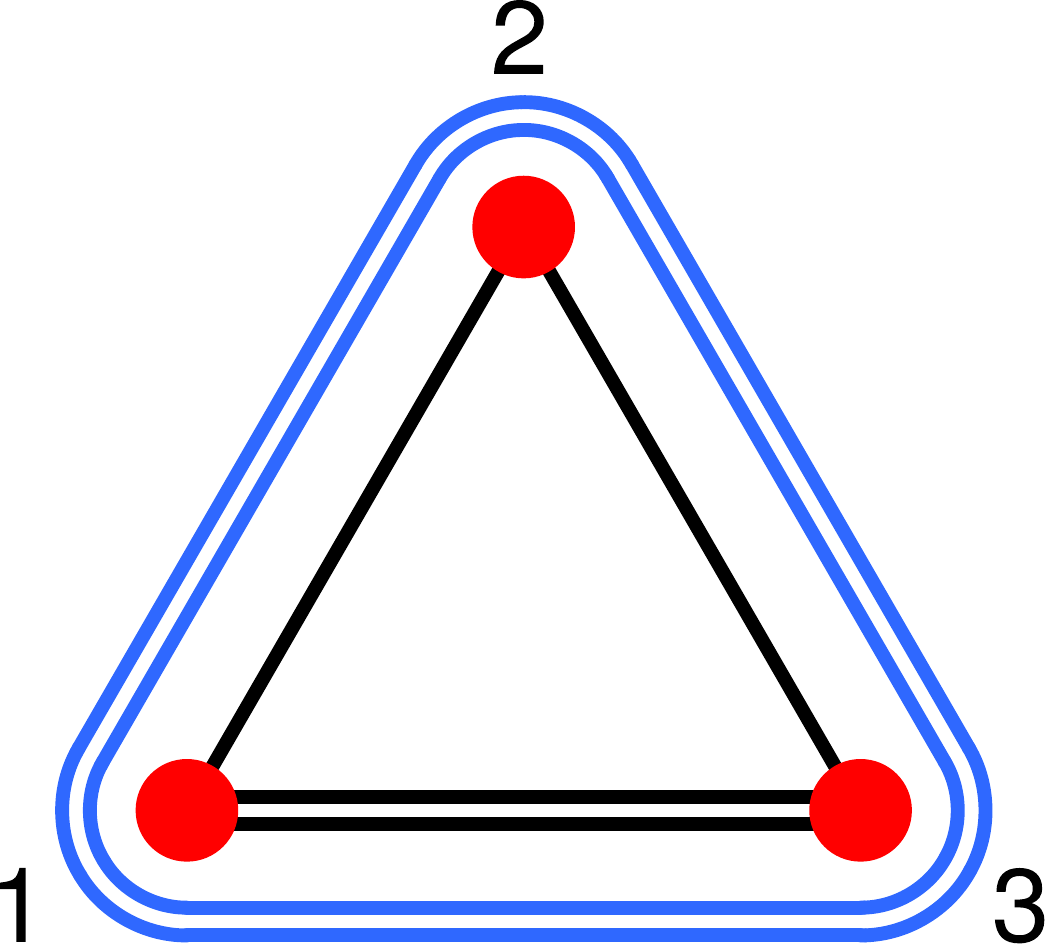}
} & \parbox[c][1.5cm]{.07\textwidth}{0.75 \\ $\sim$ 84.2\%} \\ \hline
{\bf 1'} & \parbox[c]{.07\textwidth}{1$|$23\quad 4\\ 2$|$13\quad 4\\ 3$|$12\quad 4} & \parbox[c][4.2cm]{.8\textwidth}{
\includegraphics[width=1.8cm]{s123.pdf} \quad\; \includegraphics[width=1.8cm]{s123s13.pdf}  \quad\;
\includegraphics[width=1.8cm]{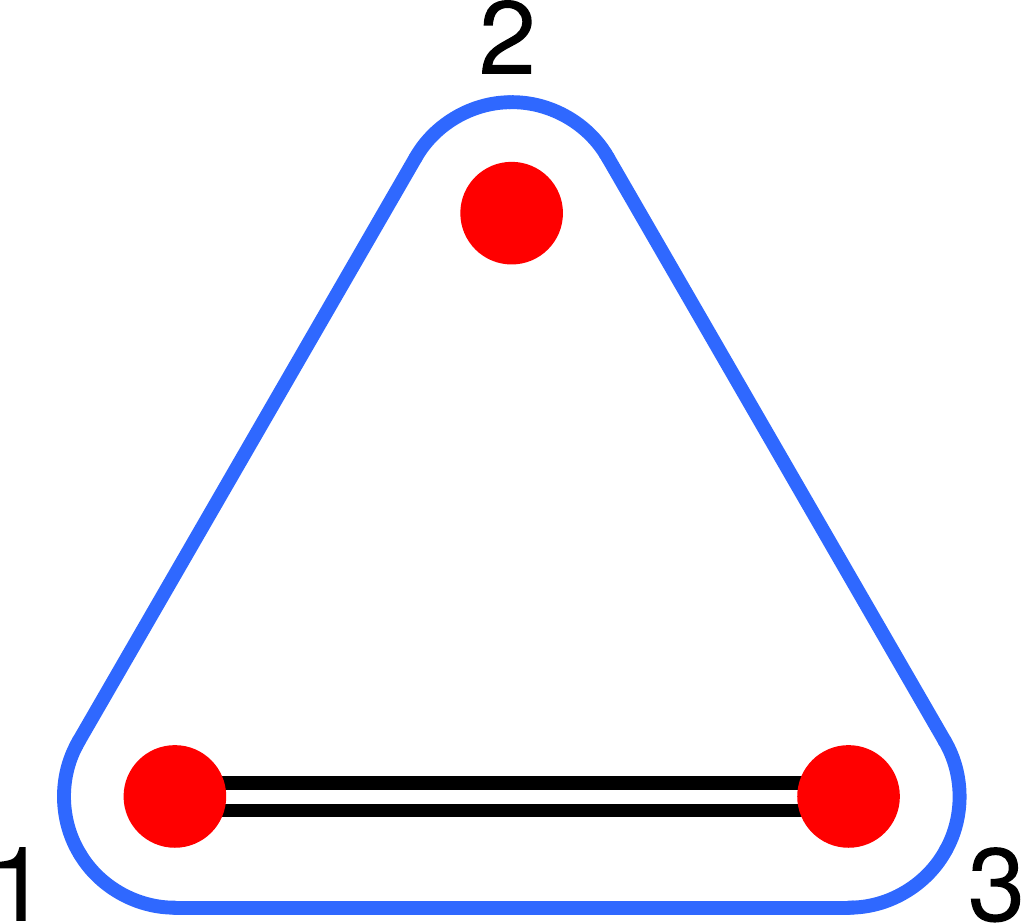} \quad\; \includegraphics[width=1.8cm]{s123s12s23.pdf} \quad\;
\includegraphics[width=1.8cm]{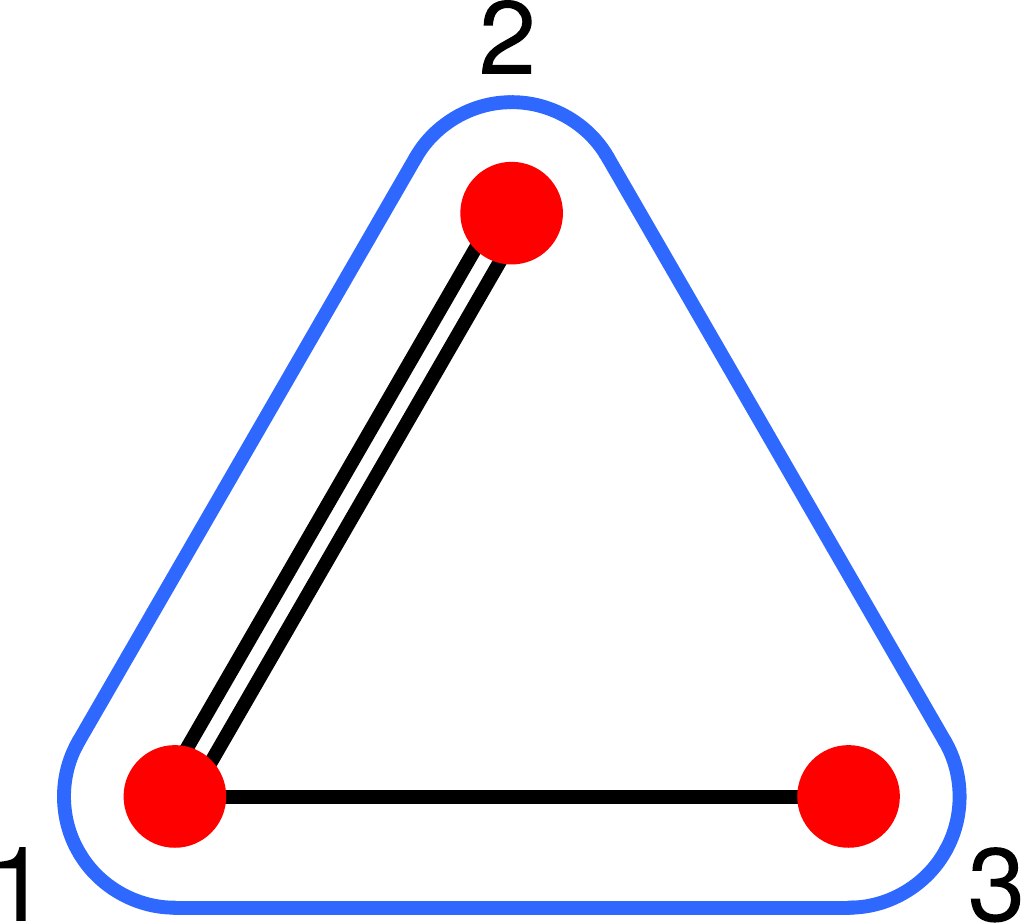} \quad\; \includegraphics[width=1.8cm]{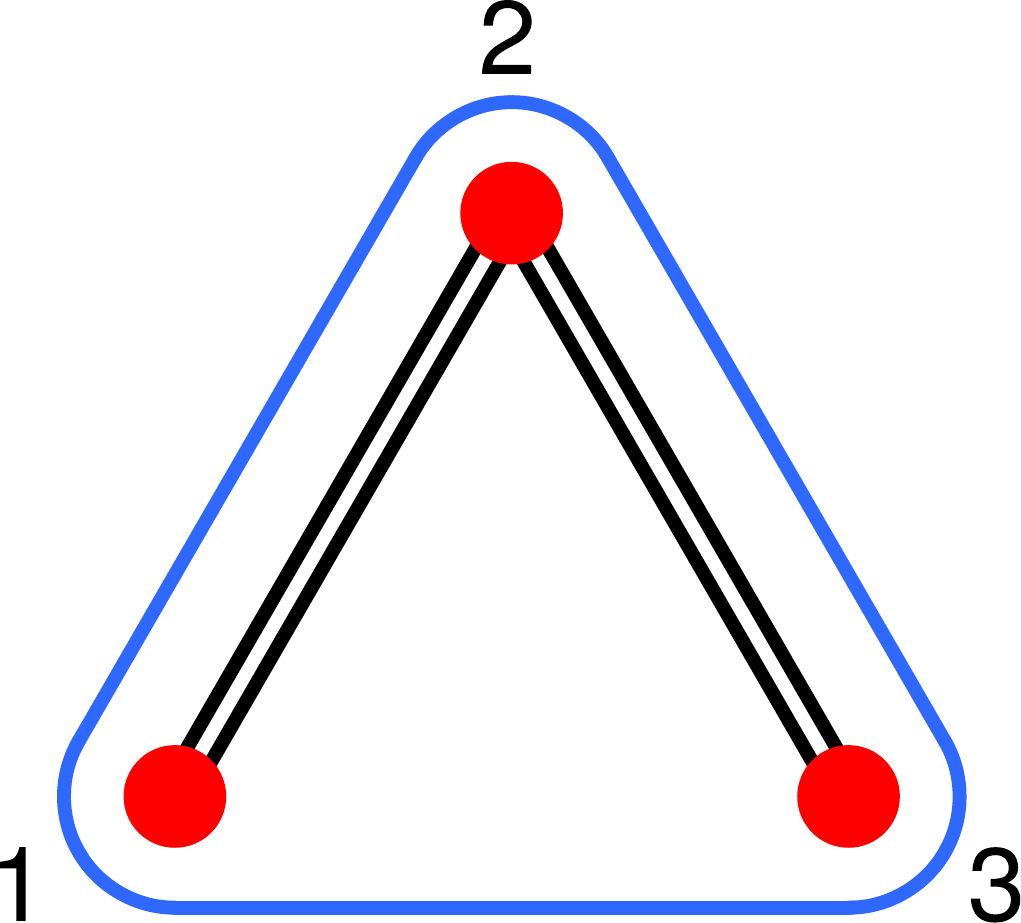} \\[.4cm]
\includegraphics[width=1.8cm]{s123s12s23s13.pdf} \quad\; \includegraphics[width=1.8cm]{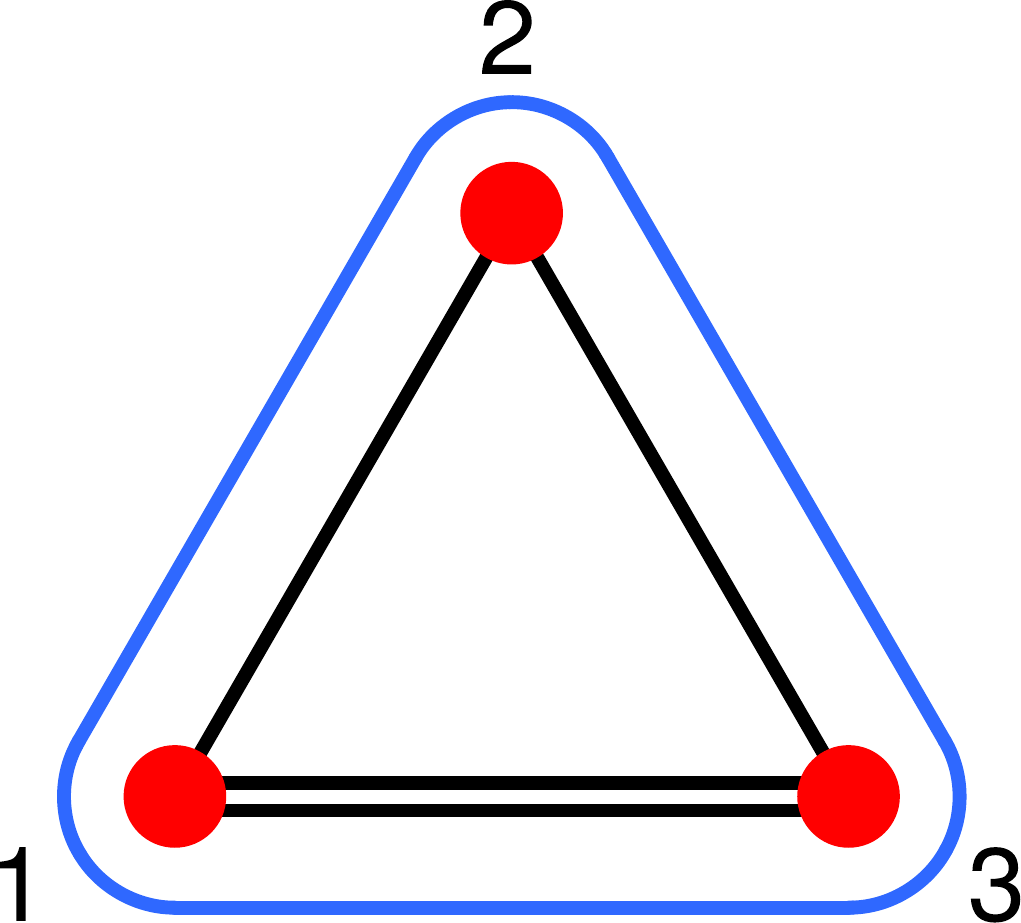} \quad\;
\includegraphics[width=1.8cm]{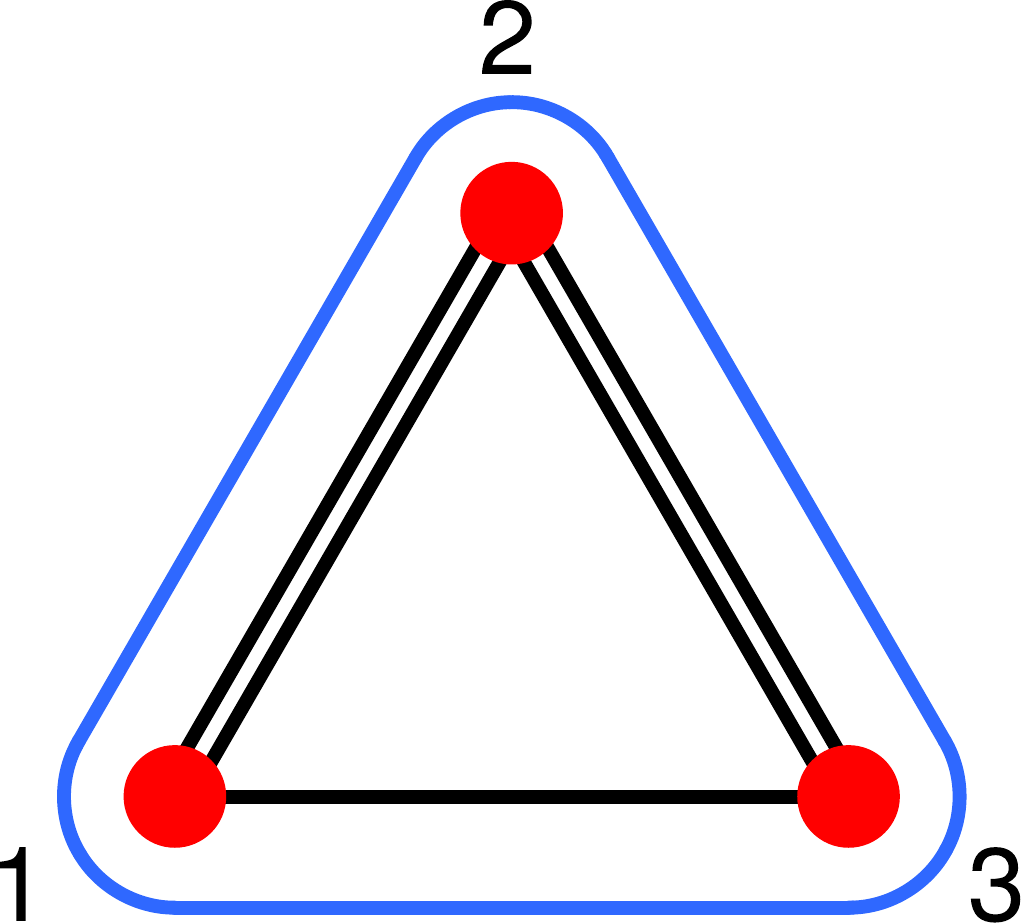} \quad\; \includegraphics[width=1.8cm]{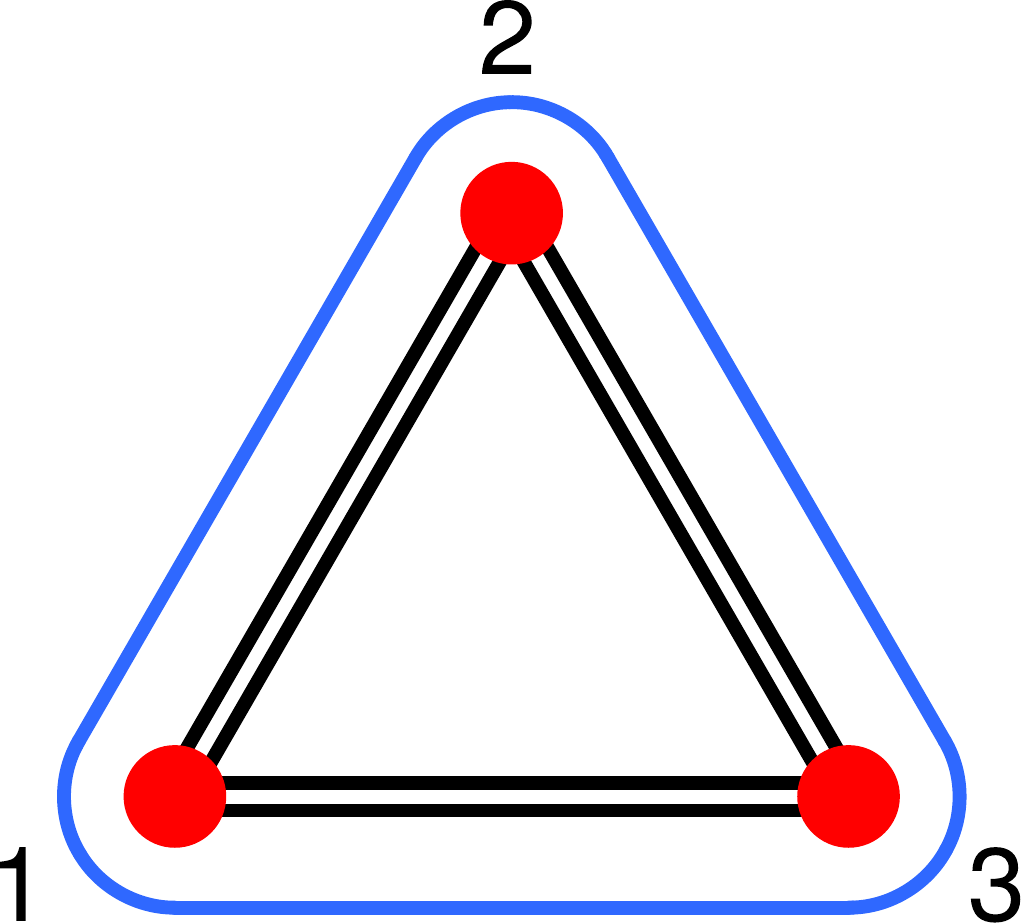}
} & \parbox[c][1.5cm]{.07\textwidth}{$\sim$ 0.58 \\ $\sim$ 87.1\%} \\ \hline

\multirow{ 2}{*}{\bf 2} & \parbox[c]{.07\textwidth}{\vspace{.5cm}1$|$23\quad 2\\ 2$|$13\quad 2\\ 3$|$12\quad 2} & \parbox[c][2cm]{.8\textwidth}{
\includegraphics[width=1.8cm]{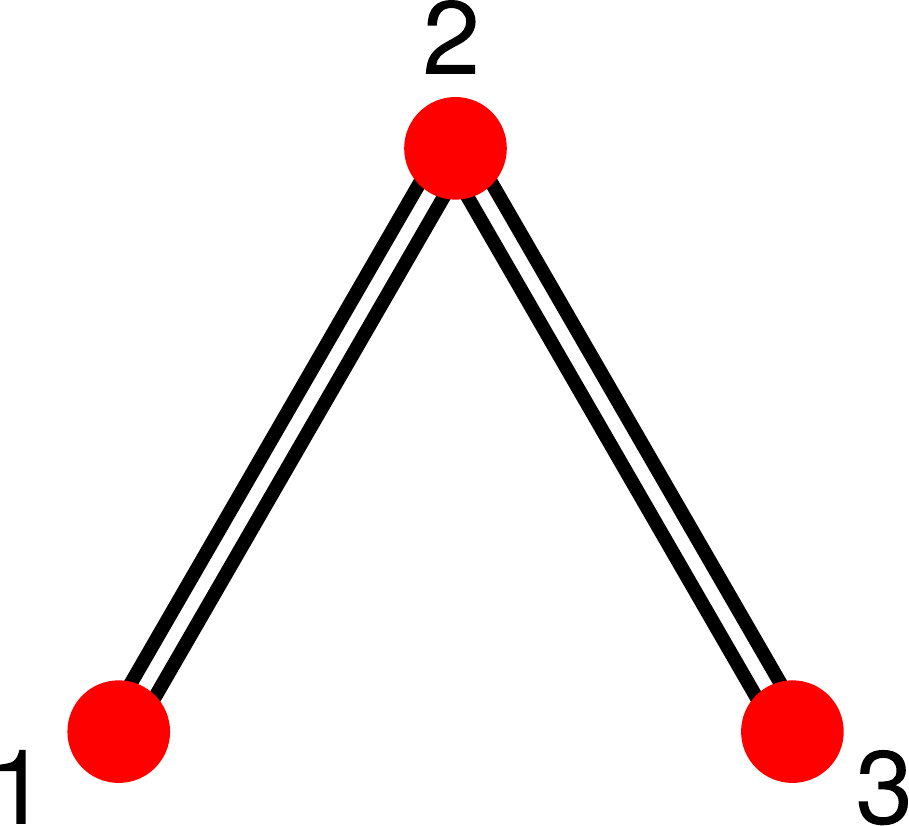} \quad\; \includegraphics[width=1.8cm]{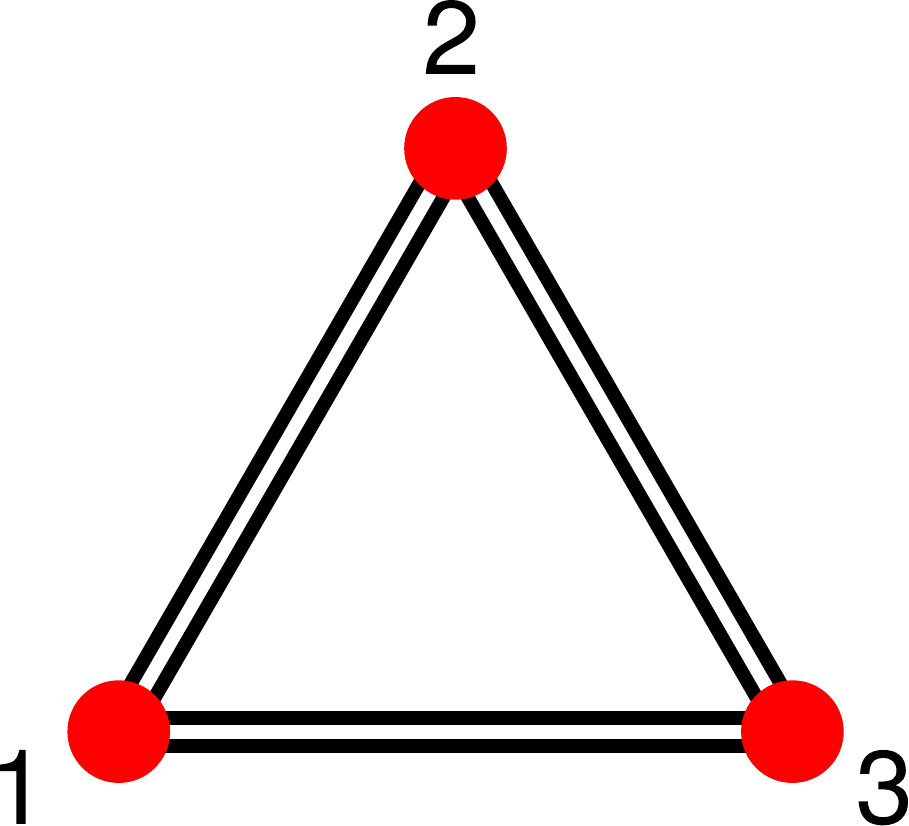}} 
& \parbox[c][1.5cm]{.07\textwidth}{0.50 \\ $\sim$ 91.4\%} \\ \cline{3-4}
&  & \parbox[c][2cm]{.8\textwidth}{
\includegraphics[width=1.8cm]{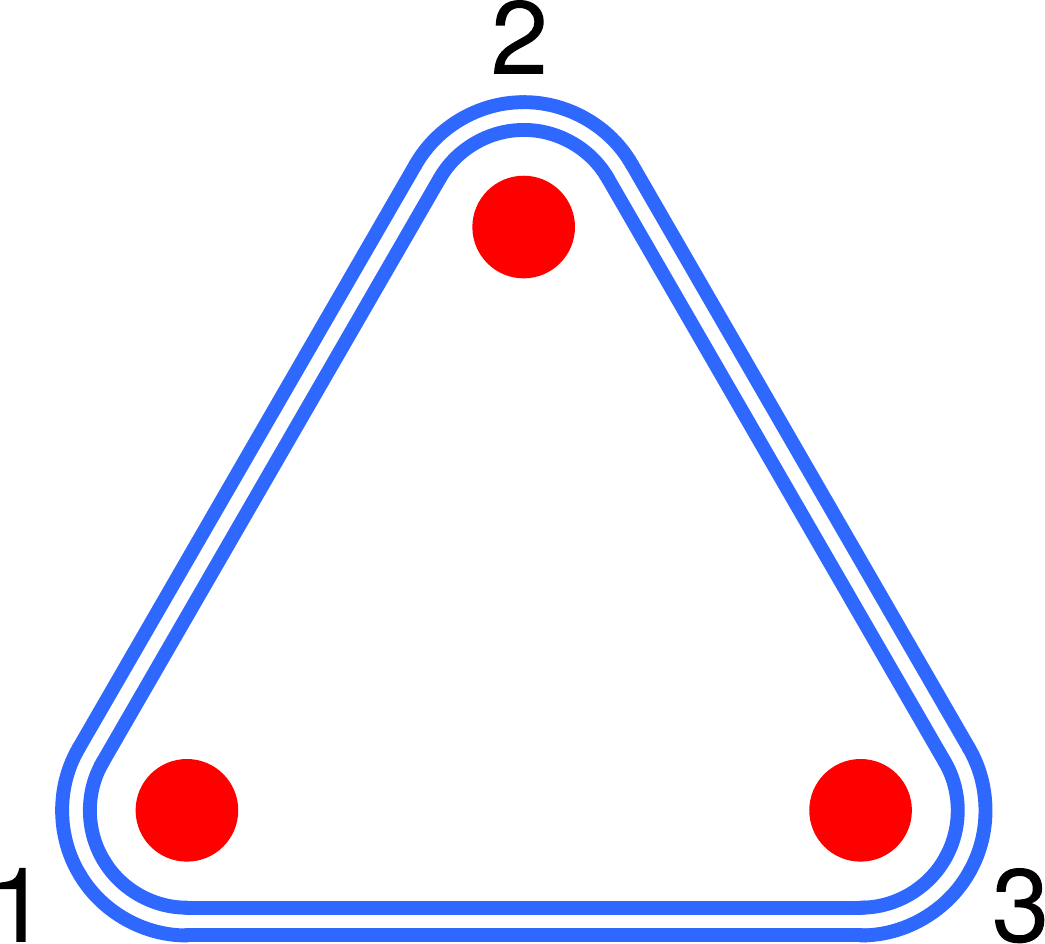} \quad\; \includegraphics[width=1.8cm]{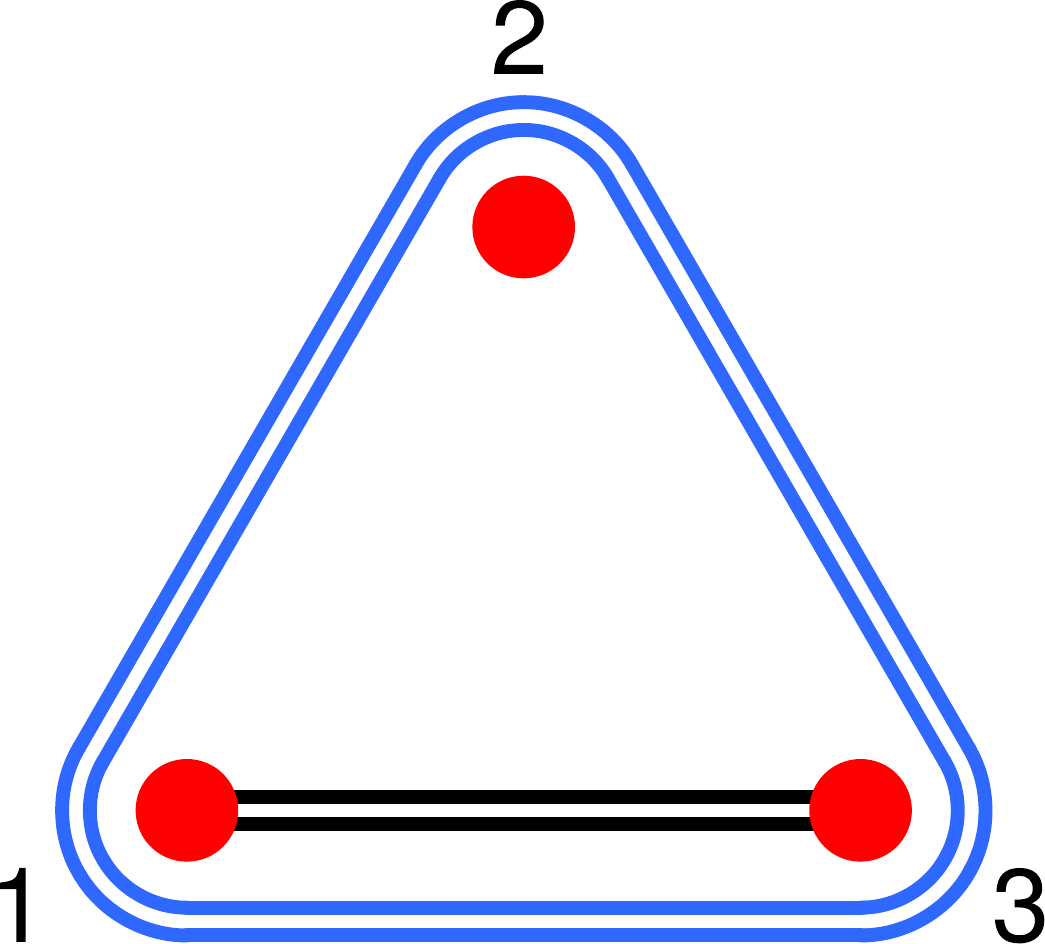} \quad\;
\includegraphics[width=1.8cm]{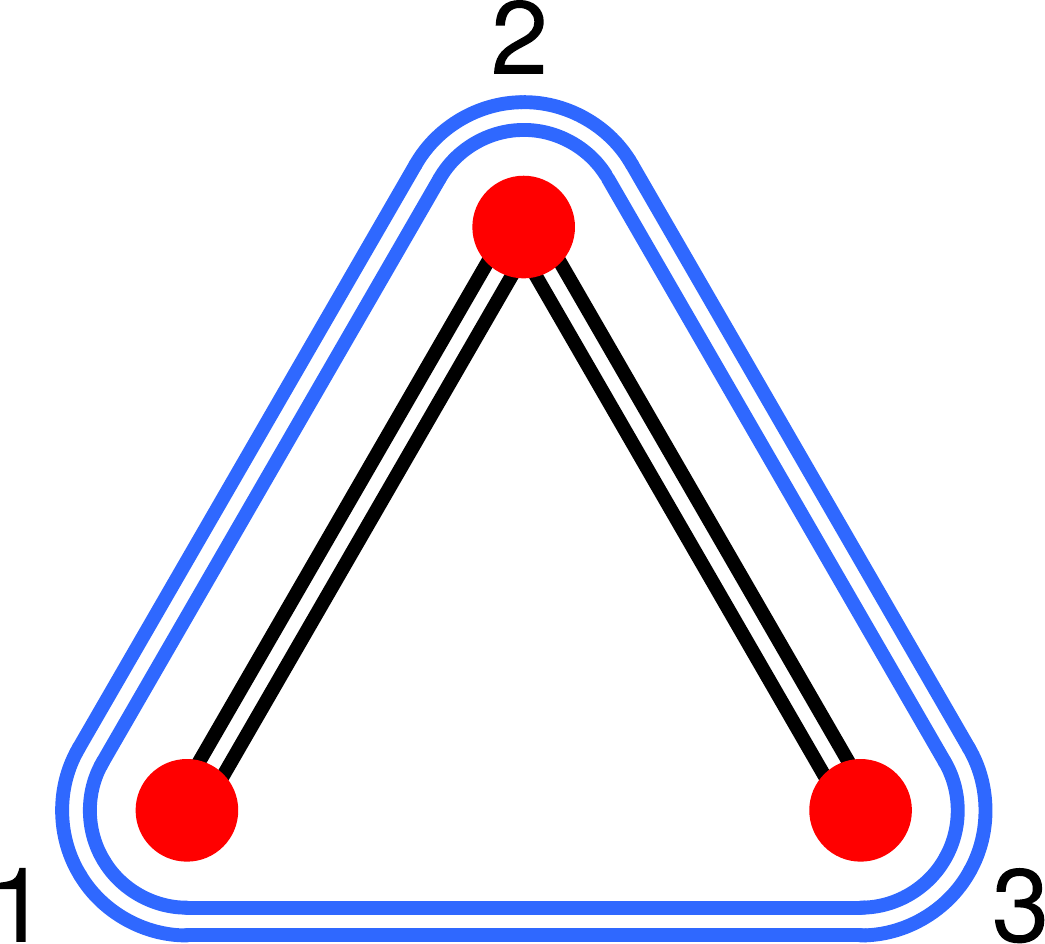} \quad\; \includegraphics[width=1.8cm]{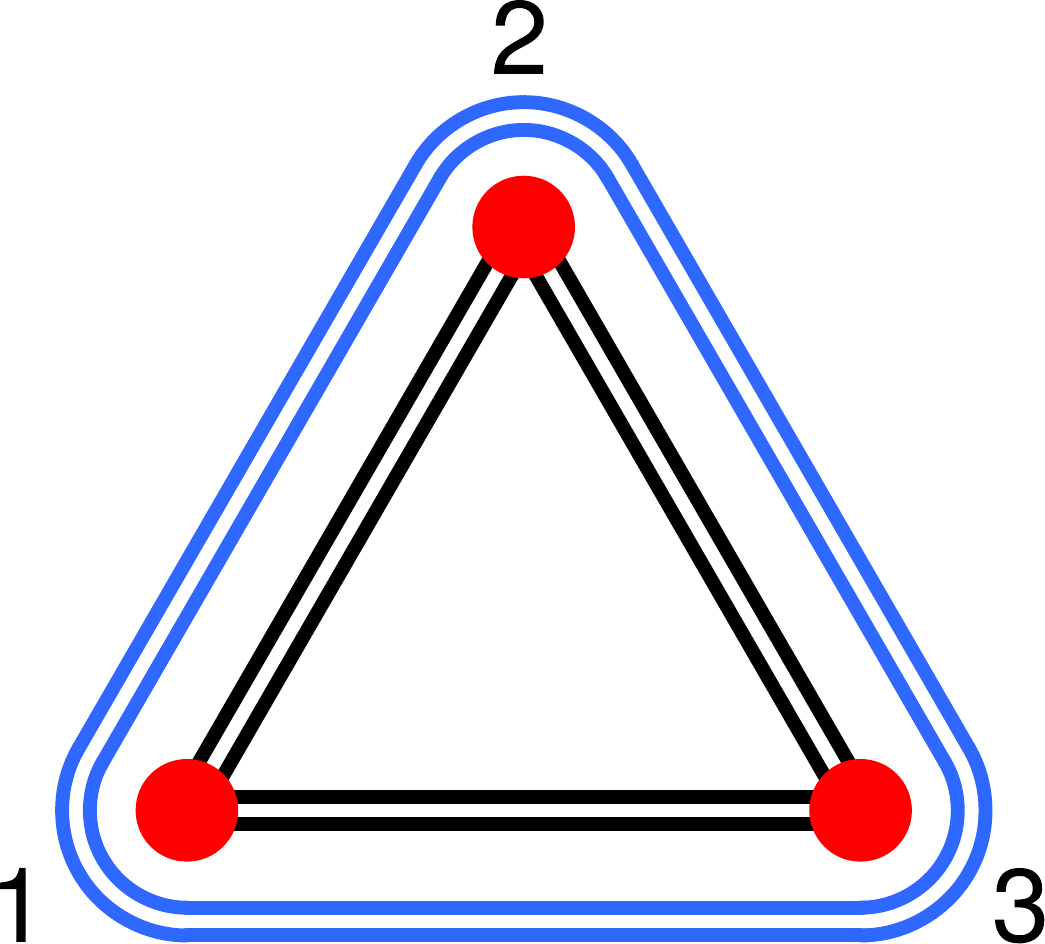}} & 
\parbox[c][1.5cm]{.07\textwidth}{$\sim$ 0.32 \\ $\sim$ 88.7\%} \\ \hline 
{\bf 3} & \parbox[c]{.07\textwidth}{1$|$23\quad 4\\ 2$|$13\quad 2\\ 3$|$12\quad 4} & \parbox[c][2cm]{.8\textwidth}{
\includegraphics[width=1.8cm]{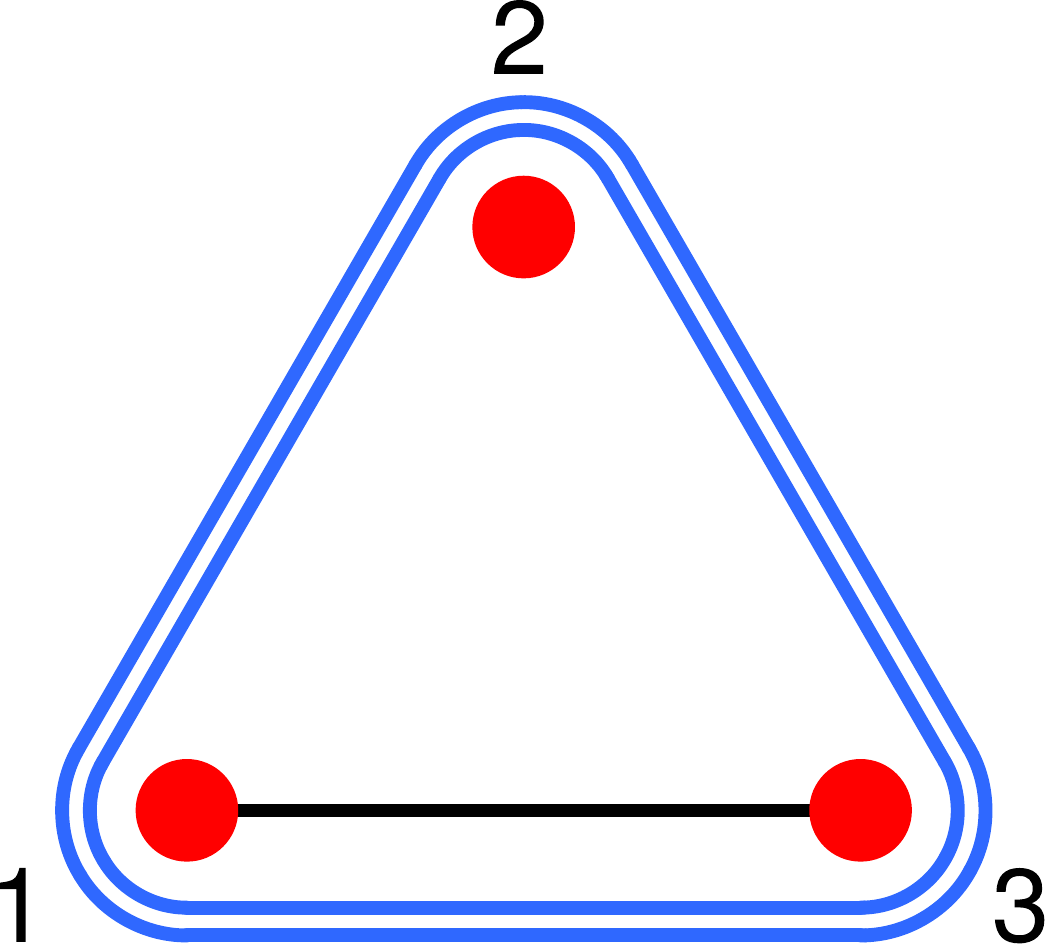} \quad\; \includegraphics[width=1.8cm]{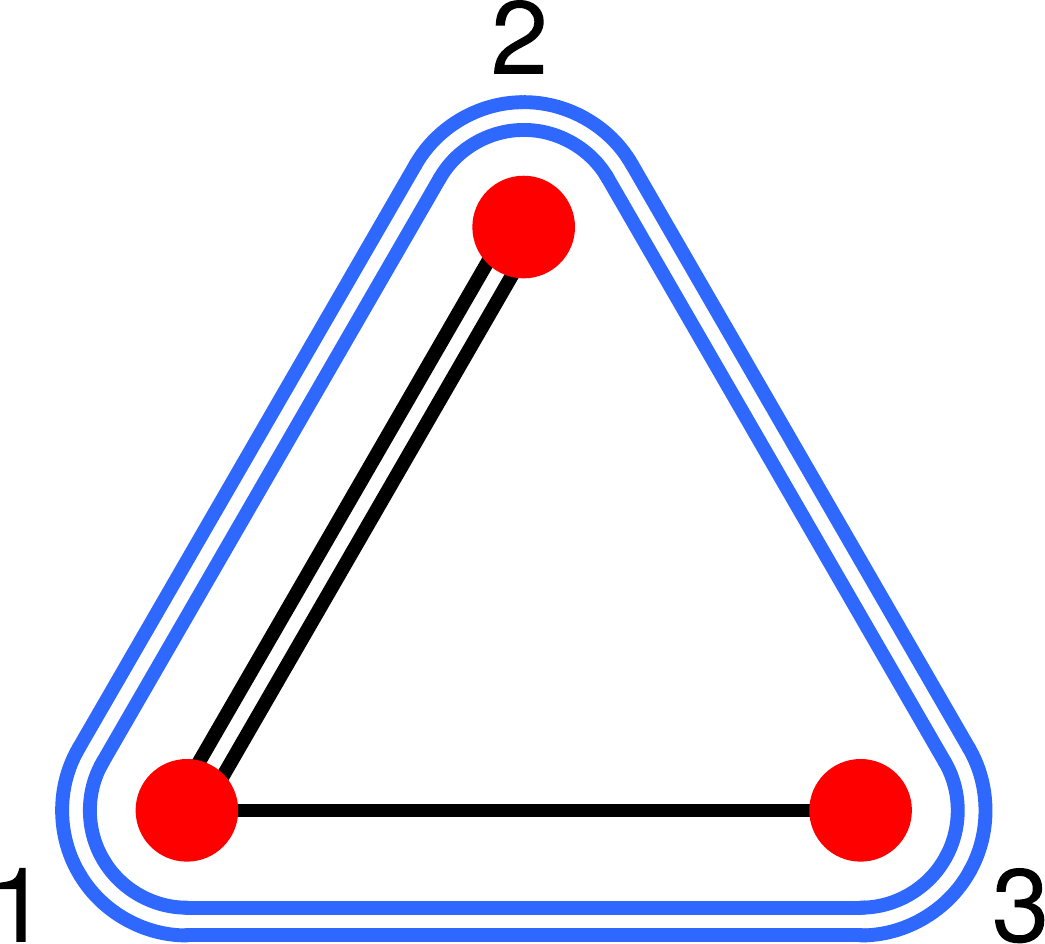} \quad\;
\includegraphics[width=1.8cm]{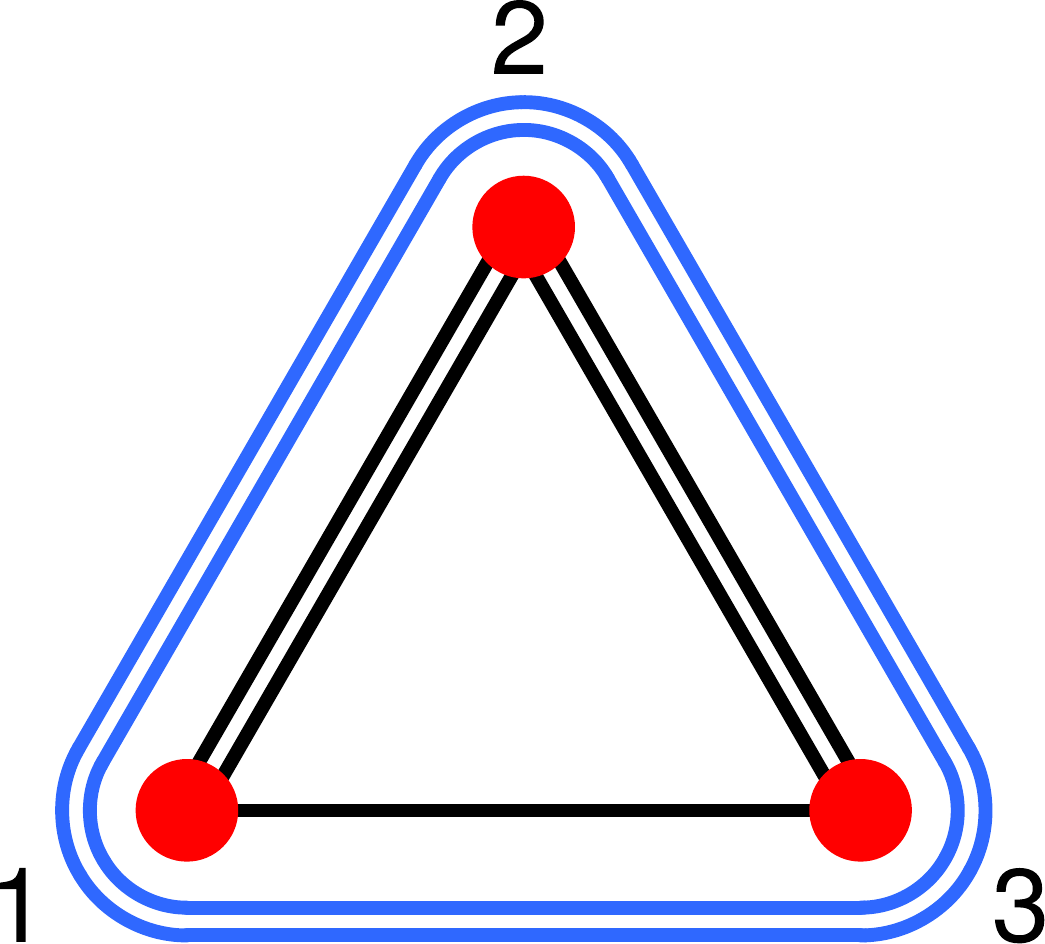}} & \parbox[c][1.5cm]{.07\textwidth}{0.75 \\ $\sim$ 86.1\%}  \\ \hline

{\bf 4} & \parbox[c]{.07\textwidth}{1$|$23\quad 4\\ 2$|$13\quad 2\\ 3$|$12\quad 4} & \parbox[c][2cm]{.8\textwidth}{
\includegraphics[width=1.8cm]{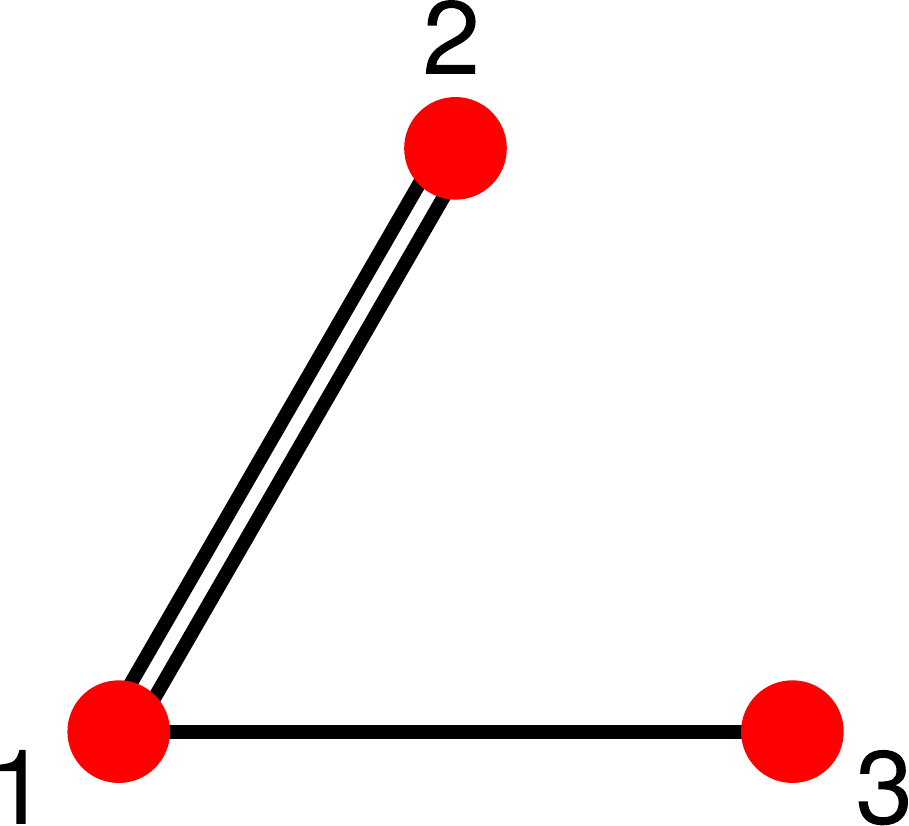} \quad\; \includegraphics[width=1.8cm]{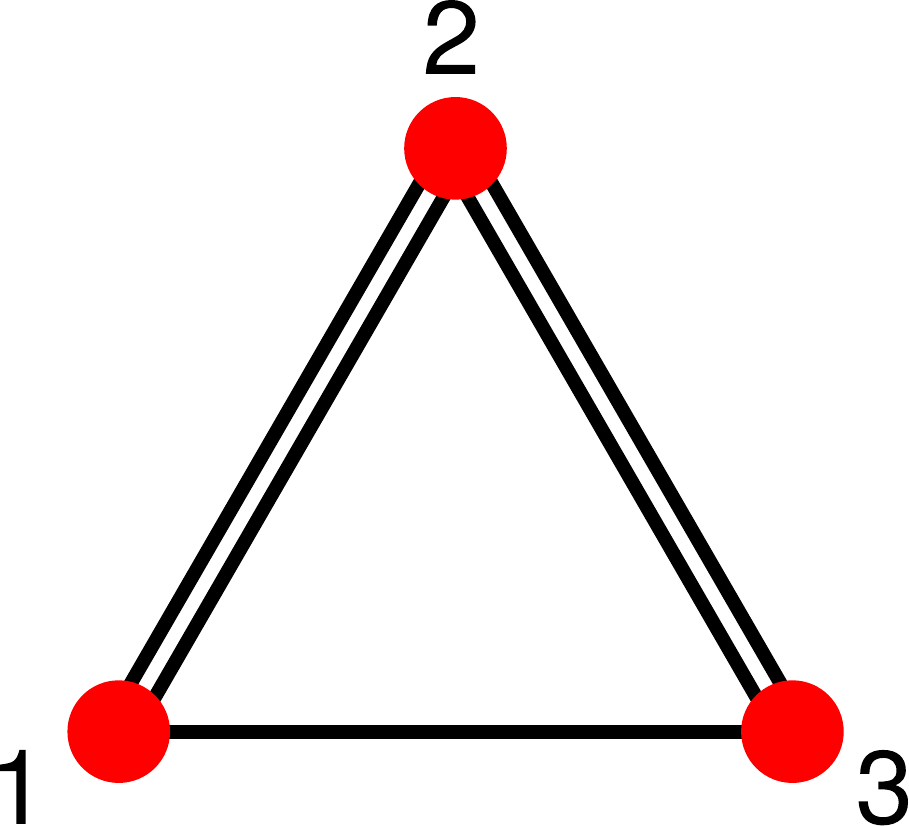}} & \parbox[c][1.5cm]{.07\textwidth}{0.75 \\ $\sim$ 88.8\%}  \\ \hline
\end{tabularx}
\caption{Table of SLOCC classes and LU classes of 3-ququart hypergraph multipartite entangled states.}
\label{tab:d4}
\end{table*}
\end{center} 
}
We now consider the special cases of a tripartite system with prime 
dimension $3$ and a tripartite system with the smallest non-prime 
dimension $4$ as examples. 

\subsection{Classification of $3\otimes 3\otimes 3$}

In the case of a tripartite system of qutrits, there is only one SLOCC equivalence
class of hypergraph states and two  LU equivalence classes: the GHZ state and the 
$3$-elementary hypergraph state. These two states are inequivalent by LU for 
presenting different values of geometric measure of entanglement and white noise tolerance in Table (\ref{tab:d3}). 

This classes can be derived as follows: Let us first consider the GHZ state.
{From} Appendix B it follows that the GHZ state can be converted to the graph 
state represented by the local complementation of the GHZ graph via local 
symplectic unitaries. {From} Proposition \ref{propclifford} we see that a
hyperedge of arbitrary multiplicity can be converted to an hyperedge of 
any other multiplicity via local symplectic permutations. Thus, the 
tripartite GHZ state is equivalent to any other tripartite graph 
state via local symplectic unitaries. 

If we consider the elementary hypergraph state, a $3$-hyperdege can be 
converted to another $3$-hyperedge of arbitrary multiplicity via symplectic 
permutations. In addition, the $3$-elementary hypergraph is equivalent to any other 
$3$-hypergraph, since edges ($2$-hyperedges) of arbitrary multiplicities 
can be created via repeated application of the $X^{\dagger}$ gate in a 
neighbouring qutrit. 

Finally, in order to show the SLOCC equivalence, local invertible operations 
connecting these two LU subclasses can be achieved by applying $A_1$ to one 
of the qutrits of the graph state and $A_{2,3}$ to the other two, where
\begin{eqnarray}
A_{1} = \frac{1}{4\sqrt{3}}\left(\begin{array}{c c c}
			{-2\sqrt{3}-2i}&{4i}&{4i} \\
			{-4\sqrt{3}+4i}&{\sqrt{3}+i}&{-5\sqrt{3}+7i} \\
			{-6\sqrt{3}-2i}&{-\sqrt{3}-5i}&{-\sqrt{3}+7i}
			\end{array}\right)
\end{eqnarray}
and
\begin{eqnarray}
A_{2,3} = \frac{1}{3}\left(\begin{array}{c c c}
			{e^{i2\pi/3}}&{1}&{1} \\
			{\sqrt{3}e^{i\pi/6}}&{\sqrt{3}e^{i\pi/6}}&{\sqrt{3}e^{i\pi/6}} \\
			{e^{i2\pi/3}}&{e^{i2\pi/3}}&{\frac{5-\sqrt{3}i}{2}}
			\end{array}\right).
\end{eqnarray}
This local operations were found with the help of the tool\# 3 (numeric optimization program described in the previous section), which gives in this case full product basis for right subspaces of all states from Table~\ref{tab:d3}.  

\subsection{Classification of $4\otimes 4\otimes 4$}

In the case of a tripartite system of ququarts, there are five SLOCC and 
six LU equivalence classes of hypergraph states. All possible states with respect to permutations and equivalence of edge multiplicities (local Clifford permutation $S$ 
converts the multiplicity of the $3$-hyperedge from $1$ to $3$ (since $3=3^{-1}$ modulo $4$, 
see Proposition \ref{propclifford}), see also Fig.~\ref{fig:Uni}b.), are shown in the Table~\ref{tab:d4} and the interconversion between representatives within the same class are explained in detail in what follows.

\subsubsection{Class $1$}

Class $1$ contains hypergraph states with at least two edges of multiplicity $1$ and with either no hyperedges, or with hyperedge of multiplicity $2$. All these states belong to the same LU-equivalence class.

LU-equivalence among first three state of class $1$ (see Table~\ref{tab:d4}) is governed by  
standard local complementation operations, which can be used to create a new edge of multiplicity $1$ in the neighborhood of qudit $2$, while applying these operations once more generates an edge of multiplicity $2$ in the neighborhood of qudit $2$ (see Appendix B for more details). The same local complementation is responsible for LU equivalence among three last states the first LU class.

To prove LU equivalence between these two subgroups of states (with no hyperedge and with $2$-hyperedge) we find with the Tool\# 3 explicit form of their MEBs, which appear to consist of product vectors. It can be shown then that local transformation between these states is unitary. Here we present such local unitary for transformation from the Figure~\ref{fig:Uni}e 
\begin{eqnarray}
U_{1,2,3} = \frac{1}{2}\left(\begin{array}{c c c c}
			{1+i}&{0}&{1-i}&{0} \\
			{0}&{0}&{0}&{-2} \\
			{1-i}&{0}&{1+i}&{0} \\
			{0}&{-2}&{0}&{0} 
			\end{array}\right).
\end{eqnarray}

\subsubsection{Class $1'$}
Class $1'$ contains all hypergraph states which have $3$-hyperedge of multiplicity $1$. 
LU equivalence of the states within this class is governed by the unitary $(X^{\dagger})^m$, 
which, when allied to some qudit, generates edges of multiplicity $m$ on the neighbourhood 
of the qudit (see Lemma \ref{lemmacommutation}). 

\subsubsection{Class $2$}

Class $2$ consists of two LU equivalence classes. 
The representative of the first LU-equivalence class are the graph 
states composed of two and three edges of multiplicity $2$, though the representatives of the second LU class are the hypergraph state with a $3$-hyperedge of multiplicity $2$ with possible edges of multiplicity $2$.

We can perform some form of ``local complementation"\, between two states from the first LU class by applying the following unitaries in the basis $\{|p_0\rangle,|p_1\rangle,|p_2\rangle,|p_3\rangle\}$:
\begin{eqnarray*}
U_{1,3} = \frac{1}{\sqrt{2}}\left(\begin{array}{c c c c}
			{1}&{0}&{i}&{0} \\
			{0}&{\sqrt{2}}&{0}&{0} \\
			{-i}&{0}&{-1}&{0} \\
			{0}&{0}&{0}&{\sqrt{2}} 
			\end{array}\right); \ \ U_2 = \left(\begin{array}{c c c c}
			{i}&{0}&{0}&{0} \\
			{0}&{1}&{0}&{0} \\
			{0}&{0}&{1}&{0} \\
			{0}&{0}&{0}&{1} 
			\end{array}\right).
\end{eqnarray*}

Applying the local $(X^{\dagger})^m$ unitary to some qudit of the states from the second LU class generates edges of multiplicity $2m$ (i.e., $0$ or $2$) on the neighbourhood of that qudit. 

Using the Tool \#3 one can find the local operation corresponding to SLOCC equivalence between these LU classes. For the representatives shown on the Fig.~\ref{fig:Uni}d the corresponding LO is
\begin{eqnarray}
A_{1,2,3} = \frac{1}{2}\left(\begin{array}{c c c c}
			{-i(1+\sqrt[3]{4})}&{0}&{(1-\sqrt[3]{4})}&{0} \\
			{0}&{2}&{0}&{0} \\
			{i}&{0}&{-1}&{0} \\
			{0}&{0}&{0}&{2} 
			\end{array}\right).
\end{eqnarray}
One can easily check that $A_{1,2,3}$ is invertible, but not unitary. To show that there is no local unitary transformation possible, one can look at the entanglement measures for these LU classes (see Table~\ref{tab:d4}).

\begin{figure}[t]
\begin{center}
\begin{tabular}{c c c c}
\parbox[c]{0.1cm}{\vspace{-3cm}a)} & \includegraphics[width=1.8cm]{s123.pdf} & \parbox[c]{1.1cm}{\vspace{-2cm}\Large$\xrightarrow{(X^{\dagger}_2)^2}$} & \includegraphics[width=1.8cm]{s123d13.pdf}\\[.4cm]
\parbox[c]{0.1cm}{\vspace{-3cm}b)} & \includegraphics[width=1.8cm]{s123.pdf} & \parbox[c]{1.1cm}{\vspace{-2cm}\Large$\xrightarrow{\;\;\, S_3 \;\;\,}$} & \includegraphics[width=1.8cm]{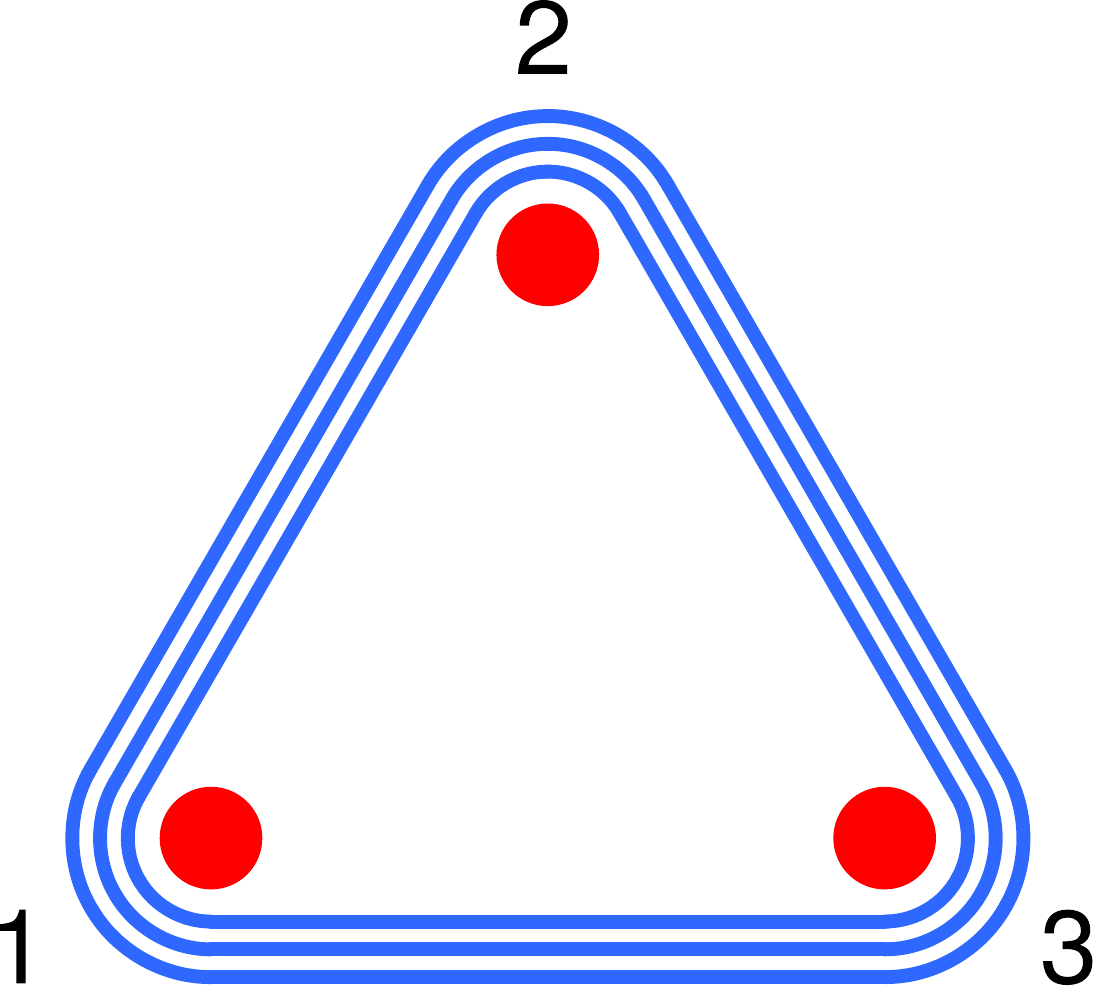}\\[.4cm]
\parbox[c]{0.1cm}{\vspace{-3cm}c)} & \includegraphics[width=1.8cm]{d123.pdf} & \parbox[c]{1.1cm}{\vspace{-2cm}\Large$\xrightarrow{\;\; X^{\dagger}_2 \,\;\;}$} & \includegraphics[width=1.8cm]{d123d13.pdf}\\[.4cm]
\parbox[c]{0.1cm}{\vspace{-3cm}d)} & \includegraphics[width=1.8cm]{d12d23d13.pdf} & \parbox[c]{1.1cm}{\vspace{-2cm}\Large$\xrightarrow{\,A^{\otimes 3}_{1,2,3}\,}$} & \includegraphics[width=1.8cm]{d123.pdf}\\[.4cm]
\parbox[c]{0.1cm}{\vspace{-3cm}e)} & \includegraphics[width=1.8cm]{s12s23s13.pdf} & \parbox[c]{1.1cm}{\vspace{-2cm}\Large$\xrightarrow{\,U^{\otimes 3}_{1,2,3}\,}$} & \includegraphics[width=1.8cm]{d123s12s23s13.pdf}
\end{tabular}
\vspace{-16pt}
\end{center}
\caption{SLOCC equivalence among representatives of the same class.}
\label{fig:Uni}
\end{figure}

\subsubsection{Class $3$}

The representatives of class $3$ are the elementary hypergraph 
states with a $3$-hyperedge of multiplicity $2$, one edge of 
multiplicity $1$ and possible edges of multiplicity $2$. These three states are in the same LU class
and the local transformation between them is $(X^{\dagger})$ applied on one of the qudits.

\subsubsection{Class $4$}

The representatives of class $4$ are graph states composed 
of one or two edges of multiplicity $2$ and one edge of multiplicity $1$. 
Applying the local unitaries $U_1=(S(1,1,0))^4$, $U_2= S(1,0,1)$, $U_3=S(1,1,0)$ to the first state creates an edge of multiplicity $2$ between qudits $2$ and $3$.

\subsubsection{SLOCC-inequivalence of Classes 1-4,1'}

To prove the SLOCC-inequivalence of states of most of the classes it is sufficient to look at their Schmidt ranks for each bipartition (see Table~\ref{tab:d4}). Exceptions are pairs of classes $1,1'$ and $3,4$.
 
To prove that 
there is no SLOCC transformation between states from classes $3$ and $4$ let us consider the vectors from the right subspace for bipartition $2|13$ for two representatives from each class.
From the Schmidt decomposition of the state from class $3$ one finds directly 
that there is at least one product vectors in the right subspace of parties $13$, 
i.e. MEB comtains at least one product vector. For the state from class $4$ we can prove that in the corresponding subspace there are no product vectors in the MEB using the Tool \#1.
Thus from Lemma~\ref{l:slocc-ineq} it follows that these states belong to different
SLOCC classes. 

Unfortunately, we were not able to prove SLOCC-inequivalence of states from classes $1$ and $1'$ using the tools presented above. In fact, using the Tool \#3 we found that the states from class $1$ have a full product basis in their right subspace for each bipartition and the Tool \#2 showed that for the states from class $1'$ there are states with PPT and full rank in their right subspace. However, the optimal value $\epsilon$ of the SDP of the Tool \#2 for the states in class $1'$ was in order of $10^{-5}$. Besides, the direct numerical search for SLOCC transformation bringing a states in class $1$ to some state in class $1'$ returned states of fidelity of almost $1$, though the numerical search for SLOCC transformation in the opposite direction, from a state in $1'$ to some state in $1$, succeed in returning states of fidelity of only $0.875$. This difference in fidelities of local transformations in different directions is typical for the three-qubit states of GHZ and W classes, which suggests that classes $1$ and $1'$ are inequivalent.           

\section{Conclusions} 

In this work we generalized the class of hypergraph states to systems 
of arbitrary finite dimensions. For the special class of elementary
hypergraph states we obtained the full SLOCC classification in terms 
of the greatest common divisor, which also governs other properties 
such as the ranks of reduced states. For tripartite systems of local 
dimensions $3$ and $4$, we obtained all SLOCC and LU classes by 
developing new theoretical and numerical methods based on the 
original concept of MEBs.

Some open questions are worth to mention. In the multiqubit case,
hypergraph states are a special case of LME states; it would be 
interesting to generalize the class of LME states to arbitrary 
dimensions and see if a similar relation holds. Nonlocal properties of qudit hypergraph
states were not a part of this work and deserve a separate consideration. Finally, possible applications of these
states as a resource for quantum computing should be
investigated.

\section{Acknowledgments}

FESS is thankfull to M. Hofmann for suggesting the problem 
of extending hypergraph states to qudits.
 For discussions and comments, we acknowledge 
A. Asadian, C. Budroni, M. Gachechiladze, M. Huber, M. Grassl and 
T.M. Kraft. 

The work has been supported by the program Science
without Borders from the Brazilian agency CAPES, by 
CNPq, through the program PVE/PDJ (Pesquisador Visitante
Especial/P\'os-Doutorado  Junior),  Project  No.  401230/
2014-7, Grant No. 150733/2015-1, the  DAAD, 
the EU (Marie Curie CIG  293993/ENFOQI), the ERC 
(Consolidator Grant 683107/TempoQ),
the FQXi Fund (Silicon  Valley  Community Foundation) 
and the DFG.

\section{Appendix}

\subsection{Phase-space picture}

Infinite-dimensional systems are often described through position 
and momentum operators $Q$ and $P$ in a phase-space picture. 
Displacements in this quantum phase-space are performed by unitaries 
\begin{eqnarray}
D(q,p) = e^{i(pQ-qP)}
\end{eqnarray}
where $q$ and $p$ are real numbers. These unitaries satisfy 
$D(q,p)D(q',p')=e^{-i(qp'-pq')}D(q',p')D(q,p)$, characterizing 
a faithful representation of the Heisenberg-Weyl Lie group. 
Performing the transformations
\begin{eqnarray}
Q\rightarrow \kappa Q+\lambda P; \\
P\rightarrow \mu Q +\nu P, 
\end{eqnarray}
subjected to the condition $\kappa\nu-\lambda\mu=1$, will bring 
the Heisenberg-Weyl group into itself. In other words, unitaries 
that perform these transformations will generate the normalizer 
of the Heisenberg-Weyl group. This group constitutes the so-called 
symplectic group in continuous variables and is related to important
concepts in quantum optics such as squeezing. 

In finite-dimensional systems  it is possible to give an analogous 
description in terms of a discrete phase-pace, whenever the Hilbert 
space dimension is a power of a prime number \cite{vourdas}. A general 
displacement in this discrete phase-space is then performed by an operator $D(m,n)=\omega^{mn2^{-1}}X^nZ^m$; the set of these displacement operators 
form an unitary representation of the discrete Heisenberg-Weyl 
group through the multiplication rule
\begin{eqnarray*}
D(m,n)D(m',n')=\omega^{(m'n-mn')2^{-1}}D(m+m',n+n')
\end{eqnarray*}
Symplectic transformations in a discrete-phase space act over the 
Pauli operators in the following fashion
\begin{eqnarray}
SXS^{\dagger} &=& \omega^{\kappa\lambda 2^{-1}}X^{\kappa}Z^{\lambda}, \\
SZS^{\dagger} &=& \omega^{\mu\nu 2^{-1}}X^{\mu}Z^{\nu},
\end{eqnarray}
subjected to the condition $\kappa\nu-\lambda\mu=1 \ (mod \ d)$. 
An arbitrary symplectic operator $S(\kappa,\lambda,\mu)$ can be decomposed as  
\begin{eqnarray}
S(\kappa,\lambda,\mu) = S(1,0,\xi_1)S(1,\xi_2,0)S(\xi_3,0,0)
\end{eqnarray}
where the operators in the right-hand side are given in (\ref{clifford}), (\ref{clifford2}), (\ref{clifford3}) and 
\begin{eqnarray}
\xi_1 &=& \mu\kappa(1+\lambda\mu)^{-1}; \\
\xi_2 &=& \mu\kappa^{-1}(1+\lambda\mu); \\
\xi_3 &=& \kappa(1+\lambda\mu)^{-1}
\end{eqnarray}
The actions of gates (\ref{clifford2}) and (\ref{clifford3}) are respectively given by
\begin{eqnarray}
S(1,\xi,0)XS(1,-\xi,0) &=& \omega^{\xi 2^{-1}}XZ^{\xi}; \\ 
S(1,\xi,0)ZS(1,-\xi,0) &=& Z; \\
S(1,0,\xi)ZS(1,0,-\xi) &=& \omega^{-\xi 2^{-1}}ZX^{\xi}; \\ 
S(1,0,\xi)XS(1,0,-\xi) &=& X.
\end{eqnarray}

\subsection{Local complementation of qudit graphs}

The graph operation known as local complementation of a graph $G=(V,E)$ at the vertex $a\in V$ consists of the following mapping:
\begin{eqnarray}
G\rightarrow G' = (V,E\uplus E_{N_a})
\end{eqnarray}
where $E_{N_a}$ are the edges in the neighbourhood of $a$ and $\uplus$ denotes the set operation of symmetric sum, i.e., $A\uplus B = \{A\cup B\}\setminus\{A\cap B\}$. The implementation of such operation for qudit graph states is known in the literature \cite{quditgraph} and is restricted to prime-dimensional systems. Here we give a simpler derivation of this implementation, which is also valid for some special cases in non-prime dimensional systems. Note that we consider here only graphs with edges of multiplicity one, which are equivalent to graphs with edges of multiplicity coprime with the underlying dimension $d$.

From the section on stabilizers of a hypergraph state, we get as a special case that the operators 
\begin{eqnarray}
K_i = X_i\prod_{e\in E^*}Z_{e\setminus\{i\}}=X_iZ_{N_i}, \ \ i\in V
\end{eqnarray}
generate the stabilizer group of the graph state $|G\rangle$ represented by the graph $G=(V,E)$. The graph state $|G\rangle$ is thus the unique $+1$ eigenstate of the operators $K_i$. 

\begin{theorem}

Given a graph state $|G\rangle$ composed of edges with multiplicity one, let $U_a=S_a(1,0,-1)S_{N_a}(1,-1,0)$. Then, 
$U_a|G\rangle=|G'\rangle$, where $G'$ is the local complementation of $G$ at the vertex $a\in V$.
\end{theorem}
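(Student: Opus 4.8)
The plan is to verify that $U_a$ conjugates the stabilizer generators $\{K_i\}$ of $|G\rangle$ into the stabilizer generators $\{K'_i\}$ of $|G'\rangle$, so that $U_a|G\rangle$ is the unique $+1$ eigenstate of the new stabilizer, hence equals $|G'\rangle$ up to a global phase (which one can then absorb by the standard convention that graph states have non-negative real amplitude on $|0\cdots 0\rangle$, or check directly on that component). Concretely, writing $U_a = S_a(1,0,-1)\,S_{N_a}(1,-1,0)$ where $S_{N_a}(1,-1,0) = \prod_{b\in N_a} S_b(1,-1,0)$, I would compute $U_a K_i U_a^\dagger$ for each vertex $i$ using the single-particle conjugation rules collected in Appendix A, namely
\begin{align*}
S(1,-1,0)\,X\,S(1,1,0) &= \omega^{-2^{-1}} X Z^{-1}, & S(1,-1,0)\,Z\,S(1,1,0) &= Z,\\
S(1,0,-1)\,Z\,S(1,0,1) &= \omega^{2^{-1}} Z X^{-1}, & S(1,0,-1)\,X\,S(1,0,1) &= X.
\end{align*}
Since these are local gates, the conjugation acts factor-by-factor on the tensor-product operators $K_i = X_i Z_{N_i}$.

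First I would split into cases according to how $i$ sits relative to $a$. \textbf{Case $i=a$:} only $S_a(1,0,-1)$ acts on the $a$-slot and only $S_{N_a}(1,-1,0)$ acts on the neighbours, so $U_a X_a Z_{N_a} U_a^\dagger = X_a \prod_{b\in N_a} Z_b = K_a$ (the $Z$'s on the neighbours are untouched, and $X_a$ is untouched by $S_a(1,0,-1)$); thus $K_a$ is preserved, consistent with $a$ having the same neighbourhood in $G$ and $G'$. \textbf{Case $i=b\in N_a$:} here $K_b = X_b Z_a Z_{N_b\setminus\{a\}}$. The gate $S_b(1,-1,0)$ sends $X_b \mapsto \omega^{-2^{-1}} X_b Z_b^{-1}$ and fixes all $Z$'s on $b$; the gate $S_a(1,0,-1)$ sends $Z_a \mapsto \omega^{2^{-1}} Z_a X_a^{-1}$; and the gates $S_c(1,-1,0)$ for $c\in N_b\cap N_a$, $c\neq b$, fix the corresponding $Z_c$. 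Collecting the phases $\omega^{-2^{-1}}\omega^{2^{-1}}=1$, one obtains $U_a K_b U_a^\dagger = X_b Z_a X_a^{-1} Z_b^{-1} Z_{N_b\setminus\{a\}}$. Now one must recognize this as a stabilizer element of $G'$: modulo the stabilizer, $X_a^{-1}$ can be replaced using $K_a^{-1} = X_a^{-1} Z_{N_a}^{-1}$, which exactly supplies the factor $Z_b^{-1}$ (cancelling the one above) together with $Z_c^{-1}$ for every other $c\in N_a$ — i.e. it flips precisely the edges from $b$ to the rest of $N_a$, which is the edge set $E_{N_a}$ incident to $b$. So $U_a K_b U_a^\dagger$ equals $K'_b$ up to multiplication by the stabilizer, which is exactly what local complementation at $a$ does to the neighbourhood of $b$. \textbf{Case $i\notin N_a\cup\{a\}$:} then $K_i$ touches neither $a$ nor any neighbour in a way that is moved — $X_i$ is untouched, and any $Z_b$ with $b\in N_i\cap N_a$ is fixed by $S_b(1,-1,0)$, while $Z_a$ does not appear since $i\notin N_a$ — so $K_i$ is preserved, as it should be since edges not incident to $N_a$ are unchanged.

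The main obstacle I anticipate is bookkeeping in Case $i=b\in N_a$: one has to be careful that (i) the stray $\omega$-phases from the two quadratic-phase gates cancel exactly (this is why the specific combination $S_a(1,0,-1)$ with $S_{N_a}(1,-1,0)$, rather than the same sign, is needed), and (ii) the resulting operator, after clearing the unwanted $X_a^{-1}$ by the allowed freedom of multiplying by powers of $K_a$, reproduces precisely the symmetric-difference update $E\mapsto E\uplus E_{N_a}$ on the edges among $N_a$ — including correctly handling the multiplicity-one restriction, since powers of $Z$ on an edge beyond $1$ are not edges of a simple graph, and one should note that $Z_b^{-1}$ on the $b$-slot, i.e. $Z_b^{d-1}$, corresponds to a loop of multiplicity $d-1$ on $b$ and must be argued away or absorbed (this is where the hypothesis that we work with edges of multiplicity coprime to $d$, and the remark that such loops are local $Z$-corrections, enters). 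Once these two points are checked, the conclusion $U_a|G\rangle = |G'\rangle$ follows immediately because a maximal abelian stabilizer determines its joint $+1$ eigenvector uniquely.
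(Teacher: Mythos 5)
Your proof follows the same route as the paper's: conjugate each stabilizer generator $K_i$ by $U_a$, note that $K_i$ is fixed for $i=a$ and for $i\notin N_a$, and that for $b\in N_a$ the image is $K'_b$ times a power of $K_a$ (the paper states precisely $U_aK_bU_a^{\dagger}=K_a^{-1}K'_b$), so that $U_a\mathcal{S}U_a^{\dagger}=\mathcal{S}'$ and the unique joint $+1$ eigenstate is mapped accordingly. Your version is in fact more explicit than the paper's (which asserts the conjugation identities without computation); the only slip is a sign in the $b\in N_a$ bookkeeping --- writing $X_a^{-1}=K_a^{-1}Z_{N_a}$ means the compensating factors are $Z_b^{+1}$ (which cancels the stray $Z_b^{-1}$) and $Z_c^{+1}$ for $c\in N_a\setminus\{b\}$, not $Z_c^{-1}$, so the new edges among $N_a$ appear with multiplicity $+1$.
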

\begin{proof} Let $\{K_i\}_{i\in V}$ denote the set of stabilizer operators of $G$ and let $\mathcal{S}$ be the stabilizer group generated by them. 
It is clear that $U_aK_iU_a^{\dagger}=K_i$ if $i$ is not in $N_a$, 
while for $c\in N_a$ we have $U_aK_cU_a^{\dagger}=K_a^{-1}K'_c$, where $K'_c$ is the stabilizer operator 
for the vertex $c$ of $G'$. We have then that $U_a\mathcal{S}U_a^{\dagger}=\mathcal{S}'$, where $\mathcal{S}'$ is the group generated by the stabilizer operators of $G'$.  \end{proof}

Another way of proving this result is to consider the action of $S_a(1,0,-1)$, which is simply
\begin{eqnarray}
S_a(1,0,-1)|G\rangle= S_{N_a}(1,1,0)|G'\rangle
\end{eqnarray}
and thus applying $S_{N_a}(1,-1,0)$ on the state above will map $G$ into its local complementation $G'$.  

\subsection{Proofs of Proposition~\ref{propclifford}}
\begin{proof}
We define $k = \alpha g$ and $k'= \beta g$ and consider first a single particle
gate $Z$. Looking at the action of $S$, $S^\dagger$ and $Z$ on a basis vector 
$\ket{x}$ one sees that a corresponding $S$ can be found, iff we can find an 
$\xi$ such that 
\begin{equation}
\frac{k x}{\xi} = k' x  \mod d 
\end{equation}
holds for any $x$. Dividing by $g$, this is equivalent to 
$
\alpha x  = \beta x \xi \mod (d/g).
$
The value of $\xi$ is found by considering first 
$\xi' = \alpha / \beta \mod (d/g)$. This is well defined, 
since $\beta$ and $d/g$ are coprime. It remains to construct 
a final $\xi$ that is coprime with $d$. The $\xi'$ fulfills
$\beta \xi' = \alpha + y  (d/g)$ for some $y$. It follows that 
$\xi'$ does not have any prime factors already contained in 
$(d/g)$ since $\alpha$ and $d/g$ are coprime, but $\xi'$ may
still have prime factors present in $g$ (but absent in $d/g$). 
If this is the case, we choose $\xi = \xi' + d/g$. This is allowed, 
since $\xi'$ was defined $\mod (d/g)$. Now, $\xi$ has no prime
factors contained in $g$ (but absent in $d/g$), and still no prime
factors contained in $d/g$. So, it is coprime to $d$, and $S$ is 
unitary. 

Finally, if a multiparticle gate $Z_e$ is considered, the proof is
the same, starting from the representation in Eq.~(\ref{phasegate}).
\end{proof}
\noindent Below is an alternative proof of Proposition~\ref{propclifford}.
\begin{proof} For $gcd(d,k)=1$, i.e., $k$ is coprime with $d$, there exists $k^{-1}$ such that $kk^{-1}=1$; this multiplicative inverse is given by $k^{-1}=k^{\lambda(d)-1}$, where  $\lambda(d)$ is the Carmichael function \cite{carmichael}. 
The function $f:\mathbb{Z}_d\rightarrow \mathbb{Z}_d$ given by $f(q)=qk$ is injective \cite{footnote}, since $qk=q'k$ iff $q=q'$. But it is also surjective since it is a function from $\mathbb{Z}_d$ to itself. 
Hence, $f$ is a bijection, the unitary $S_{k}=\sum_{q=0}^{d-1}|q\rangle\langle qk|$ is well-defined and $S_{k^{-1}}ZS^{\dagger}_{k^{-1}}=Z^k$; notice that this corresponds to the Clifford gate (\ref{clifford}). 
Defining the Clifford operator $S=S_{k'^{-1}}S^{\dagger}_{k^{-1}}$, it follows that $SZ^kS^{\dagger}=S_{k'^{-1}}(S^{\dagger}_{k^{-1}}Z^kS_{k^{-1}})S^{\dagger}_{k'^{-1}}=S_{k'^{-1}}ZS^{\dagger}_{k'^{-1}}=Z^{k'}$.

If $gcd(d,k)=g>1$, then there exists $c$ coprime with $d$ such that $k=gc$. 
In order to prove this, let us take the prime decomposition of $d$, i.e., 
$d=p_1^{n_1}p_2^{n_2}\ldots p_N^{n_N}$. 
Let us consider the decomposition \cite{niven,hardy} $\mathbb{Z}_d\approx\mathbb{Z}_{d_1}\times\mathbb{Z}_{d_2}\times\ldots\mathbb{Z}_{d_N}$, 
where $d_i=p_i^{n_i}$.
Under this decomposition, any $m\in\mathbb{Z}_d$ is expressed as 
$m=(m_1,m_2,\ldots,m_N)$, where $m_i=m(mod \ d_i)$.
Given $c$ coprime with $d$, it is straightforward that $c=(c_1,c_2,\ldots,c_N)$, where each component 
$c_i$ is coprime with $d_i$; indeed, these values $c$ are formed by all possible different combinations of the component values $c_i$. 
The only values $k=g\alpha$ which are not already in the form $k=gc$ can only happen for $g=p_j^{n_j}$, 
for some fixed $j$ and $\alpha=p_j^n$, for fixed $1\leq n\leq n_j$. 
The decomposition of $k=g\alpha$ is simply $k=(k_1,k_2,\ldots,k_j=0,\ldots,k_N)$ and the non-zero values $k_i$, $i\neq j$, are coprime with $d_i$. 
Thus, the values $c$ coprime with $d$ for which $c_i=k_i$ for all $i\neq j$ yields $gc=g\alpha=k$. 

Let $gcd(d,k)=gcd(d,k')=g$; then there exists $c$, $c'$ coprime with $d$ such that $k=gc$ and $k'=gc'$, by the discussion above.
Let $S$ be the Clifford operator such that $SZ^cS^{\dagger}=Z^{c'}$. 
Then $SZ^kS^{\dagger}=S(Z^{c})^gS^{\dagger}=(Z^{c'})^g=Z^{k'}$. \end{proof}


\begin{thebibliography}{99}

\bibitem{multientang} 
R. Horodecki, P. Horodecki, M. Horodecki and K. Horodecki, 
Rev. Mod. Phys. {\bf 81},  865 (2009).

\bibitem{gtreview}
O.~G\"uhne and G.~T\'oth,
{Phys. Rep. \textbf{474}, 1 (2009)}.

\bibitem{graphstates} 
M. Hein, J. Eisert, and H.J. Briegel, Phys. Rev. A {\bf 69}, 062311 (2004); 
M. Hein, W. D\"ur, J. Eisert, R. Raussendorf, M. van den Nest and H. -J. Briegel, in {\it Quantum Computers, Algorithms and Chaos},  
edited  by  G.  Casati,   D.L.
Shepelyansky,  P. Zoller,  and G. Benenti (IOS Press,  Amsterdam,  2006), quant-ph/0602096. 

\bibitem{1wayqc} R. Raussendorf, D. E. Browne, and H. J. Briegel, Phys. Rev. A {\bf 68}, 022312 (2003).

\bibitem{qecc} D. Gottesman, {\it Stabilizer Codes and Quantum Error Correction}, PhD thesis, CalTech, Pasadena (1997);
D. Schlingemann,
Quantum Inf. Comp. {\bf 2}, 307 (2002).

\bibitem{bellgraph}   D. Greenberger, M. Horne,  A. Shimony,  and A. Zeilinger,  Am. J. Phys. {\bf 58},  1131 (1990);
N. D. Mermin, Rev. Mod. Phys. {\bf 65}, 803 (1993);
D.P. DiVincenzo and A. Peres, Phys. Rev. A {\bf 55}, 4089 (1997);
O. G\"uhne and A. Cabello, Phys. Rev. A {\bf 77}, 032108 (2008).

\bibitem{ewgraphs} B. Jungnitsch, T. Moroder and O. G\"uhne, Phys. Rev. A {\bf 84}, 032310 (2011).

\bibitem{toriccode} A. Kitaev, Ann. Phys. {\bf 321}, 2 (2006). 

\bibitem{hypergraphpapers} 
R. Qu, J. Wang, Z. Li, and Y. Bao, Phys. Rev. A {\bf 87}, 022311 (2013);
M. Rossi, M. Huber, D. Bru{\ss}, and C. Macchiavello, New J. Phys. {\bf 15}, 113022 (2013).

\bibitem{yoshida} B. Yoshida, 
Phys. Rev. B {\bf 93}, 155131 (2016);
J. Miller and A. Miyake, npj Quantum Information {\bf 2}, 16036 (2016).

\bibitem{hyper1} 
O. G\"uhne, M. Cuquet, F. E. S. Steinhoff, T. Moroder, M. Rossi, D. Bru{\ss}, B. Kraus, and C. Macchiavello J. Phys. A: Math. Theor. {\bf 47}, 335303 (2014).

\bibitem{hyper2} 
M. Gachechiladze, C. Budroni, and O. G\"uhne, Phys. Rev. Lett. {\bf 116}, 070401 (2016).

\bibitem{lme} 
T. Carle, B. Kraus, W. D\"ur, and J. I. de Vicente, Phys. Rev. A {\bf 87}, 012328 (2013);
C. Kruszynska and B. Kraus, Phys. Rev. A {\bf 79}, 052304 (2009).

\bibitem{fingerprint} 
C.E. Mora, H.J. Briegel, and B. Kraus, Int. J. Quant. Inf. {\bf 5}, 729 (2007).

\bibitem{vourdas} A. Vourdas, Rep. Prog. Phys. {\bf 67}, 267 (2004).

\bibitem{quditbellgraph}
W. Tang, S. Yu, and C.H. Oh, Phys. Rev. Lett. {\bf 110}, 100403 (2013).

\bibitem{quditgraph}
D. Schlingemann, 
Quantum Inf. Comp. {\bf 4}, 287 (2004);
E. Hostens, J. Dehaene, and B. de Moor, Phys. Rev. A {\bf 71}, 042315 (2005); 
M. Bahramgiri, and S. Beigi, arxiv:quant-ph/0610267;
S. Y. Looi, L. Yu, V. Gheorghiu, and R. B. Griffiths, Phys. Rev. A {\bf 78}, 042303 (2008);
A. Keet, B. Fortescue, D. Markham, and B. C. Sanders, Phys. Rev. A {\bf 82}, 062315 (2010). 

\bibitem{cliff1} M. A. Nielsen, M. J. Bremner, J. L. Dodd, A. M. Childs, and C. M. Dawson
Phys. Rev. A {\bf 66}, 022317 (2002).

\bibitem{cliff2} D. M. Appleby, J. Math. Phys. {\bf 46}, 052107 (2005).

\bibitem{GeomEnt} 
T.-C. Wei and P. M. Goldbart, 
Phys. Rev. A {\bf 68}, 042307 (2003).

\bibitem{PPTMixer}
B.\ Jungnitsch, T.\ Moroder and O.\ G\"uhne,
Phys. Rev. Lett. {\bf 106}, 190502 (2011).

\bibitem{SDP} L. Vandenberghe and S. Boyd, SIAM Review 38, 49 (1996).

\bibitem{Lamata} L. Lamata  J. Leon, D. Salgado, and E. Solano,
Phys. Rev. A {\bf 74}, 052336 (2006).

\bibitem{hardy} G. H. Hardy, E. M. Wright, {\it An introduction to the theory of numbers}, Oxford: Clarendon Press (1938).

\bibitem{niven} I. Niven, H. S. Zuckerman, and H. L. Montgomery, {\it An introduction to the theory of numbers}, Wiley (1991).

\bibitem{cps} W. Son, C. Brukner, and M. S. Kim,
Phys. Rev. Lett. {\bf 97}, 110401 (2006);
A. Asadian, C. Budroni, F. E. S. Steinhoff, P. Rabl, and O. G\"uhne,  
Phys. Rev. Lett. {\bf 114}, 250403 (2015). 


\bibitem{kraus} 
B. Kraus,
Phys. Rev. Lett. {\bf 104}, 020504 (2010);
B. Liu, J.-L. Li, X. Li, and C.-F. Qiao,
Phys. Rev. Lett. {\bf 108}, 050501 (2012);
T.-G. Zhang, M.-J. Zhao, M. Li, S.-M. Fei, and X. Li-Jost,
Phys. Rev. A {\bf 88}, 042304 (2013).

\bibitem{ppt}
A. Peres, 
Phys. Rev. Lett. {\bf 77} 1413 (1996)
M. Horodecki, P. Horodecki, and R. Horodecki,
Phys. Lett. A {\bf 223}, (1996).

\bibitem{range}
P. Horodecki,
Phys. Lett. A {\bf 232} 333 (1997).

\bibitem{carmichael} R. D. Carmichael, Am. Math. Mont. {\bf 19} 22 (1917);
P. Erd{\H o}s, C. Pomerance, and E. Schmutz, Acta Arithmetica {\bf 58}, 363 (1991).

\bibitem{footnote}
A more elementary way of proving is noticing that $qk=q'k$ iff $q=q'$ $mod(d/gcd(d,k))$.

\end{thebibliography}
\end{document}